\documentclass[11pt]{article}

\usepackage{isabelle,isabellesym}
\usepackage{amsmath}
\usepackage{amssymb}
\usepackage{amsthm}
\usepackage{booktabs}
\usepackage{authblk}
\usepackage{graphicx}
\usepackage{proof}
\usepackage[top=1in,bottom=1in,left=1.5in,right=1.5in]{geometry}

\usepackage{microtype}

\usepackage[ruled]{algorithm2e} 

\SetAlFnt{\small}
\SetAlCapFnt{\small}
\SetAlCapNameFnt{\small}
\SetAlCapHSkip{0pt}
\IncMargin{-\parindent}

\newcommand*{\prob}[1]{\boldsymbol{\mathcal{P}}_{#1}}
\newcommand*{\W}[0]{\;\boldsymbol{\mathcal{W}}\;}

\theoremstyle{definition}
\newtheorem{theorem}{Theorem}[section]

\newtheorem{lemma}[theorem]{Lemma}

\newlength{\arrow}
\settowidth{\arrow}{\scriptsize$1000$}

\begin{document}

\title{Runtime Fault Detection \\in Programmed Molecular Systems
	\thanks{This research was supported in part by the National Science Foundation grants \#1247051 and \#1545028.}
}

\author[1]{Samuel J. Ellis}
\author[2]{Titus H. Klinge}
\author[1]{James I. Lathrop}
\author[1]{Jack H. Lutz}
\author[1]{Robyn R. Lutz}
\author[1]{Andrew S. Miner}
\author[1]{Hugh D. Potter}
\affil[1]{Iowa State University, Ames, IA 50011}
\affil[2]{Carleton College, Northfield, MN 55057}

\date{}

\maketitle

\begin{abstract}
Watchdog timers are devices that are commonly used to monitor the health of safety-critical hardware and software systems.  Their primary function is to raise an alarm if the monitored systems fail to emit periodic ``heartbeats'' that signal their well-being.  In this paper we design and verify a {\it molecular watchdog timer} for monitoring the health of programmed molecular nanosystems.  This raises new challenges because our molecular watchdog timer and the system that it monitors both operate in the probabilistic environment of chemical kinetics, where many failures are certain to occur and it is especially hard to detect the absence of a signal. 

Our molecular watchdog timer is the result of an incremental design process that uses goal-oriented requirements engineering, simulation, stochastic analysis, and software verification tools.  We demonstrate the molecular watchdog's functionality by having it monitor a molecular oscillator.  Both the molecular watchdog timer and the oscillator are implemented as chemical reaction networks, which are the current programming language of choice for many molecular programming applications.

\end{abstract}

\section{Introduction}

Molecular programming uses the information processing inherent in chemistry, especially the chemistry of DNA and other biomolecules, to engineer useful nanoscale systems. Molecular programming applications including diagnostic biosensors, medical therapeutics,  molecular robots and bio-compatible materials already exist in the laboratory and are poised to have a major impact on society. 

This paper proposes the design of verifiable safety mechanisms for molecular programming applications.  This is an important and timely objective because many of the planned future uses of molecular programmed systems are safety-critical, such as biosensors to detect pollutants in water or drug therapeutics to deliver customized medicine only to the locations in the body where disease has been detected \cite{oWooLin05,  jZhaSee11, jDoBaCh12,  jLiLiYa13}.  During the rapid expansion of the field of molecular programming, basic research has been rightly focused on normal operation (systems doing what they should).  However, as the field progresses to development and deployment, it will also need to focus on avoiding hazards (systems not doing what they shouldn't).  Designing and formally verifying safety mechanisms for programmable control of molecular systems now will help enable this future focus on safety \cite{oElli17}.

In many safety-critical systems, the failure of a monitored system can be dangerous if it goes undetected.  A standard remedy is to introduce a {\it software watchdog timer}, a safety mechanism whose responsibility is to monitor for the occurrence of the failure and to raise an alarm that can trigger recovery action if a failure occurs \cite{oKnig12, oLeve95}.  For example, on the Voyager spacecraft, a heartbeat was sent every two seconds from the attitude control computer to the command computer responsible for monitoring its health. If the attitude control computer's self tests failed, then no heartbeat was generated.  ``After about 10 seconds passed with no heartbeats, the command computer would issue a switch-over command to the backup processor'' \cite{oNasa88}.  Correct behavior of a watchdog timer in the context of the system it monitors is essential. For example, the faulty integration of a watchdog timer such that it did not actually monitor the throttle may have resulted in the failure of a major task in the Toyota unintended acceleration accidents \cite{oKoop14}. 

A watchdog timer receives a periodic heartbeat from the system that it is monitoring.  Receipt of the heartbeat resets the watchdog timer, indicating that the monitored system is still alive.  When the heartbeat signal stops, the absence of the heartbeat causes the watchdog timer to time out and ``bark'.'  Outputting this alarm indicates that a fault has occurred in the monitored system that led to its experiencing a service failure  \cite{jALRL04}.   Watchdog timers often serve both to detect the failures of monitored systems and to trigger their recovery, either through corrective interventions or automated recovery actions. 

Guided by the successes of software watchdog timers, we have chosen a {\it Molecular Watchdog Timer} as our first safety mechanism for molecular programming applications.  This Molecular Watchdog Timer should monitor a molecular system at runtime, detect when the heartbeat signal from the monitored system stops, and alarm to trigger its recovery.

The design of a Molecular Watchdog Timer is an ambitious goal.  Monitoring for the absence of an event is especially difficult in the molecular domain.   Detecting the non-occurrence of an expected event is often not yet possible for components that execute outside the laboratory such as {\it in vivo} applications.   For example, some of the most promising planned molecular systems involve drug delivery to tumors.    For such systems, runtime monitoring and detection of faults must take place in the same molecular environment as the programmed system itself.    
Additional challenges to detecting faults at runtime in molecular programmed systems include the facts that the behavior of molecular systems is probabilistic; the components are nanoscale, so runtime observation is non-trivial; there are very many components so scalability is a problem; and the components are fault-prone so any fault that can occur probably will occur in a significant number of components.   

The main contribution of this paper is the requirements specification, design, and verification of a Molecular Watchdog Timer (MWT) for molecular programming applications.  This contribution includes the following features:
\begin{enumerate}
\item We develop a goal-based model and formally verify the requirements for an MWT using the Isabelle proof assistant \cite{Nipkow2002}.
\item Our MWT is designed as a {\it chemical reaction network}, a well-understood mathematical model \cite{oAndKur11, oAndKur15} that is suitable for automatic compilation into DNA implementations \cite{jSCWB08, cSoSeWi09}.
\item Our MWT detects heartbeat failures at runtime and alarms accordingly.
\item Our MWT's timing functions are carried out by {\it stochastic delay ladders}, introduced here, that are provably reliable, both for detecting failures and for avoiding false alarms.
\item In addition to alarming, our MWT can trigger the recovery of a monitored system.
\item Our MWT is automatically reusable rather than having to be discarded after a single failure.
\item Our MWT is tested with a specific monitored system, a molecular oscillator widely used as a benchmark, modified to produce a heartbeat.  Two very different types of oscillator failure can interrupt this heartbeat.
\item Our MWT is embedded with oscillators at two widely separated ranges of size, with end-to-end behavior verified and validated at these scales using model checking \cite{oBaiKat08, jKwiTha14} and simulation, respectively.
\item These verifications demonstrate that our MWT performs correctly with the oscillators, enabling them to recover correctly from both types of failure.
\end{enumerate}
Earlier versions of our MWT were reported in \cite{cEHKLLL14, oElli14}.  The version reported here has a different architecture and improved functionality. Regarding the above list of features, the earlier versions had goal-based models as in (1), but these were hand-verified.  The earlier versions had features (2) and (3) and rudimentary, inexplicit versions of feature (4).   The present paper's use of the Isabelle proof assistant in feature (1), its systematic treatment of the delay ladders in feature (4), and all aspects of features (5) through (9) are new work.   This paper describes a MWT re-designed to be reusable, embeddable with the system it monitors, and capable of triggering a monitored system's recovery at runtime. 

A broader contribution of the paper is to demonstrate the new use of software engineering techniques and tools to create and verify the requirements and design for a programmable safety mechanism, the Molecular Watchdog Timer,  that will be needed for future molecular systems such as biocompatible drug delivery nanodevices.   We have sought to make the development approach that we use general enough to guide future molecular design work on additional safety mechanisms.  We claim that software engineering helps achieve molecular programmed systems that are safe for use in a dynamic and only partially understood physical environment and show in our results the benefits of a software engineering approach to creating new molecular systems.  

The rest of the paper is organized as follows.  Section~\ref{sec:requirements} describes the goal-oriented requirements analysis and machine-checked refinement proofs for the Molecular Watchdog Timer.   Section~\ref{designSect} presents the chemical reaction network design models to achieve these capabilities. Section~\ref{sec:verification} shows how simulation, probabilistic model checking and mathematical proofs are used to validate and verify the design. 
Section~\ref{relatedWork} describes related work, and Section~\ref{sec:conclusion} gives concluding remarks.

\section{Requirements}\label{sec:requirements}
In this section we describe a requirements engineering process for  programmed molecular systems and use it to develop the requirements for  a runtime fault detection device called a Molecular Watchdog Timer (MWT). We first describe informally the high-level requirements for the new system to be built.  We then describe the iterative process by which the requirements were formally specified, analyzed, corrected, refined, and verified as our understanding of what was needed improved. Lastly, we discuss how the interplay between requirements and design contributed to the incremental improvement of the requirements. This ongoing process made the requirements significantly more complete, accurate and realistic (i.e., feasible to implement in DNA).

Despite the difficulty of ``getting the requirements right'' for a new molecular device, our use of requirements engineering techniques in molecular programming is novel.  We show how engaging in goal-oriented requirements engineering can benefit the design of programmed molecular systems, especially by finding and solving problems early in development. 

Figure~\ref{contextfigure} shows a context diagram \cite {oLams09} for the MWT.
  The Monitored System sends a regular signal (labeled ``Heartbeat'') to the MWT.   The MWT processes this signal and outputs an Alarm to an external observer or system if the signal has stopped.   
\begin{figure}[h]
    \begin{center}
    \includegraphics[width=3.5in]{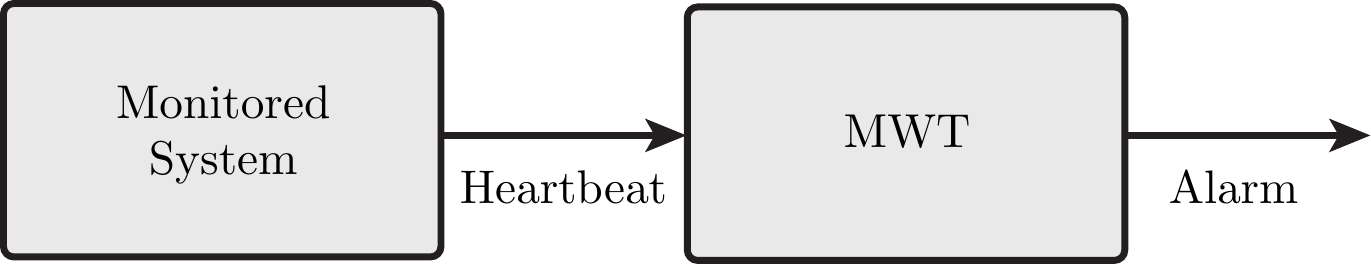}
    \caption{Context Diagram \label{contextfigure}}
    \end{center}
\end{figure}

\subsection{Goal modeling}

We used a goal-oriented requirements engineering approach that we based on van Lamsweerde's KAOS \cite{oLams09} and introduced in \cite{cLLLKHM12,cLLLKMS12} to specify and analyze the MWT system goals.   Goal-oriented techniques support systematic and incremental refinement from informal descriptions to formal specification of properties for simulation and model checking.   

We first performed goal modeling to understand what was needed to achieve a MWT.  A goal model is an AND/OR graph in which the top-level node describes the high-level functional requirement of the system-to-be and the leaf nodes are the subgoals whose collective satisfaction implies satisfaction of the top-level goal.  The goals are specified as goals to ACHIEVE, MAINTAIN, or AVOID various conditions 
in view of the domain properties (such as physical laws governing molecular interactions).   An AND node is satisfied provided that its children nodes are satisfied.  An OR node shows alternative refinements of a node.   Our goal model's nodes are all AND nodes. 

The AND/OR refinement of high-level goals and the graphical presentation of the results is sufficiently intuitive to be useful in our discussions with molecular biologists.    It also allows both top-down and bottom-up development of the hierarchical tree as understanding improves.  The logical underpinnings of KAOS support formal analysis and clean traceability between the textual descriptions of the goals and the formal specification of the goal model. 

Our high-level goal is that at runtime the MWT shall issue an alarm if and only if the monitored system does not provide a heartbeat within a specified time.  Figure~\ref{goalfigure} shows the goal model for the MWT.
\begin{figure}
	\begin{center}
		\includegraphics[width=5.0in]{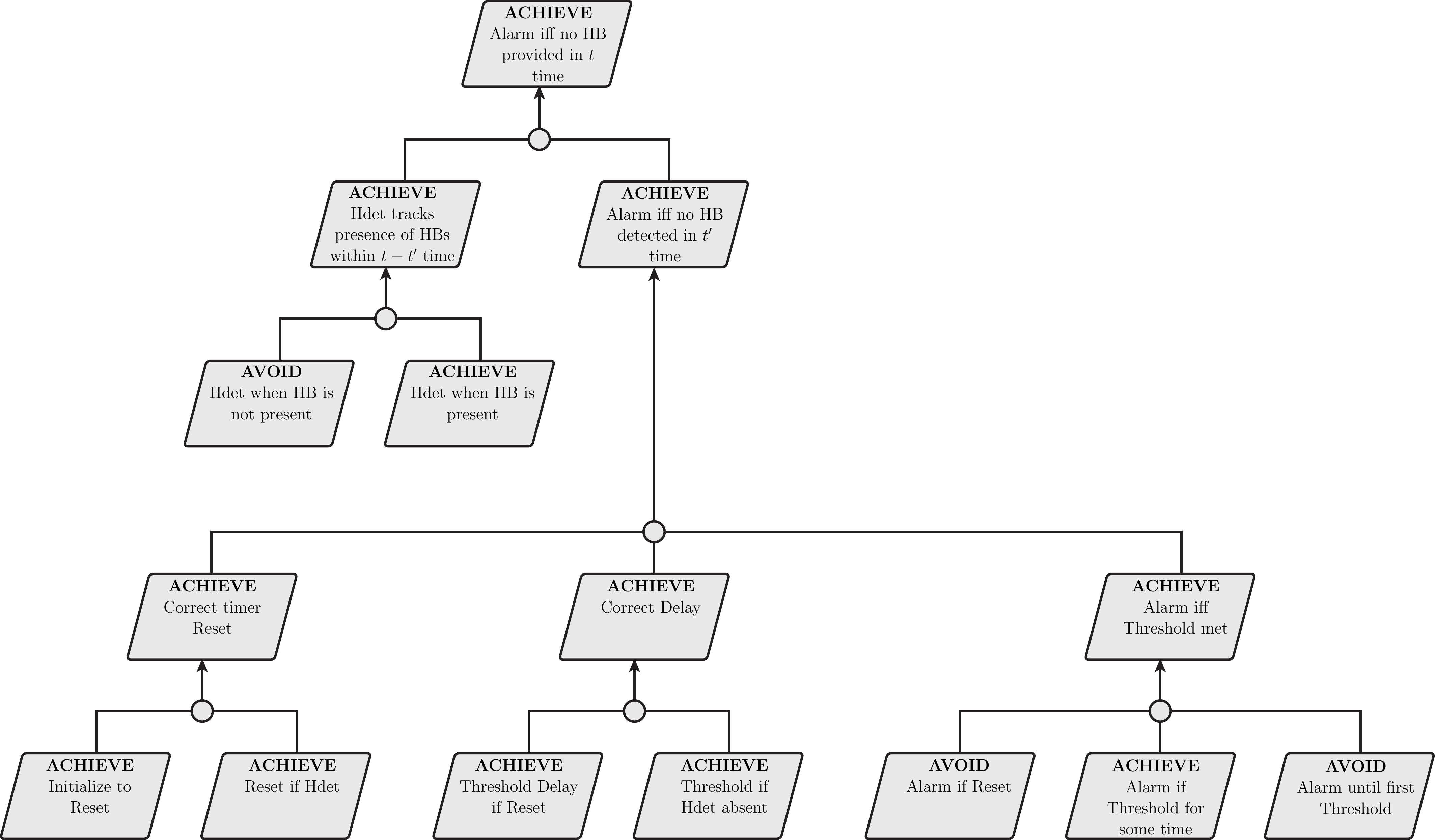}
		\caption{Goal Model \label{goalfigure}}
	\end{center}
\end{figure}
 The top-level goal, {\it Achieve[Alarm iff no Heartbeat provided within t time]} describes the intent of the system-to-be.  That goal is AND-refined into two subgoals, such that it can be satisfied if both of the two second-level goals are met.  The subgoals are further refined until each leaf goal can be assigned to an agent (discussed in Sect. 3). An agent is a system component with responsibility for satisfying the goal(s) assigned to it \cite{oLams09}.   

The MWT's client can use the alarm signal output by the MWT either as an externally observable 
{\it alarm} that notifies the client when the heartbeat stops or as a {\it recovery trigger} prompting the initiation of some recovery action when the heartbeat stops.  The first usage scenario is intended for scientific observations where the alarm signal might, for example, be a fluorescent molecule visible to a human technician. The second scenario is useful when the MWT is monitoring an adjacent molecular system that is capable of autonomous recovery and is a focus of this work. 
\subsection{Environmental assumptions}

Environmental assumptions are statements about the system's operational context that are accepted as true by the developers \cite {cTLNYMN15}.  The proposed MWT will be made of DNA strands and will operate, both literally and figuratively, in a fluid, molecular environment.   
For satisfaction of the subgoals to imply that the parent goal is satisfied, 
we must make certain assumptions about the operational environment.   For example, we assume that the chemical solution in which the MWT operates is well mixed. 
Other environmental assumptions needed to prove the satisfiability of the top-level goal are that the heartbeat species, $H$, is intrinsically ephemeral, meaning that it will not persist and will decay over time; that the number of $H$ molecules in the heartbeat pulse is in a certain range; and that no molecules in the environment other than the heartbeats interact with the MWT.  

If these environmental assumptions are false or cease to be true, the validity of the solution may be at risk.   We will see below how the process of formalizing the leaf goals, proving that the leaf goals implied the top-level goal and verifying the design forced us to revisit and revise several of our original environmental assumptions. 

\subsection{Goal formalization}

For every goal we needed both a natural language description and a formal specification.  We specified the goals in continuous stochastic logic (CSL) \cite{jASSB00} because of its availability in the model checking tool that we use and because it handles the continuous time Markov chains (CTMCs) \cite{oAndKur11, oAndKur15} that form the semantics of our chemical reaction networks (described in Section~\ref{designSect}).   Formally specifying the goals and assumptions enabled us to prove that the satisfaction of the lower-level goals, together with the stated assumptions, implies the satisfaction of the higher-level goals.   

The complete CSL specifications for the goal model appear in the table in the Appendix.  For each node in the goal model, the table there shows (1) the name of the node, (2) the formal specification of the node in CSL, and (3) the agent assigned responsibility for each leaf goal.  The goals are ordered breadth first following Figure~\ref{goalfigure}. As an example, the leaf goal \textit{Achieve[Alarm if Threshold for some time] } has the CSL specification $\prob{\ge 1}\Box\left[Th_H\implies \prob{\ge 1-\eta_4}\Diamond_{\le w_{th}}\left(Alarm\;\lor\;\lnot Th_H\right)\right]$ and is assigned to the Threshold Filter as described in Section 3. 

We proved that the goals, together with the specified environmental assumptions and facts and parameter values in their specified ranges,  satisfy the top-level goal. This proof was carried out manually and iteratively during the development of the goal model.  

\subsection{Obstacle analysis} 

The MWT must work safely and with high reliability to detect and recover from faults in the molecular applications for which it is intended, such as bio-sensing and medical therapeutics.  A major challenge in developing the MWT was determining accurately the requirements for a {\it feasible} system that could operate in the molecular environment.  To do this we needed to analyze whether the stated goals could be realistically satisfied in this stochastic and dynamic setting. 

To investigate feasibility, we used KAOS's obstacle analysis extensively \cite {jLamLet00, oLams09, cCaiLam14}. 
An obstacle to a goal is a precondition for that goal's non-satisfaction \cite{oLams09}.  Given a goal model in which A and B are the AND subgoals of parent goal C, an obstacle for C is  a state of affairs in which A and B are true and C is false.    

Obstacle analysis {\it identifies} ways in which the goals might fail to be satisfied (i.e., obstacles to satisfaction), {\it assesses} the likelihoods and impacts of the obstacles, and investigates how to {\it resolve} them.  We found in previous work on DNA nanopliers that the early analysis of obstacles to satisfying the goals worked well in helping find and remedy missing and unrealistic requirements in the programmed molecular system \cite{cLLLKMS12}. 

Most of the obstacles for the MWT were subtle and found manually while proving that subgoals satisfied their parent goal.   The process of proving the satisfiability of the formal CSL goals repeatedly revealed both additional cases and uncertainties that could be introduced by the stochastic behavior.  Gaps and errors in the goal model often were due to the non-deterministic, asynchronous nature of reactions, to the very large number of molecular agents operating in parallel, and to the physical environment in which the MWT operates.  Model-based mathematical analysis, simulation with the MATLAB extension, SimBiology, and probabilistic model-checking with PRISM \cite{cKwNoPa11}, 
as described below, 
helped us assess the likelihood and impact of candidate obstacles.

Representative examples of obstacles we found in the early version of the MWT were previously reported in \cite{cEHKLLL14}. These include:

{\it Incorrect agent}. We initially assigned a binary counting device introduced in \cite{cJiRiPa11}  to be the clock agent responsible for detecting the absence of a heartbeat.  However, simulation in SimBiology revealed that this device did not satisfy our specification. It was designed to work in a setting in which it is assumed that all reactions are ``fast'' or ``slow'', and that all fast reactions occur before all slow reactions. The stochastic (and more realistic) model on which we work violates this assumption, allowing slow reactions to interfere with the clock's function. Over time the accumulation of such violations leads to failure of the clock.  To resolve this, we assigned responsibility for that goal to a new agent in which the delay is instead achieved by a programmed cascade of interactions. 

{\it Missing property}. In refining a goal into two subgoals, we had to introduce the domain property that it takes a positive amount of time to detect a heartbeat. This is because the detection occurs via the chemical interaction of the heartbeat molecules with the molecular component that detects missing heartbeats. The subgoals did not satisfy the goal without this domain property. This obstacle was resolved by introducing a ``grace period'' before heartbeat detection is required. When we initially failed to propagate the addition of the grace period back up to the parent goal, model checking with PRISM detected the omission, and it was corrected. 

{\it Incorrect initialization}. Model checking revealed a failure mode that can occur just after the MWT begins execution. The initial intent was that operation of the MWT begin at time zero, i.e., when the MWT was ``poured into the test tube''. However, as specified, the MWT could violate this intent by alarming {\it before} the monitored system had a chance to send a heartbeat. To resolve this, we added a new CSL property to the high level goal specifying that the alarm must remain off for a period of time after initialization. This new goal propagated through the goal diagram to create a new leaf-goal, {\it Achieve[Initialize to Reset]}, specifying that the timer must be considered to be in a reset stage at initialization. Together with the leaf goal {\it Achieve[Threshold Delay if Reset]}, this implies that the alarm will not be active upon initialization. We proved manually that the implication holds after the change and confirmed using PRISM that a model with an alarm in the initial state satisfies the goals. 

Obstacle analysis helped identify missing requirements, explore design alternatives and find idealized environmental assumptions that had to be weakened to conform to physical realities. The obstacle analysis process was on-going and a major contributor to de-idealizing the goals for the MWT into requirements that were feasible for implementation in a programmed molecular system operating in a chemical ``soup''.   During the obstacle analysis we experienced extensive back-and-forth interweaving between the requirements and the design.  This iterative, incremental nature of the modeling and analysis effort is typical of complex systems \cite{jWGCMHR13} and often described in terms of ``twin peaks'' \cite{jNuse01}. 

Related work on specifying goals in uncertain environments formalizes the required degree of goal satisfaction, as in \cite{cLetLam04}, or the required probability of goal satisfaction, as in \cite{jCaiLam13}.  However, molecular systems such as the MWT may have more than $10^{10}$ individual components in solution, so failures with any significant probability probably will occur in many individual components.  The system must nevertheless be robust enough to 
operate correctly (to alarm or not to alarm) with probability approaching 1, even in the presence of some component failures. The watchdog timer design must be one in which we have confidence that, if the system it is monitoring fails, the MWT will detect and notify us, and that if the MWT notifies us that the monitored system has failed, then we can trust its accuracy.   

\subsection{Verifying the goal model} 
After the goal model was stabilized, we re-proved its internal correctness (the fact that the leaf goals imply the top-level goal) with the aid of the Isabelle proof assistant \cite{Nipkow2002, Nipkow2014}. In order to maximize the use of Isabelle's automated features and the readability of the result, we used a hybrid of human mathematics and automated theorem proving.  Specifically, we formulated a short list of CSL lemmas that are relatively simple (especially, not burdened with specific parameters of our design) and capture the higher-order logical aspects of our goals that are not readily amenable to automation.   After proving these lemmas succinctly and transparently, we used Isabelle to prove that, given these lemmas, the goal model is internally correct.  

The verification of the goal model is reported completely in the Appendix. 

More specifically, every CSL operator defines a state formula, so we represent CSL operators in Isabelle as paramatrized functions from CTMC states to the set $\{\mathit{true,} \;  \mathit{false}\}$.  For example, we can supply the $\boldsymbol{\mathcal{P}}\Diamond$ operator with probability and time parameters $\alpha$ and $t$ and a component CSL formula $\phi$; the result is the CSL formula $\prob{\ge\alpha}\Diamond_{\le t} \phi$, which can take on the values $\mathit{true}$ or $\mathit{false}$ at different CTMC states.

We define rules for logical connectives in CSL using lambda expressions.  For example, given two CSL formulae $\phi$ and $\psi$, we define conjunction as
\begin{equation*}
\phi \land \psi = \lambda x\ \phi(x) \land \psi(x).
\end{equation*}
That is, $\phi \land \psi$ is a single function that returns true if both $\phi$ and $\psi$ are true for the given argument.  In combination with the operator definitions discussed above, logical connectives like this are suffient to encode arbitrary CSL statements and implications in Isabelle.

To construct a verified proof using this formalism in Isabelle, we provide a sequence of intermediate goals that link our assumptions and other known statements with our final goal.  For each step in this proof, we provide Isabelle with assumptions and lemmas to reference and with proof methods to apply.  Isabelle's powerful Sledgehammer tool \cite{Meng2006, Paulson2007} automated significant parts of this process, making it easier to supply the correct facts and methods and take larger steps.

\section{Design}\label{designSect}   

The goal-oriented requirements refinement and analysis described above assigns responsibility for achieving the MWT's (Molecular Watchdog Timer's) leaf goals to two system agents:  the Absence Detector component and the Threshold Filter component, which outputs an Alarm signal.   Figure~\ref{archfigure} shows the high-level  design with these two components.  The connectors among the components reflect the intended flow of information from detection of individual heartbeats or their absences, to determining whether the incidence of faults exceeds a programmed threshold and issuing an alarm signal when that threshold is exceeded.
\begin{figure}
        \begin{center}
                \includegraphics[width=4.0in]{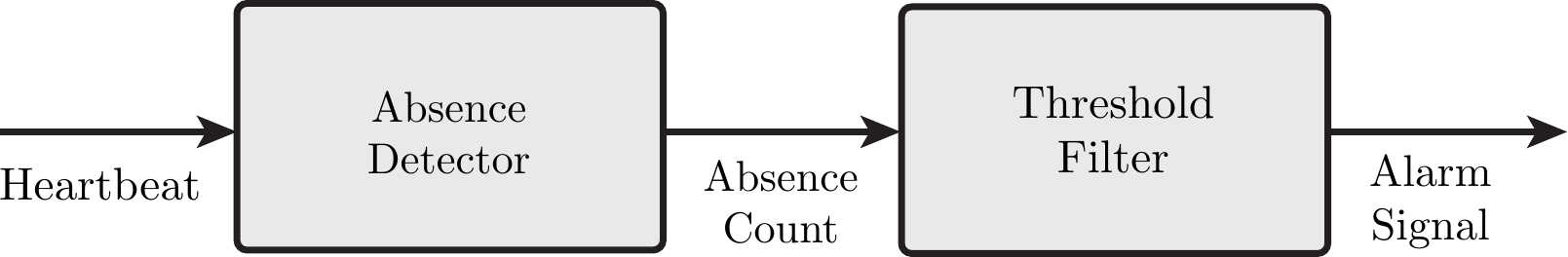}
                \caption{High-level Design \label{archfigure}}
        \end{center}
\end{figure}
  We describe the mapping of leaf goals in the goal model to the components responsible for them and the detailed modeling of the components below.  

\subsection{Chemical Reaction Networks as a programming language}

Our MWT design uses the language of chemical reaction networks (CRNs), which are abstract models of molecular processes in well-mixed solutions \footnote{This expository subsection is adapted from \cite{cEHKLLL14}.}.  All CRNs in this paper are {\it stochastic CRNs}, which model processes in which the presence or absence of very small numbers of certain types of molecules (e.g., a single copy of a viral genome in a living cell) may be significant.  We henceforth omit ``stochastic'' from the terminology.

The CRN model, which goes back at least to 1940 \cite{jDelb40}, has three desirable features.  First, it is mathematically simple.  A CRN is a finite collection of {\it reactions}, each of which has a simple form such as $A+C \overset{r}{\longrightarrow} 2B+C$, where the {\it species} $A$, $B$, and $C$, are abstract types of molecules and the {\it rate constant} $r$ is a positive real number representing the ``propensity'' of an $A$ and a $C$ that encounter one another to react, thereby being replaced by two $B$s and a $C$ in the solution.  A species that, like the species $C$ here, appears on both sides of a reaction is called a {\it catalyst} of the reaction.  Catalysts are extremely important in biochemical processes, and they are extremely useful in our MWT construction.  A {\it state} of a CRN is a vector specifying the number of each species present, and the dynamics of the CRN proceed as a continuous time Markov process with rates derived from the rate constants \cite{oAndKur11,oAndKur15, oAthLah06}. 

A second feature of CRNs is that they are very general.  Every algorithm can, in at least one sense, be efficiently simulated by a CRN \cite{jSCWB08}.

A third desirable feature of CRNs, discovered recently, is that they can be implemented in a uniform way using DNA strand displacement reactions \cite{jSoSeWi10}.  This is fortuitous, because dynamic systems in DNA nanotechnology, including DNA walkers and logic circuits, are typically implemented using DNA strand displacement reactions \cite{oSolo08, jZhaSee11}.  The details of strand displacement reactions are not needed for this paper, but it is relevant to note that they are relatively easy to implement in the laboratory, and that they are easy to specify.  There is a programming language, DSD, in which a large, expressive class of such reactions can be specified and compiled into abstract DNA sequences \cite{jPhiCar09, jLYPEP11}.  CRNs have recently been used as a higher level programming language that can be compiled into DSD \cite{jCDSPCS13, cBSJDTW17}.  

\subsection{Stochastic delay ladders}

The timing functions in our Molecular Watchdog Timer are carried out by {\it stochastic delay ladders} (or simply {\it ladders}), which are CRNs introduced here having very predictable behaviors.  The simplest type of ladder is a $k+1$-rung {\it unary ladder}, which consists of species $X_0, \ldots,X_k$ and the $2k$ reactions 
\begin{align}\label{eq:unary}
\begin{split}
X_i &\overset{u}{\longrightarrow}  X_{i+1}  \quad   (0 \leq i < k),\\
X_i &\overset{r}{\longrightarrow}  X_0      \quad     (0 < i \leq k).
\end{split}
\end{align}
We call $X_0, \ldots,X_k$ the {\it rungs}, $u$ the {\it upward rate constant}, $r$ the {\it reset rate constant}, and  $k$ the {\it height} of this ladder.  The ladder is ``unary'' because each of its reactions has only one molecule on its left-hand side.  The ladder is initialized with all its population on the bottom rung, i.e., a  positive integer $p$ instances of the species $X_0$ and no instances of $X_i$ for $0 < i \leq k$.  Over time, members of this population ``try to climb'' the ladder, sometimes going up from one rung to the next and sometimes falling all the way back to the bottom rung.  (This is a CRN implementation of the ``frog in the well" Markov process \cite{oAthLah06}.)  The total population $p$ of the ladder remains fixed throughout these climbing attempts.

Because the ladder (\ref{eq:unary}) is unary, its kinetic behavior is linear.  This implies that a ladder with population $p$ behaves exactly like an aggregate of $p$ statistically independent ladders with population 1.  (Most CRNs have nonlinear behavior and thus cannot be decomposed into independent, single-molecule nanodevices in this manner.)  Since $p$ is usually large (and often {\it very} large), this statistical independence enables us to predict with high confidence how long it will take (as a function of $p$, $u$, $r$, and $k$) for a given fraction of the population to simultaneously occupy the top rung of the ladder.

In practice, the rate at which reactions occur is governed more by catalysts than by rate constants.  Thus, instead of the unary ladder (\ref{eq:unary}) we introduce {\it catalyzed ladders} of the form \begin{align}\label{eq:catalyzed}
\begin{split}
X_i  + U &\overset{1}{\longrightarrow}  X_{i+1} + U  \quad   (0 \leq i < k),\\
X_i  + R &\overset{1}{\longrightarrow}  X_0 + R    \quad     (0 < i \leq k).
\end{split}
\end{align}
The kinetics of stochastic CRNs make (\ref{eq:catalyzed}) very similar to (\ref{eq:unary}).  For example, if $\#X_i(t)$ is the number of $X_i$ molecules at time $t$ and $u(t)$ is the {\it concentration} of $U$ molecules at time $t$ (i.e., $u(t) = \#U(t)/V$, where V is the volume of the solution), then at time $t$ the first reaction in (\ref{eq:unary}) takes place at rate $u\#X_i(t)$, while the first reaction in (\ref{eq:catalyzed}) takes place at rate $u(t)\#X_i(t)$.  In particular, if $u(t)=u$ and $r(t)=r$ are constant, then the unary ladder (\ref{eq:unary}) and the catalyzed ladder (\ref{eq:catalyzed}) have identical statistical behaviors.  Moreover, even if $u(t)$ and $r(t)$ fluctuate, but do so independently of the ladder's rung populations, the catalyzed ladder (\ref{eq:catalyzed}) enjoys the same decomposability into independent, population-1 ladders as the unary ladder (\ref{eq:unary}).

When controlling reaction rates by catalysts in the above manner, one often takes the rate constants to be 1, as we have done in (\ref{eq:catalyzed}).  In this case, the rate constants are omitted from the notation, and assumed to be 1, as we do below. 

It is now straightforward to specify the two main components of our Molecular Watchdog Timer.
\subsection{Absence Detector}

This component detects when a heartbeat signal has not been present for a specified period of time.   The heartbeat is a ``pulse" of a specific molecular species $H$ that is expected to be periodically output by the molecular application that is being monitored by the MWT.  If the heartbeat is not detected by the MWT for an extended period of time, we can conclude that the molecular application being monitored has failed.  The Absence Detector is assigned responsibility for achieving the leaf goals {\it Avoid [Hdet when Heartbeat is not present]}, {\it Achieve [Hdet when Heartbeat is present]}, {\it Achieve [Initialize to Reset]}, {\it Achieve [Reset if Hdet]}, {\it Achieve [Threshold Delay if Reset]}, and {\it Achieve [Threshold if Hdet is absent]} as shown in Figure~\ref{goalfigure}.  

The Absence Detector component consists of the catalyzed ladder
\begin{align*}
     L_i + U &\rightarrow L_{i+1}+U   \quad(0\leq i<k ),\\
     L_i + H &\rightarrow L_0 + H \quad(0<i\leq k). 
\end{align*}
We also write Y for the top rung $L_k$ of this ladder to emphasize its special role as the upward catalyst for the Threshold Filter below. 

\subsection{Threshold Filter}\label{threshFilt}

This component detects when a target number of individual instances of the Absence Detector have reached the $L_k$ state. The Threshold Filter trips an alarm if and only if enough Absence Detectors are in the $L_k$ state to overcome the {\it constant} number of instances of the reset catalyst R of the Threshold Filter.  The Threshold Filter is assigned responsibility for the leaf goals {\it Avoid [Alarm if Reset]}, {\it Achieve [Alarm if Threshold for some time]}, and {\it Avoid [Alarm until first Threshold]}.

The Threshold Filter consists of the catalyzed ladder
\begin{align*}
        T_i + Y &\rightarrow T_{i+1}+Y  \quad(0\leq i<k ),\\
        T_i + R &\rightarrow T_0 + R \quad(0<i\leq k). 
\end{align*}

We also write D for the top rung $T_k$ of the Threshold Filter.  This species D is the alarm species of the MWT.  It 
is used by external modules to trigger a response. In previous work \cite{cEHKLLL14} we described a simple, one-time, Alarm response produced by amplifying the alarm molecular species.   In this paper we will present a more powerful response, namely a 
Recovery component, after introducing a monitored system in Section~\ref{mwtRecovery}.

\section{Design Verification}\label{sec:verification}

Formal design analysis provides some assurance that the behavior specified in the design matches the system's intended behavior.   We must ensure that with very high probability when there is a heartbeat, the MWT does not alarm, and that with very high probability when there is no heartbeat, the MWT quickly alarms.
  
In this section we first describe the analysis and verification techniques used to verify the design of the MWT and report the verification results.   We check that the MWT works for a very long time in normal conditions (i.e., when the heartbeat from the monitored system is present) and show via quantitative simulation that the MWT alarms as quickly as the client needs (indicated via the initial parameter values) when the heartbeat from the external system disappears. 

We then demonstrate the functionality of the MWT by introducing a specific example of a system that needs to be monitored at runtime and verifying that its heartbeat behavior is correct.  We connect the CRN model of the monitored system to that of the MWT and verify the correct functioning of the composed system.  Finally, we describe how we extended the existing system to enable the MWT to not only detect the monitored system's failure at runtime, i.e., the absence of the expected heartbeat pulse, but also to trigger the monitored system's runtime recovery, and how we verified this additional capability. 

\subsection{Verifying the MWT design}

To check the correctness and robustness of the MWT design, we followed an incremental process of simulation of the CRN model for sanity checks and selection of likely parameter value ranges, followed by model checking of the CSL leaf goals on the CRN model across those parameter values. We also injected faults that had been previously discovered by analytical reasoning to confirm that the model checkers found them.   
To recap, the agents to which the leaf subgoals in the goal model are assigned--the Absence Detector and the Threshold Filter--are described in the CRN high-level programming language. The properties to be checked against the CRN model are the CSL formal specification of the  leaf goals.    

{\it CRN input to verification tools.} A strength of CRN as a language for molecular programming is that a CRN model can be readily imported into MATLAB's SimBiology package, allowing simulations to be run on it.   We used SimBiology to understand the behavior and performance of our models and to debug them.   

A CRN model can also be used as input into the probabilistic model checker  
PRISM \cite{cKwNoPa11}, used previously to analyze molecular systems, e.g., in \cite{cKwia14, jKwiTha14}. We used PRISM to verify that the MWT design satisfied the CSL properties derived from the goal model.  
The MWT model takes a number of parameters ranging from rate constants (of DNA reactions) to the length of the absence detector ladders (which are DNA strands). The parametrized design allowed us to automate testing across ranges of parameter values for optimization and verification.  

{\it Parameters.} The values of the client parameters are specified by the client and depend on the system being monitored. For example, how quickly a heartbeat failure must be detected will vary among different applications.  Additionally, there are modeling parameters that were needed to verify the realizability of the specified goals.  An example is the number of each species of molecule represented in a particular model.    The values for these parameters vary as the design space is explored.   Their correct representations emerge from the formal analysis, mathematical proofs that satisfaction of the subgoals given the environmental assumptions satisfy the root goal, simulation of the composed system (monitored component and MWT), and the model checking results. Correct parametrization of the models was time-consuming and incremental.   We used custom MATLAB scripts to generate models of arbitrary parameters that could integrate with the model checkers. These scripts along with some features provided by PRISM automated the exploration of the parameter space to discover models that provably satisfied our requirements.

{\it Reusable MWT.}  Making the MWT reusable enables the composed system (monitored system and MWT) to recover from faults and then continue execution autonomously, without outside intervention.   This is important if the MWT is to be used in vivo, for example. If the MWT is not reusable, then even if the monitored system recovers, it is no longer being monitored.  The technical difficulty in moving from a throw-away MWT (one-time use) to a reusable MWT was the creation of a design for which the initial condition (the set of values for the parameters) could be restored autonomously.   At initialization, we thus configured the  MWT to begin with the majority of rung molecules in the lower rungs of the ladder. When a heartbeat is detected, the absence detector enters this reset state again. This resets the MWT for reuse. We validated via simulation in SimBiology that the MWT is reusable rather than needing to be discarded after a single use. After having alarmed, it is reset when the monitored system begins issuing heartbeats again.  

\subsection{Verifying the interaction of the MWT with a monitored system}\label{mwtRecovery}

The question that underlies the verification of the interaction of the MWT with a monitored system is ``Could any such composed molecular system satisfy the assumptions we make on it and be used productively?'' 
Given current model-checking limits, we verify the interaction at two very different molecular population scales, 
both of which are in realistic ranges of molecular counts.   

First we verify using {\it simulation and probabilistic model checking} that the interaction of an abstract monitored system with the MWT is correct with respect to the requirements and show that the assumptions we make on this hypothetical monitored system are realistic.  This first version operates on an abstract signal received from a hypothetical monitored system. We simulate using SimBiology and model check using PRISM that the MWT transforms the absence of a heartbeat signal into an alarm signal correctly. The size for which we can model check it is small, with molecular counts up to 5.  For example, 5 absence detectors produced a CTMC with over 150,000 states, while 10 absence detectors produced a CTMC with over 9 million states. 

Second, we validate by {\it simulation} the correct interaction of an example monitored system with the MWT and show that the assumptions made on this specific monitored system are realistic.  We first introduce an example of a monitored system and verify its correctness, using simulation and model checking.   This entails verifying that the presence or absence of its heartbeat is correlated to the monitored system's health or failure.  The example monitored system cannot be model checked for all possible values of its parameters due to the size of the possible space; however, we simulated it with up to a total species population count of 100,000. 

{\it Introducing a system to be monitored.} To demonstrate the capabilities of the MWT design we used it to monitor the health of a standard molecular system, namely an oscillator.    Chemical oscillators occur widely in nature, so are important, and synthetic molecular oscillators previously have been used as benchmarks in multiple projects, e.g.  \cite{jLYCP12, cDaSePh14,  jDLLMPS15, cHorMur15, jFSCHPS17, cBSJDTW17}.   See Section~\ref{relatedWork} for more information. Another advantage of selecting the oscillator is that we could readily extend it to output a heartbeat (to test the normal case) and cause it to cease output of a heartbeat (to test the failure cases). 

We used the Lotka-Volterra 3-Phase Oscillator \cite{oCard06}, which employs three species A, B, and C. Each of the three species corresponds to a single phase of the oscillator. The oscillator is initialized with the molecular count of one of the three species being high and the molecular counts of the other two species being low. After initialization, the oscillator will cycle between the phases following the order A to B to C and then back to A.
As an example, consider the following case. If A is dominating and B and C have similar molecular counts, then reactions \eqref{lotka1} and \eqref{lotka3} below are equally likely to occur. However, when reaction \eqref{lotka1} or \eqref{lotka3} fires, the rates of all the reactions change, increasing the rate of reaction \eqref{lotka1} and decreasing the rate of reaction \eqref{lotka3}. This continues until B is dominating, completing the transition to phase B. A similar sequence of events occurs for each phase transition.

We extended the CRN model for the stochastic 3-phase oscillator
with a heartbeat interface that produces a heartbeat (H) when the oscillator is healthy.  A heartbeat interface is required in order to use the MWT to monitor the  oscillator. 
The CRN for the oscillator plus its heartbeat interface is: 
\begin{align}
A + B &\xrightarrow{k}2 B + H \label{lotka1} \\
B + C &\xrightarrow{k} 2 C\\
C + A &\xrightarrow{k} 2 A \label{lotka3}\\
H &\xrightarrow{k_2}  \emptyset
\end{align}

In order to be useful, the heartbeat interface must cover all possible failure modes. In a stochastic setting, there are two possible faults in the  oscillator that cause it to fail. First, if any of the three species A, B or C has a molecular count of zero, the oscillations will stop and the oscillator fail. Second, if the oscillator spends a large amount of time with all three species near a state of equilibrium, the oscillations will become negligible and desultory and the oscillator fail.  
While the oscillator is working correctly, a large number of H molecules will be created as reaction \eqref{lotka1} occurs. In order for this interface to be correct, it must produce fewer heartbeats in the case of a fault. For the first fault, if any of the three oscillator species has a count of zero, all oscillator reactions will quickly have a rate of 0, causing a stop in the production of heartbeat molecules. For the second fault, if all three oscillator species are in equilibrium then the H species will fall to a roughly constant amount low enough to cause the Absence Detector to activate.
 
 {\it Checking correlation of heartbeat with oscillator's state.} In ``real-world'' scenarios it is the monitored system's responsibility to provide a correct heartbeat, meaning that the MWT assumes that the heartbeat accurately reflects the normal or failed state of the monitored system.   However, in order to confirm that the MWT operates correctly and robustly, we had to first confirm that the oscillator outputs a correct heartbeat.  We thus had to  check (1) that the oscillator's health correlates with the presence (healthy) or absence (unhealthy) of heartbeat molecules at its interface, and (2) that the MWT's behavior correlates with the presence or absence of heartbeat molecules over a period of time at its interface with the  monitored system.   
 
 A state of the monitored system can be healthy or unhealthy. Informally, in a healthy state, a heartbeat will be sent within a reasonable time or the state will quickly become unhealthy.  In an unhealthy state, no heartbeat will be sent within a reasonable time or the state will quickly become healthy.  We define a state to be {\it healthy} at time t as A, B, C $>$ 0 AND $(A-B)^2 + (B-C)^2 + (C-A)^2 > \tau$, where $\tau$ is defined to ensure that the oscillator is deemed unhealthy if its species counts approach equilibrium.

The three properties to be verified are:

{\it Achieve[Produce heartbeats while healthy]}
\begin{itemize}
\item[] $\prob{\ge1} [ \Box (\text{healthy} \implies \prob{\ge1-\delta_1} [ \Diamond_{\le t_1} ((\text{hbHigh}\;\lor\;\lnot\text{healthy}) ]) ]$
\end{itemize}

{\it Avoid [Produce heartbeats while unhealthy]}
\begin{itemize}
\item[] $\prob{\ge1} [\Box (\lnot\text{healthy} \implies$\\
\hspace*{0.5in}$\prob{\ge 1-\delta_2} [\Diamond_{\le t2} (\prob{\ge 1-\delta_3} [\text{hbLow} \W (\prob{\ge 1-\delta_4} [\Box_{\le t_3} \text{healthy}])])] )]$
\end{itemize}

{\it Heartbeat decays}
\begin{itemize}
	\item[] $\prob{\ge 1} [\Box (\text{hbHigh} \implies \prob{1-\delta_5} [ \Diamond_{\le t4} \lnot \text{hbHigh} ] )]$
\end{itemize}

We simulated in SimBiology the oscillator and heartbeat interface with a range of initial counts of A, B, and C up to 1000 (e.g., 80\% in A, 10\% in B and C)  and an initial count of 0 for H, and checked that these three properties held in the simulations.   The simulations demonstrated that the presence or absence of a heartbeat correlates with the health of the oscillator in both of the two failure modes.  Using a CTMC model of the oscillator  with total population of 200, we then verified in PRISM that the oscillator plus heartbeat interface satisfied the goals.   The model checker verified true for the three CSL properties above.  

\begin{figure}
	\begin{center}
		\includegraphics[width=4.0in]{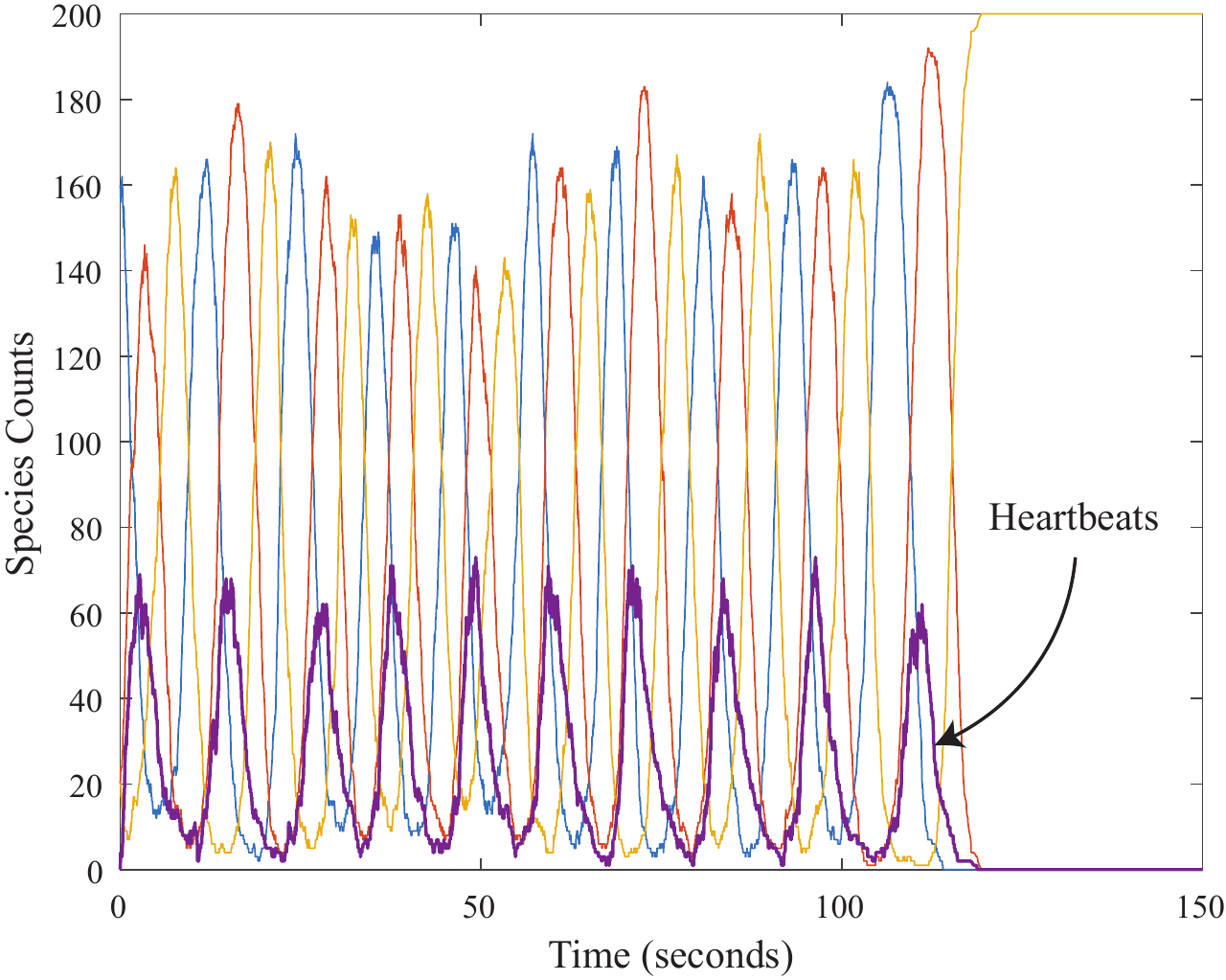}
		\caption{A simulation of the oscillator with the heartbeat interface. \label{oscHBsFigure}}
	\end{center}
\end{figure}
Figure~\ref{oscHBsFigure} shows a simulation of the oscillator for the first failure mode, in which one of the species counts went to 0. Since no oscillations could occur, heartbeats ceased to be produced by the oscillator. 

\subsection{MWT triggers oscillator's runtime recovery}

A system is more robust if it can autonomously recover from a failure.  Beyond detecting and reporting the failure of the monitored system, we also sought to use the MWT's Alarm signal to trigger autonomous recovery in a monitored system after it has failed.  To demonstrate this, we constructed a recovery component that, upon receiving the MWT's Alarm, will recover the oscillator from either of  its failure modes.

The CRN for the oscillator's recovery module is: 
\begin{align}
D + A &\xrightarrow{k} D + B \\
D + B &\xrightarrow{k} D + C \\
D + C &\xrightarrow{k_2} D + A,
\end{align}
where the third reaction has a different rate from the other two. 

These reactions are triggered by the presence of the MWT Alarm component's output signal (the D's) produced when the MWT detects that the oscillator has failed. In both the case where the oscillator fails due to running out of the species A, B, or C, and the case where it fails because it reached an equilibrium state (an equal number of A's, B's and C's) such that heartbeats stops, these reactions will recover the oscillator .  The recovery ``jump-starts" the oscillator and, when the heartbeat starts up again, the Alarm signal (the D's) fade away.  To check the correct behavior of the recovery capability, we ran simulations with populations of 100 to 1000 molecules for the oscillator and varying percentages of D molecules.   The MWT's Alarm signal correctly triggered the oscillator recovery.    

\begin{figure}
	\begin{center}
		\includegraphics[width=4.0in]{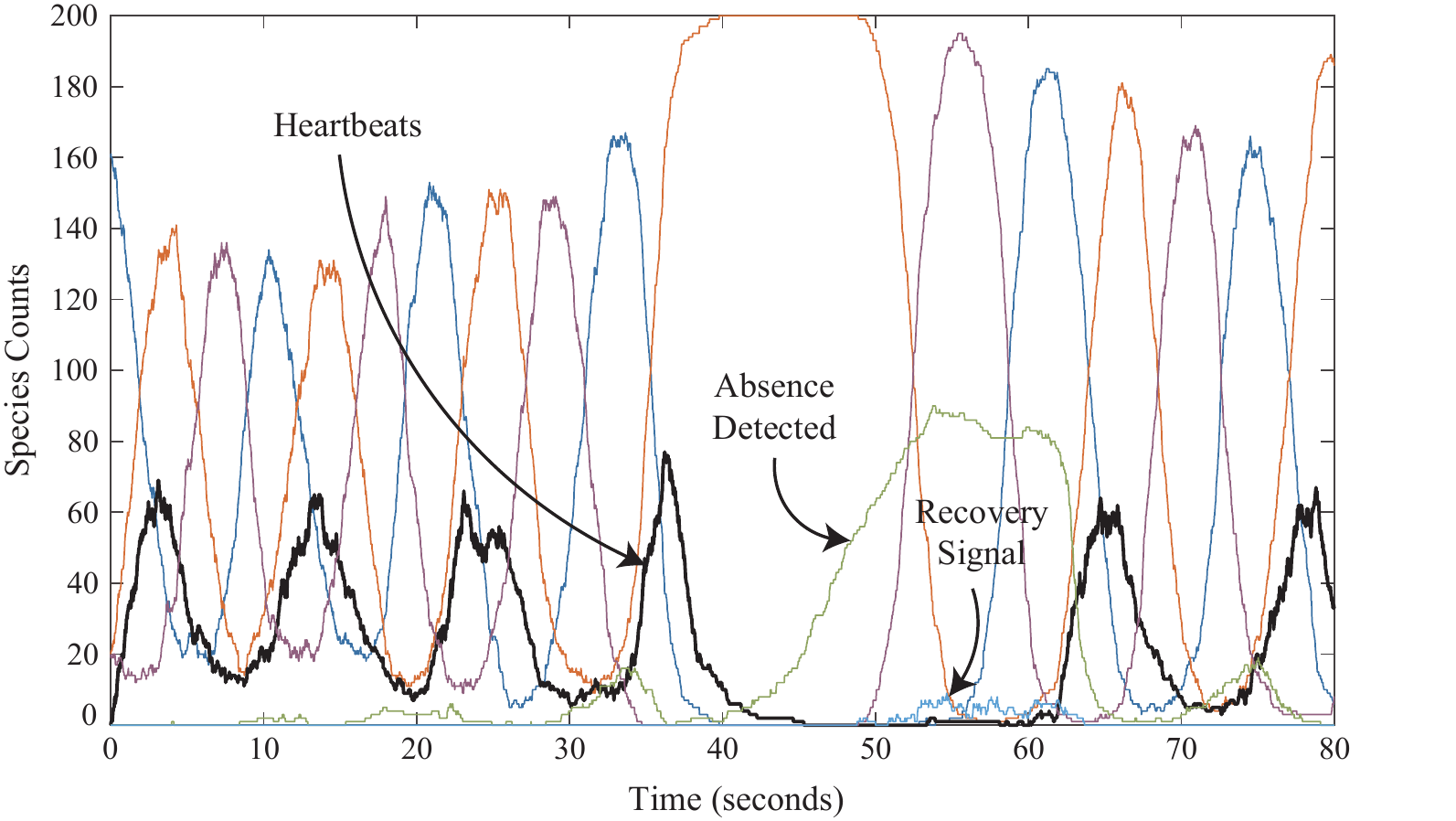}
		\caption{A simulation of the MWT detecting the oscillator's failure and triggering its recovery. \label{oscMWTFigure}}
	\end{center}
\end{figure}
Figure~\ref{oscMWTFigure} shows a single stochastic simulation of the composed system.  Initially, the oscillator works and produces heartbeats. During this time, the absence detected signal remains low. However, once the oscillator fails and the heartbeats stop, the absence is quickly detected and the recovery signal is released to trigger the oscillator to recover. Shortly after the recovery signal is turned on, the oscillator begins normal operation again.

{\it Generalizing to other applications.}  Usage of the MWT is intended to be broad.  The MWT is designed to work with any molecular system that needs to be monitored for the absence of a heartbeat, meaning that it is independent of how the heartbeats are input to it.  The oscillator provides an example of how, given a heartbeat interface from a client monitored system specified as a rate at which heartbeats are produced and in what quantity, we can construct a MWT to monitor it. 
To use our MWT to monitor a system, the client monitored system (here, the oscillator) describes what it needs the MWT to do by specifying a {\it polytope}, that is a multi-dimensional space, defined by four parameters. The four client-provided parameters are: 
$u$---the minimum time between a heartbeat and an alarm,
$v$---the maximum time between a heartbeat and an alarm, 
$\varepsilon$---the probability of error allowed by the $u$ delay,  and
$\delta$---the probability of error in achieving the $v$ delay. 
For example, a client might specify that the MWT should allow a minimum of 10 seconds and a maximum of 20 seconds after a heartbeat before an alarm, and it must achieve these time bounds with probabilities of 95\% for both.  Thus, $u$ is 10, $v$ is 20, and $\varepsilon$ and $\delta$ are 5\%. 
The client also specifies the minimum and maximum size of the heartbeat pulse.  If the client gives us parameter values from within this space, we {\bf will provide} a design model for a MWT that satisfies the goal diagram.  

There are other, internally generated (rather than client-defined) parameters that formalize the goal diagram's constraints.  For example, $\varepsilon$ is broken into two internally generated parameters to enable the goal proofs.  Thus, in addition to the four client-provided parameters, there are 22 internally generated parameters specifying probability and timing constraints.  These are listed in the Appendix.   

\subsection{Mapping to molecules}

We have designed and verified our MWT at the CRN level of abstraction, but it is useful to estimate the feasibility of actually implementing our design in DNA.  For this purpose we used the compiler reported in a very recent paper by Badelt et al. \cite{cBSJDTW17}.  Using this compiler and its encoding of the translation scheme of Chen et al. \cite{jCDSPCS13}, we compiled the CRN for a small MWT and oscillator into DNA strands.  The MWT has a 4-rung Absence Detector ladder and a 5-rung Threshold Detector ladder and includes the three recovery reactions.  As reported in the Appendix, this MWT alone compiles to 122 distinct DNA strands.  The combination of this MWT with the CRN for the three-phase oscillator (modified to include heartbeat production) compiles to 147 distinct DNA strands.  DNA strand displacement systems of this size are already feasible for laboratory implementation.  For example, Qian and Winfree \cite{jQiaWin11a} reported the successful implementation of a 130-strand DNA displacement device.  Larger MWTs (ones with longer stochastic delay ladders) will be feasible in the near future. 

\subsection{Discussion}
To summarize, the MWT is a programmable, molecular safety mechanism.  In the creation of the design for the MWT, this paper has proposed a new use of software engineering techniques and tools in a development process that can be applied more generally to create other molecular systems.   The goal-oriented requirements engineering process described in Sect. 2 systematically develops the requirements for a molecular programmed system, using continuous stochastic logic (CSL) to specify the requirements (leaf goals) and to verify the goal refinement with machine-checkable proofs.  The assignment of  requirements to the components' designs responsible for achieving them is described in Sect. 3.   The designs are formally specified as stochastic chemical reaction networks (CRNs). CRNs recently have become widely used to specify programmed molecular systems since compilers now exist from CRNs into lower-level (DNA-strand-level) designs and from there into molecules.   Modeling the designs as CRNs supports both simulation and verification, and both MATLAB's SimBiology and the PRISM probabilistic model checker accept CRN input.  This section has described the simulation and verification of the MWT, first as a standalone device; second, when connected to a device, such as the oscillator, that needs to be monitored for a failure event; and third, when connected to a device that needs to be triggered by the MWT to also recover from its failure event at runtime.

\section{Related Work}\label{relatedWork}

In this section we briefly describe additional related work in molecular design software, model checking, and molecular oscillators.

{\it Molecular design software.}  Other design tools for DNA computational devices exist but operate at a very detailed design level.  Two open-source software tools are CaDNAno \cite{jDMTVCS09} and CANDO \cite{jKKDB12}.   They are widely used to design, debug and optimize the stability and physical properties (torque, flexibility, energy wells) of 2-dimensional and 3-dimensional DNA origami structures, and operate at a much lower level than our design considerations here.  

{\it Model checking}. In related work Kwiatkowska and Thachuk described the probabilistic verification of CRNs for biological systems using the probabilistic model checker PRISM \cite{jKwiTha14}. Their work showed the benefits of probabilistic model checking for molecular systems and informed our work for the MWT.  
PRISM interfaces with Visual DSD, a design tool for DNA strand displacement \cite {jLYPEP11}.  

For large systems, including molecular ones, there is a disconnect between the size of the model that can be automatically checked and the system.  One of the problems we face is how to prune the model such that we can do meaningful model checking.   Pavese, Barberman and Uchitel described how to develop partial explorations of a system model automatically \cite{jPaBrUc16}. Their technique has promise for use in molecular programs that we hope to explore. Since many molecular programs deal with extremely large, if not infinite, state spaces, probabilistic model checking on partial system explorations might provide bounds on the reliability of a molecular system that is too large to model check.   

Stochastic models for systems often have distributions that are empirically determined or only partially known.   Moreover, small differences between stochastic models' parameter values and their real-world counterparts can change the results of verification.  Meedeniya et al. have used Monte Carlo simulations to generate a reliability evaluation of a probabilistic model of an antilock brake system with uncertain parameters \cite{jMMAG14}.  Su, Chen, Feng, and Rosenblum recently extended previous work by Su and Rosenblum \cite{oSuRos13} on perturbations in model checking parameters in discrete-time Markov chains to allow model checking on time-bounded CTMCs with imprecise values for transition rates \cite{cSCFR17}. Since molecular systems  have imprecise reaction rates, determining the effects of parameter variance on the models is important, and the applicability of their approaches to programmed molecular systems merits investigation.

More broadly, there has been significant recent progress in modeling biological or chemical systems. Yordanov et al. formalized and encoded DNA computing to allow use of Satisfiability Modulo Theories (SMT) \cite{ jYWHK13}. Fisher, Harel and Henzinger performed computational modeling of biological systems as reactive systems \cite{jFiHaHe11}.  Hetherington et al. and Sumner et al. composed an advanced computational model of a biological system from sub-models describing its different aspects \cite{jHSSLRY12, jSHSLRY12}.   David et al. created translators to convert SimBiology models for biological systems into CTMCs for stochastic model checking or into ODEs for simulation \cite{jDLLMPS15}. 

{\it Molecular oscillators}. Hori and Murray, in a recent paper on synthetic biochemical oscillators, stated that, ``The reliable engineering of oscillators is an important milestone towards robust synthesis of more complex dynamical circuits in synthetic biology'' \cite{cHorMur15}.  Gene regulatory networks, for example, use oscillators, and Fern et al. recently reported the use of timer circuits to precisely coordinate chemical events in vitro \cite{jFSCHPS17}.  3-phase oscillators seem to have been first reported in \cite{oLacSel95} and more recently \cite{oCard06,jCard13, jLYCP12}.    The 2-phase Lotka-Volterra oscillator also has been studied in the context of DNA strand displacement in \cite{jSoSeWi10,  jLYCP12}.  Ballarini, Mardare and Mura, and Ballarini and Guerriero presented analyses of the 3-phase oscillator using PRISM and described both of the failures modes that our MWT design successfully detects \cite{jBaMaMu09, jBalGue10}.  

\section{Conclusion}\label{sec:conclusion}
Monitoring the health of programmed molecular systems at runtime is critically important. Envisioned applications such as biocompatible diagnostic systems and smart drug therapy systems will need such monitoring capabilities to operate safely.  Using goal-oriented requirements engineering, machine-checked proofs, reaction network modeling, stochastic simulation, and probabilistic model checking, we have designed and verified a Molecular Watchdog Timer that can monitor a molecular system at runtime, detect when the heartbeat signal from the monitored system stops, and alarm to trigger its recovery.    The MWT is modular, designed to operate correctly in the probabilistic chemical environment, and robust to failure-prone components.   Using chemical reaction networks as a programming language, we have implemented both the MWT and a monitored system (a molecular oscillator) as chemical reaction networks.   We have demonstrated the MWT's capabilities by showing that the molecular watchdog timer reliably detects failures of the oscillator and triggers its recovery at runtime.   

Many other programmed molecular systems will be needed and developed in the future.  The MWT is an example of a cybermolecular system, a molecular programmed system that senses and controls its environment, including other molecular systems. Cybermolecular systems and bio-compatible computing devices are moving rapidly from the laboratory to widespread usage in daily life.  We hope that our software engineering-inspired approach to designing and verifying the molecular watchdog timer can assist in the future design of predictable and safe molecular systems.

\section*{Acknowledgments}
We thank Samik Basu, Gianfranco Ciardo, Anthony Finkelstein, Carlo Ghezzi, Marta Kwiatkowska, Axel van Lamsweerde, Paul Rothemund, and Erik Winfree for useful discussions. We thank Jeremy Avigad and Johannes H\"olzl for useful suggestions on theorem provers.   We thank the anonymous referees for remarks that improved the paper.   

\bibliographystyle{plain}

\appendix

\begin{center}
	\huge
	Appendix
\end{center}

\vspace*{1em}

\section{Internal Parameters of the Goals}
Here we list the internal parameters generated during the goal refinement.
\begin{itemize}
	\item $\epsilon_1$ and $\epsilon_2$ are refinements from $\epsilon$ that determine the allowed error in avoiding $Alarms$ while heartbeats are present.
	\item $w_a$ is a time bound on turning on the alarm.
	\item $\alpha$ and $\beta$ are allowed error in detecting the presence of heartbeats.
	\item $w_h$ is the maximum time to detect the presence or absence of a heartbeat.
	\item $\epsilon_1'$ and $\epsilon_2'$ are allowed error in avoiding $Alarms$ while a heartbeat is detected.
	\item $g$ is the time allowed between detecting a heartbeat and keeping the $Alarm$ off.
	\item $\delta_1'$ is the allowed error in initiating an $Alarm$ when no heartbeat is detected.
	\item $\lambda_1$ is the allowed error in $Resetting$ when a heartbeat is detected.
	\item $w_{on}$ is the maximum time to $Reset$ when a heartbeat is detected.
	\item $\gamma_1$ is the allowed error in setting the threshold to low when $Reset$ is true.
	\item $\eta_1, \eta_2,$ and $\eta_3$ are allowed errors in setting the threshold to high until a heartbeat is detected from a time when no heartbeat is detected.
	\item $w_{th}$ is a time bound on how long it takes to set the threshold to high.
	\item $\lambda_2$ and $\lambda_3$ are allowed errors in avoiding $Alarms$ while the threshold is low.
	\item $w_{off}$ is the maximum time allowed between a low threshold and keeping the $Alarm$ off.
	\item $\eta_4$ is the allowed error in turning the $Alarm$ on after the threshold is high.
	\item $\gamma_2$ is the allowed error that the $Alarm$ is off at least until the first time that the threshold is not low.

\end{itemize}

\section{Formal Goal Specification and Agent Assignment}
The following table provides a breadth first listing of the goals for the Molecular Watchdog Timer (MWT) in Figure~2. It lists the goal description, the formal CSL specification, and, for leaf nodes, the agent assigned responsibility, where AD and TF represent the Absence Detector and Threshold Filter respectively.
\newpage
\begin{center}
\begin{tabular}[t]{p{3cm}p{7cm}p{1cm}} \toprule
	\textbf{Goal} & \textbf{CSL Specification} & \textbf{Agent}\\ \midrule
	
	\vspace*{-40pt}
	\begin{flushleft}
		\textbf{ACHIEVE:} \\
		Alarm iff no Heartbeat provided within $t$ time
	\end{flushleft}&
	\begin{minipage}{7cm}
		$
		\begin{aligned}
		&\prob{\ge 1-\epsilon}\Box_{\le u}\lnot Alarm\;\land \\
		&\prob{\ge 1}\Box\left[ H_{pres}\implies \right. \\
		& \left.\prob{\ge 1-\epsilon_1}\Diamond_{\le g}\left( \prob{\ge 1-\epsilon_2}\Box_{\le u}\lnot Alarm \right) \right]\;\land\\
		&\prob{\ge 1}\Box\left[ \lnot H_{pres}\implies \right.\\
		& \left.\prob{\ge 1-\delta_1}\Diamond_{\le v-w_a}\left( Alarm\lor H_{pres} \right) \right]
		\end{aligned}
		$
	\end{minipage}
	&  \\ \midrule
	\vspace*{-32pt}
	\begin{flushleft}
		\textbf{ACHIEVE:} \\
		Heartbeat Detected correctly tracks the presence of Heartbeats within $t-t'$ time
	\end{flushleft}&
	\begin{minipage}{7cm}
		$
		\begin{aligned}
		& \prob{\ge 1} \Box\left[H_{pres}\implies\right.\\
		& \left.  \prob{\ge 1-\beta}\Diamond_{\le w_h}H_{det}\right]\;\land\\
		&\prob{\ge 1}\Box\left[\lnot H_{pres}\implies \right. \\
		& \left. \prob{\ge 1-\beta}\Diamond_{w_h}\prob{\ge 1-\alpha}\left(\lnot H_{det}\W H_{pres}\right)\right]
		\end{aligned}
		$
	\end{minipage}
	&  \\ \midrule
	\vspace*{-40pt}
	\begin{flushleft}
		\textbf{ACHIEVE:} \\
		Alarm iff no Heartbeat detected within $t'$ time.
	\end{flushleft}
	&
	\begin{minipage}{7cm}
		$
		\begin{aligned}
		&\prob{\ge 1-\epsilon}\Box_{\le u}\lnot Alarm\;\land \\
		&\prob{\ge 1}\Box\left[ H_{det}\implies \right. \\
		& \left.\prob{\ge 1-\epsilon_1'}\Diamond_{\le g}\left( \prob{\ge 1-\epsilon_2'}\Box_{\le u}\lnot Alarm \right) \right]\;\land\\
		&\prob{\ge 1}\Box\left[ \lnot H_{det}\implies \right. \\
		& \left.\prob{\ge 1-\delta_1'}\Diamond_{\le v-w_a}\left( Alarm\lor H_{pres} \right) \right]
		\end{aligned}
		$
	\end{minipage}
	&  \\ \midrule
	\vspace*{-15pt}
	\begin{flushleft}
		\textbf{AVOID:} \\
		Heartbeat Detected when Heartbeat not present
	\end{flushleft} &
	\begin{minipage}{7cm}
		$
		\begin{aligned}
		&\prob{\ge 1}\Box\left[\lnot H_{pres}\implies \right.\\
		&\left.\prob{\ge 1-\beta}\Diamond_{w_h}\prob{\ge 1-\alpha}\left(\lnot H_{det}\W H_{pres}\right)\right] 
		\end{aligned}
		$
	\end{minipage}
	& AD \\ \midrule
	\vspace*{-6pt}
	\begin{flushleft}
		\textbf{ACHIEVE:} \\
		Heartbeat Detected when Heartbeat present
	\end{flushleft}
	&
	\begin{minipage}{7cm}
		$
		\begin{aligned}
		&\prob{\ge 1}\Box\left[H_{pres}\implies \prob{\ge 1-\beta}\Diamond_{\le w_h}H_{det }\right] 
		\end{aligned}
		$
	\end{minipage}
	& AD \\ \midrule
	\vspace*{-15pt}
	\begin{flushleft}
		\textbf{ACHIEVE:} \\
		Correct Timer Reset
	\end{flushleft}
	&
	\begin{minipage}{7cm}
		$
		\begin{aligned}
		&Reset\;\land\\
		&\prob{\ge 1}\Box\left[H_{det}\implies \prob{\ge 1-\lambda_1}\Diamond_{\le w_{on}} Reset \right]
		\end{aligned}
		$
	\end{minipage}
	&  \\ \midrule

\end{tabular}
\end{center}
\newpage
\begin{center}
\begin{tabular}[t]{p{3cm}p{7cm}p{1cm}} \toprule
	\textbf{Goal} & \textbf{CSL Specification} & \textbf{Agent}\\ \midrule
			\vspace*{-30pt}
	\begin{flushleft}
		\textbf{ACHIEVE:} \\
		Correct Delay
	\end{flushleft}
	&
	\begin{minipage}{7cm}
		$
		\begin{aligned}
		&\prob{\ge 1}\Box\left[Reset\implies \prob{\ge 1-\gamma_1}\Box_{\le u}Th_L \right]\;\land\\
		&Th_L\implies \lnot Th_H\\
		&\prob{\ge 1}\Box[\lnot H_{det}\implies \prob{\ge 1-\eta_1}\Diamond_{v-w_a-2*w_h-w_{th}}\prob{\ge 1-\eta_2}\\
		&\quad\left(Th_H\W \prob{\ge 1-\eta_3}\Diamond_{\le wh}H_{det}\right)]
		\end{aligned}
		$
	\end{minipage}
	&  \\ \midrule
	\vspace*{-30pt}
	\begin{flushleft}
		\textbf{ACHIEVE:} \\
		Alarm iff Threshold met
	\end{flushleft}
	&
	\begin{minipage}{7cm}
		$
		\begin{aligned}
		&\prob{\ge 1}\Box\left[Th_L\implies\right.\\
		&\left.\prob{\ge 1-\lambda_2}\Diamond_{\le w_{off}} \prob{\ge 1-\lambda_3}\Box_{\le u}\lnot Alarm\right]\;\land\\
		&\prob{\ge 1}\Box\left[Th_H\implies\right.\\
		&\left. \prob{\ge 1-\eta_4}\Diamond_{\le w_{th}}\left(Alarm\;\lor\;\lnot Th_H\right)\right]\;\land\\
		&\prob{\ge 1-\gamma_2}\left(\lnot Alarm\W \lnot Th_L\right)
		\end{aligned}
		$
	\end{minipage}
	&  \\ \midrule
	\vspace*{-5pt}
	\begin{flushleft}
		\textbf{ACHIEVE:} \\
		Initialize to Reset
	\end{flushleft}
	&
	\begin{minipage}{7cm}
		$
		\begin{aligned}
		&Reset
		\end{aligned}
		$
	\end{minipage}
	& AD \\ \midrule
	\vspace*{-5pt}
	\begin{flushleft}
		\textbf{ACHIEVE:} \\
		Reset if Hdet
	\end{flushleft}
	&
	\begin{minipage}{7cm}
		$
		\begin{aligned}
		&\prob{\ge 1}\Box\left[H_{det}\implies \prob{\ge 1-\lambda_1}\Diamond_{\le w_{on}} Reset \right]
		\end{aligned}
		$
	\end{minipage}
	& AD \\ \midrule
	\vspace*{-5pt}
	\begin{flushleft}
		\textbf{ACHIEVE:} \\
		Threshold delay if Reset
	\end{flushleft}
	&
	\begin{minipage}{7cm}
		$
		\begin{aligned}
		&\prob{\ge 1}\Box\left[Reset\implies \prob{\ge 1-\gamma_1}\Box_{\le u}Th_L \right]
		\end{aligned}
		$
	\end{minipage}
	& AD \\ \midrule
	\vspace*{-14pt}
	\begin{flushleft}
		\textbf{ACHIEVE:} \\
		Threshold if Hdet is absent
	\end{flushleft}
	&
	\begin{minipage}{7cm}
		$
		\begin{aligned}
		&\prob{\ge 1}\Box[\lnot H_{det}\implies\\
		& \prob{\ge 1-\eta_1}\Diamond_{v-w_a-2w_h-w_{th}}\prob{\ge 1-\eta_2} \\
		&\quad\left(Th_H\W \prob{\ge 1-\eta_3}\Diamond_{\le wh}H_{det}\right)]
		\end{aligned}
		$
	\end{minipage}
	& AD \\ \midrule
	\vspace*{-16pt}
	\begin{flushleft}
		\textbf{AVOID:} \\
		Alarm if Reset
	\end{flushleft}
	&
	\begin{minipage}{7cm}
		$
		\begin{aligned}
		&\prob{\ge 1}\Box\left[Th_L\implies\right.\\
		&\left. \prob{\ge 1-\lambda_2}\Diamond_{\le w_{off}} \prob{\ge 1-\lambda_3}\Box_{\le u}\lnot Alarm\right]
		\end{aligned}
		$
	\end{minipage}
	& TF \\ \midrule
	\vspace*{-14pt}
	\begin{flushleft}
		\textbf{ACHIEVE:} \\
		Alarm if Threshold for some time
	\end{flushleft}
	&
	\begin{minipage}{7cm}
		$
		\begin{aligned}
		&\prob{\ge 1}\Box\left[Th_H\implies \right.\\
		&\left.\prob{\ge 1-\eta_4}\Diamond_{\le w_{th}}\left(Alarm\;\lor\;\lnot Th_H\right)\right]
		\end{aligned}
		$
	\end{minipage}
	& TF \\ \midrule
	\vspace*{-6pt}
	\begin{flushleft}
		\textbf{AVOID:} \\
		Alarm until first Threshold
	\end{flushleft}
	&
	\begin{minipage}{7cm}
		$
		\begin{aligned}
		&\prob{\ge 1-\gamma_2}\left(\lnot Alarm\W \lnot Th_L\right)
		\end{aligned}
		$
	\end{minipage}
	& TF
\end{tabular}
\end{center}

\section{Verification of Goal Model I: Theorems}
In this section and the following section we verify that the satisfaction of a goal can be fulfilled by the satisfaction of all its subgoals, and hence the entire goal tree is verified.

\subsection{Theorems}
The proof of our goal model correctness is broken into the following three main theorems, and each corresponds to one of the goal refinements in the model.
All of the theorems depend on the following constrains on the internal parameters of the model.
\begin{align*}
	(1-\epsilon_1)(1-\epsilon_2) &\le (1-\beta)(1-\epsilon_1')\\
	1-\epsilon_2 &\le 1-\epsilon_2'\\
	w_h &\le g\\
	1-\delta_1 &\le (1-\alpha)(1-\beta)(1-\delta_1')\\
	1-\epsilon &\le 1-\gamma_1 - \gamma_2\\
	(1-\epsilon_1')(1-\epsilon_2') &\le (1-\lambda_1)(1-\lambda_2)(1-\lambda_3)(1-\gamma_1)\\
	g-w_h &\ge w_{on} + w_{off}\\
	\gamma_1 &< 1\\
	1-\epsilon_1' &\le (1-\lambda_1)(1-\lambda_2)\\
	1-\epsilon_2' &\le 1-\lambda_3\\
	1-\delta_1' &\le (1-\eta_1)(1-\eta_2)(1-\eta_3)(1-\eta_4)
\end{align*}

\begin{theorem}
	The children of ``\textbf{ACHIEVE:} Alarm iff no HB provided in $t$ time''
	imply their parent where the parent specification is:
	\begin{align}
	    &\prob{\ge 1-\epsilon}\Box_{\le u}\lnot Alarm\;&\land \label{eq:1.1}\\
	    &\prob{\ge 1}\Box\left[ H_{pres}\implies \prob{\ge 1-\epsilon_1}\Diamond_{\le g}\left( \prob{\ge 1-\epsilon_2}\Box_{\le u}\lnot Alarm \right) \right]\;&\land\label{eq:1.2}\\
	    &\prob{\ge 1}\Box\left[ \lnot H_{pres}\implies \prob{\ge 1-\delta_1}\Diamond_{\le v-w_a}\left( Alarm\lor H_{pres} \right) \right]\label{eq:1.3}
	\end{align}
	and the specifications for the children are:
	\begin{description}
		\item[Subgoal 1:]
		\begin{align}
		    \label{eq:1.4}&\prob{\ge 1}\Box\left[H_{pres}\implies \prob{\ge 1-\beta}\Diamond_{\le w_h}H_{det} \right]\;&\land\\
		    \label{eq:1.5}&\prob{\ge 1}\Box\left[\lnot H_{pres}\implies \prob{\ge 1-\beta}\Diamond_{w_h}\prob{\ge 1-\alpha}\left(\lnot H_{det}\W H_{pres}\right)\right]
		\end{align}
		\item[Subgoal 2:]
		\begin{align}
		    \label{eq:1.6}&\prob{\ge 1-\epsilon}\Box_{\le u}\lnot Alarm\;&\land \\
		    \label{eq:1.7}&\prob{\ge 1}\Box\left[ H_{det}\implies \prob{\ge 1-\epsilon_1'}\Diamond_{\le g-w_h}\left( \prob{\ge 1-\epsilon_2'}\Box_{\le u}\lnot Alarm \right) \right]\;&\land\\
		    \label{eq:1.8}&\prob{\ge 1}\Box\left[ \lnot H_{det}\implies \prob{\ge 1-\delta_1'}\Diamond_{\le v-w_a-w_h}\left( Alarm\lor H_{det} \right) \right]
		\end{align}
	\end{description}
\end{theorem}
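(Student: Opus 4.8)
The plan is to establish the three conjuncts \eqref{eq:1.1}, \eqref{eq:1.2}, \eqref{eq:1.3} of the parent separately, assuming Subgoal~1, Subgoal~2, and the displayed parameter inequalities, and leaning on a short list of generic CSL lemmas: a \emph{sequential composition} rule for $\prob{\ge p}\Diamond_{\le t}$ (from $\prob{\ge a}\Diamond_{\le t_1}\phi$ and $\prob{\ge 1}\Box(\phi\implies\prob{\ge b}\Diamond_{\le t_2}\psi)$ derive $\prob{\ge ab}\Diamond_{\le t_1+t_2}\psi$); monotonicity of $\prob{\ge p}$, $\Diamond_{\le t}$ and $\Box_{\le t}$ in their probability and time parameters; lifting of a pointwise state implication through $\Box$, $\Diamond$ and $\W$; and a weak-until lemma capturing how $\lnot\chi\W\xi$ restricts events whose witnesses can occur only after $\xi$. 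Conjunct \eqref{eq:1.1} is literally \eqref{eq:1.6}, so no work is needed there.

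For \eqref{eq:1.2} I would argue inside $\prob{\ge 1}\Box[H_{pres}\implies\cdots]$ at an arbitrary $H_{pres}$-state. By \eqref{eq:1.4} an $H_{det}$-state is reached within time $w_h$ with probability at least $1-\beta$; the global box of \eqref{eq:1.7} says every $H_{det}$-state satisfies $\prob{\ge 1-\epsilon_1'}\Diamond_{\le g-w_h}(\prob{\ge 1-\epsilon_2'}\Box_{\le u}\lnot Alarm)$. Sequential composition then gives $\prob{\ge(1-\beta)(1-\epsilon_1')}\Diamond_{\le g}(\prob{\ge 1-\epsilon_2'}\Box_{\le u}\lnot Alarm)$, the time bound being $w_h+(g-w_h)=g$, which is where $w_h\le g$ is used. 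Weakening the inner probability via $1-\epsilon_2\le 1-\epsilon_2'$ and reconciling the outer bound with $1-\epsilon_1$ via $(1-\epsilon_1)(1-\epsilon_2)\le(1-\beta)(1-\epsilon_1')$ and monotonicity then produces \eqref{eq:1.2}; pinning down exactly which monotonicity step licenses this last comparison of two $\prob{\ge p}\Diamond_{\le t}(\prob{\ge q}\cdot)$ formulas is one of the spots where the Isabelle development is worth its cost.

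For \eqref{eq:1.3} I would work inside $\prob{\ge 1}\Box[\lnot H_{pres}\implies\cdots]$ at a $\lnot H_{pres}$-state. By \eqref{eq:1.5}, within time $w_h$ one reaches, with probability at least $1-\beta$, a state $s'$ satisfying $\prob{\ge 1-\alpha}(\lnot H_{det}\W H_{pres})$; since no path realizing $\lnot H_{det}\W H_{pres}$ makes $H_{det}$ true strictly before $H_{pres}$, the state $s'$ either falsifies $H_{det}$ — so \eqref{eq:1.8} applies and yields $\prob{\ge 1-\delta_1'}\Diamond_{\le v-w_a-w_h}(Alarm\lor H_{det})$ — or already satisfies $H_{pres}$, in which case the parent's inner obligation holds trivially. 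The crux is the path-level observation that, on any path satisfying $\lnot H_{det}\W H_{pres}$, a realization of $\Diamond_{\le T}(Alarm\lor H_{det})$ is also a realization of $\Diamond_{\le T}(Alarm\lor H_{pres})$: if the first witnessing time precedes the first time $H_{pres}$ holds then $H_{det}$ is false there and so $Alarm$ must hold, and otherwise $H_{pres}$ has occurred by time $T$. Lifting this through the probability operators, combining the weak-until slack $\alpha$, the slack $\delta_1'$ from \eqref{eq:1.8} and the slack $\beta$ from \eqref{eq:1.5} by sequential composition, and adding times $w_h+(v-w_a-w_h)=v-w_a$, gives $\prob{\ge 1-\delta_1}\Diamond_{\le v-w_a}(Alarm\lor H_{pres})$ by way of $1-\delta_1\le(1-\alpha)(1-\beta)(1-\delta_1')$.

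I expect \eqref{eq:1.3} to be the main obstacle, and within it, the weak-until reasoning: making the path-level implication between $Alarm\lor H_{det}$ and $Alarm\lor H_{pres}$ rigorous, then pushing it through two nested probability quantifiers when the $\W$-slack and the $\Diamond$-slack are not obviously independent. More generally, keeping every nested and combined probability bound consistent with the \emph{multiplicative} parameter inequalities — rather than settling for the weaker additive bounds a naive union-bound argument would yield — is the persistent difficulty across all three conjuncts, and is precisely why the paper isolates the generic CSL lemmas and discharges the remaining bookkeeping in Isabelle.
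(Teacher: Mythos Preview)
Your proposal is correct and mirrors the paper's proof (the Isabelle lemma \texttt{t3dot1}) almost exactly: conjunct~\eqref{eq:1.1} is copied from~\eqref{eq:1.6}; conjunct~\eqref{eq:1.2} is obtained by chaining~\eqref{eq:1.4} into~\eqref{eq:1.7} via the lifting lemma (paper's Lemma~3.10), collapsing the nested diamonds (paper's \texttt{diamcombine}, a corollary of Lemma~3.6), and relaxing; and conjunct~\eqref{eq:1.3} is obtained by conjoining the globally valid~\eqref{eq:1.8} inside the weak-until of~\eqref{eq:1.5} and then firing the weak-until lemma (paper's Lemma~3.7), which is precisely your ``crux'' observation. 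One small slip: for the outer relaxation in~\eqref{eq:1.2} you quote $(1-\epsilon_1)(1-\epsilon_2)\le(1-\beta)(1-\epsilon_1')$, but what is actually needed (and what the paper assumes as \texttt{constr1}) is the stronger $(1-\epsilon_1)\le(1-\beta)(1-\epsilon_1')$; your version would not by itself justify lowering the outer probability to $1-\epsilon_1$.
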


\begin{theorem}
	The children of ``\textbf{ACHIEVE:} Heartbeat Detected correctly tracks the presence of Heartbeats within $t-t'$ time'' imply their parent where the parent
	specification is:
	\begin{align*}
	    &\prob{\ge 1}\Box\left[H_{pres}\implies \prob{\ge 1-\beta}\Diamond_{\le w_h}H_{det}\right]\;&\land\\
	    &\prob{\ge 1}\Box\left[\lnot H_{pres}\implies \prob{\ge 1-\beta}\Diamond_{w_h}\prob{\ge 1-\alpha}\left(\lnot H_{det}\W H_{pres}\right)\right]
	\end{align*}
	and the specifications for the children are:
	\begin{description}
		\item[Subgoal 1:]
		\[ \prob{\ge 1}\Box\left[\lnot H_{pres}\implies \prob{\ge 1-\beta}\Diamond_{w_h}\prob{\ge 1-\alpha}\left(\lnot H_{det}\W H_{pres}\right)\right] \]

		\item[Subgoal 2:]
		\[ \prob{\ge 1}\Box\left[H_{pres}\implies \prob{\ge 1-\beta}\Diamond_{\le w_h}H_{det} \right]\]
	\end{description}
\end{theorem}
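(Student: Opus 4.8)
The plan is to observe that this refinement is a pure ``split the conjunction'' step, so the theorem is essentially immediate. The parent specification is, syntactically, the conjunction of its two children: the first conjunct of the parent,
\[ \prob{\ge 1}\Box\left[H_{pres}\implies \prob{\ge 1-\beta}\Diamond_{\le w_h}H_{det}\right], \]
is verbatim Subgoal 2, and the second conjunct,
\[ \prob{\ge 1}\Box\left[\lnot H_{pres}\implies \prob{\ge 1-\beta}\Diamond_{w_h}\prob{\ge 1-\alpha}\left(\lnot H_{det}\W H_{pres}\right)\right], \]
is verbatim Subgoal 1. Hence $(\text{Subgoal 1}\land\text{Subgoal 2})\implies\text{Parent}$ is just an instance of the propositional tautology $p\land q\to q\land p$; in fact the parent and the conjunction of the children are logically equivalent, so nothing is lost in the refinement.

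First I would unfold the paper's definition of conjunction of CSL state formulae, $\phi\land\psi=\lambda x.\ \phi(x)\land\psi(x)$, so that the parent formula evaluated at an arbitrary CTMC state $x$ becomes the propositional conjunction of the two child formulae evaluated at $x$. After that unfolding the goal reduces, pointwise in $x$, to commutativity/associativity of $\land$, which Isabelle closes automatically (an \texttt{auto} or \texttt{blast} call suffices, with no appeal to the arithmetic parameter constraints listed before Theorem~1 and with no need for Sledgehammer). The only bookkeeping to get right is matching the order of the conjuncts in the parent to the subgoal labels (parent's first conjunct is Subgoal~2, parent's second is Subgoal~1) and making sure the lambda-encoded $\land$ is the one in scope.

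The ``main obstacle'' here is therefore essentially nonexistent — this is the trivial base case among the three refinement theorems. This stands in sharp contrast to Theorems~1 and~3, where genuine probabilistic reasoning is needed: there one must combine independent error bounds multiplicatively (e.g.\ deriving $1-\delta_1\le(1-\alpha)(1-\beta)(1-\delta_1')$-type conclusions from nested $\prob{}\Diamond$ operators) and add up timing bounds along cascaded ``weak until'' and ``eventually'' modalities. For Theorem~2 none of that machinery is invoked; the content is purely structural, reflecting that the KAOS AND-refinement of this node was chosen so that the two children are exactly the two conjuncts of the parent.
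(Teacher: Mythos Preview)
Your proposal is correct and matches the paper's approach exactly: the Isabelle proof of this theorem is the one-liner \texttt{show ?thesis using eqa eqb cand\_def by auto}, which just unfolds the lambda-encoded conjunction and discharges the goal automatically, with no use of the parameter constraints or the CSL lemmas. Your surrounding commentary about why this case is trivial compared to Theorems~1 and~3 is also accurate.
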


\begin{theorem}
	The children of ``\textbf{ACHIEVE:} Alarm iff no Heartbeat detected'' imply their parent
	where the parent specification is:
	\begin{align}
	    \label{eq:2.1}&\prob{\ge 1-\epsilon}\Box_{\le u}\lnot Alarm\;&\land \\
	    \label{eq:2.2}&\prob{\ge 1}\Box\left[ H_{det}\implies \prob{\ge 1-\epsilon_1'}\Diamond_{\le g-w_h}\left( \prob{\ge 1-\epsilon_2'}\Box_{\le u}\lnot Alarm \right) \right]\;&\land\\
	    \label{eq:2.3}&\prob{\ge 1}\Box\left[ \lnot H_{det}\implies \prob{\ge 1-\delta_1'}\Diamond_{\le v-w_a-w_h}\left( Alarm\lor H_{det} \right) \right]
	\end{align}
	and the specifications for the children are:
	\begin{description}
		\item[Subgoal 1:]
		\begin{align}
		    \label{eq:2.4}&Reset\;&\land\\
		    \label{eq:2.5}&\prob{\ge 1}\Box\left[H_{det}\implies \prob{\ge 1-\lambda_1}\Diamond_{\le w_{on}} Reset \right]
		\end{align}
		\item[Subgoal 2:]
		\begin{align}
		    \label{eq:2.6}&\prob{\ge 1}\Box\left[Reset\implies \prob{\ge 1-\gamma_1}\Box_{\le u}Th_L \right]\;&\land\\
		    &\begin{aligned}
			    \label{eq:2.8}\prob{\ge 1}\Box[\lnot H_{det}\implies&\prob{\ge 1-\eta_1}\Diamond_{v-w_a-2w_h-w_{th}}\\
			    &\prob{\ge 1-\eta_2}\left(Th_H\W \prob{\ge 1-\eta_3}\Diamond_{\le w_h}H_{det}\right)]
		    \end{aligned}
		\end{align}
		\item[Subgoal 3:]
		\begin{align}
		    \label{eq:2.9}&\prob{\ge 1}\Box\left[Th_L\implies\prob{\ge 1-\lambda_2}\Diamond_{\le w_{off}}\prob{\ge 1-\lambda_3}\Box_{\le u}\lnot Alarm\right]\;&\land\\
		    \label{eq:2.10}&\prob{\ge 1}\Box\left[Th_H\implies \prob{\ge 1-\eta_4}\Diamond_{\le w_{th}}\left(Alarm\;\lor\;\lnot Th_H\right)\right]\;&\land\\
		    \label{eq:2.11}&\prob{\ge 1-\gamma_2}\left(\lnot Alarm\W \lnot Th_L\right)
		\end{align}
	\end{description}
\end{theorem}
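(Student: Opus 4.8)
The plan is to prove the three parent conjuncts \eqref{eq:2.1}, \eqref{eq:2.2} and \eqref{eq:2.3} one at a time, each time instantiating the relevant subgoal formulas at the appropriate states and gluing them together with a small stock of parameter-free CSL facts: a \emph{sequential composition} rule $\prob{\ge p}\Diamond_{\le a}\,\prob{\ge q}\Diamond_{\le b}\,\phi \Rightarrow \prob{\ge pq}\Diamond_{\le a+b}\,\phi$ (the strong Markov property), a \emph{union bound} $\prob{\ge p}\phi \land \prob{\ge q}\psi \Rightarrow \prob{\ge p+q-1}(\phi\land\psi)$, and a weak-until manipulation rule discussed below. I will also use that all of the $\epsilon$-, $\gamma$-, $\lambda$-, $\eta$-type errors are $<1$; in particular, since \eqref{eq:2.6} holds at every reachable state and $\gamma_1<1$, a reachable $Reset$-state satisfies $\prob{\ge 1-\gamma_1}\Box_{\le u}Th_L$ with $1-\gamma_1>0$, which forces $Th_L$ to hold at that state. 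For \eqref{eq:2.1}: at the initial state $Reset$ holds by \eqref{eq:2.4}, so \eqref{eq:2.6} gives $\prob{\ge 1-\gamma_1}\Box_{\le u}Th_L$; the union bound with $\prob{\ge 1-\gamma_2}(\lnot Alarm\W\lnot Th_L)$ from \eqref{eq:2.11} yields an event of probability $\ge 1-\gamma_1-\gamma_2$ on which $Th_L$ holds throughout $[0,u]$ and $\lnot Alarm$ holds up to the first $\lnot Th_L$; since that first $\lnot Th_L$ cannot occur before time $u$, $\lnot Alarm$ holds on all of $[0,u]$. Thus $\prob{\ge 1-\gamma_1-\gamma_2}\Box_{\le u}\lnot Alarm$, and \eqref{eq:2.1} follows from the constraint $1-\epsilon\le 1-\gamma_1-\gamma_2$.

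For \eqref{eq:2.2}, fix a reachable state with $H_{det}$. By \eqref{eq:2.5}, with probability $\ge 1-\lambda_1$ a $Reset$-state — hence, by the observation above, a $Th_L$-state — is reached within time $w_{on}$; from such a state \eqref{eq:2.9} gives probability $\ge 1-\lambda_2$ of reaching, within a further $w_{off}$, a state satisfying $\prob{\ge 1-\lambda_3}\Box_{\le u}\lnot Alarm$. Sequential composition gives probability $\ge(1-\lambda_1)(1-\lambda_2)$ of reaching such a state within time $w_{on}+w_{off}$. The constraints $g-w_h\ge w_{on}+w_{off}$, $1-\epsilon_1'\le(1-\lambda_1)(1-\lambda_2)$ and $1-\epsilon_2'\le 1-\lambda_3$ then yield exactly \eqref{eq:2.2}; the listed constraint carrying the extra factor $1-\gamma_1$ is more than enough here.

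For \eqref{eq:2.3}, fix a reachable state with $\lnot H_{det}$ and abbreviate $\phi:=\prob{\ge 1-\eta_3}\Diamond_{\le w_h}H_{det}$. By \eqref{eq:2.8}, with probability $\ge 1-\eta_1$ a state $s_1$ satisfying $\prob{\ge 1-\eta_2}(Th_H\W\phi)$ is reached within time $v-w_a-2w_h-w_{th}$, leaving a time budget of $w_h+w_{th}$. I claim $s_1$ satisfies $\prob{\ge p'}\Diamond_{\le w_h+w_{th}}(Alarm\lor H_{det})$ for a suitable $p'$. If $\phi$ holds at $s_1$ this is immediate from the definition of $\phi$. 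Otherwise every path witnessing the weak-until has $Th_H$ at $s_1$ (it cannot exit at time $0$), so $Th_H$ holds at $s_1$ and \eqref{eq:2.10} gives $\prob{\ge 1-\eta_4}\Diamond_{\le w_{th}}(Alarm\lor\lnot Th_H)$ there; intersecting with the weak-until event, on the intersection some $Alarm\lor\lnot Th_H$ state is hit at a time $t_1\le w_{th}$, and either it is an $Alarm$-state (done, within $w_{th}$) or $\lnot Th_H$ at $t_1$ forces the weak-until to exit at some $t^*\le t_1\le w_{th}$ with $\phi$ holding at that exit state, from which an $H_{det}$-state is reached within a further $w_h$ with probability $\ge 1-\eta_3$. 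Prepending the $\eta_1$-step and using $(v-w_a-2w_h-w_{th})+(w_h+w_{th})=v-w_a-w_h$ gives \eqref{eq:2.3}, provided $p'$ combined with $1-\eta_1$ is at least $1-\delta_1'$; the constraint $1-\delta_1'\le(1-\eta_1)(1-\eta_2)(1-\eta_3)(1-\eta_4)$ is designed to close exactly this gap.

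The main obstacle is the probability bookkeeping in \eqref{eq:2.3}. Unlike \eqref{eq:2.1} and \eqref{eq:2.2}, where the composition is a clean chain, here the weak-until $Th_H\W\phi$ only guarantees that $\phi$ eventually holds, with no a priori time bound, and a bound on the exit time $t^*$ must be recovered probabilistically from \eqref{eq:2.10}. Doing this so that the errors $\eta_2,\eta_3,\eta_4$ enter \emph{multiplicatively} — matching the product form of the constraint on $\delta_1'$ rather than merely additively — is the delicate step, and it is precisely the kind of higher-order CSL lemma the authors isolate and prove by hand before turning the remaining assembly over to Isabelle. A secondary, purely mechanical task is to check that every inequality invoked above is literally one of the eleven listed parameter constraints (or an immediate consequence of them).
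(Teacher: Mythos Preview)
Your proposal is correct and follows essentially the same route as the paper's proof. In particular, your argument for \eqref{eq:2.1} is exactly the paper's (their Lemma~3.8 is the union bound you invoke); your chain $H_{det}\to Reset\to Th_L\to \prob{\ge 1-\lambda_3}\Box_{\le u}\lnot Alarm$ for \eqref{eq:2.2}, using $\gamma_1<1$ to extract $Th_L$ immediately from $Reset$, matches the paper's use of Lemma~3.9 and constraints~7, 9, 10; and for \eqref{eq:2.3} your case analysis on whether $\phi$ holds at $s_1$ is precisely the content of the paper's Lemma~3.7 ($\prob{\ge\alpha}(\phi\land\prob{\ge\beta}\Diamond_{\le s}(\lnot\phi\lor\theta))\W\psi \Rightarrow \prob{\ge\alpha\beta}\Diamond_{\le s}(\psi\lor\theta)$), which they apply with $\phi=Th_H$, $\theta=Alarm$, $\psi=\prob{\ge1-\eta_3}\Diamond_{\le w_h}H_{det}$ to obtain the multiplicative $(1-\eta_2)(1-\eta_4)$ directly and then compose with $(1-\eta_1)$ and $(1-\eta_3)$ via their Lemma~3.6.
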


\begin{theorem}
	All parent goals of leaves are implied by their children.
\end{theorem}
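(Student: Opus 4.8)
The plan is to prove Theorem~4 as a finite case check over the goal refinements whose children are all leaves. Reading off Figure~\ref{goalfigure} and the Appendix table, these are \textbf{ACHIEVE: Correct Timer Reset}, \textbf{ACHIEVE: Correct Delay}, and \textbf{ACHIEVE: Alarm iff Threshold met}; the remaining parent-of-leaves node, \textbf{ACHIEVE: Heartbeat Detected correctly tracks the presence of Heartbeats}, has already been handled by the theorem on that refinement above, so it contributes nothing new. For each of the three remaining nodes I would line up the parent's CSL specification against the conjunction of its children's specifications and discharge the implication by purely propositional reasoning on CSL state formulas, using the encoding $\phi\land\psi=\lambda x.\ \phi(x)\land\psi(x)$ from Section~\ref{sec:requirements}.

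First, \textbf{Correct Timer Reset} has specification $Reset\;\land\;\prob{\ge 1}\Box[H_{det}\implies\prob{\ge 1-\lambda_1}\Diamond_{\le w_{on}}Reset]$, which is literally the conjunction of its two children \textbf{Initialize to Reset} ($Reset$) and \textbf{Reset if Hdet} ($\prob{\ge 1}\Box[H_{det}\implies\prob{\ge 1-\lambda_1}\Diamond_{\le w_{on}}Reset]$); the implication is thus immediate by $\land$-introduction. The same is true of \textbf{Alarm iff Threshold met}, which is verbatim the conjunction of its three children \textbf{Avoid: Alarm if Reset}, \textbf{Achieve: Alarm if Threshold for some time}, and \textbf{Avoid: Alarm until first Threshold}, so again nothing more than conjunction introduction is required.

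The one refinement that is not a bare conjunction is \textbf{Correct Delay}: its specification carries the extra conjunct $Th_L\implies\lnot Th_H$ beyond its children \textbf{Threshold delay if Reset} and \textbf{Threshold if Hdet is absent}. I would discharge this conjunct as a domain property of the Threshold Filter (Section~\ref{threshFilt}): because $Th_H$ holds precisely when the number of Absence Detectors in the $L_k$ state exceeds the fixed reset-catalyst population of the Threshold Filter, while $Th_L$ is disjoint from (indeed complementary to) that condition, the CSL state formula $Th_L\implies\lnot Th_H$ is valid at every CTMC state. Granted this fact, \textbf{Correct Delay} follows by $\land$-introduction from its two children together with the domain property.

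I expect the only genuine obstacle to be making the $Th_L\implies\lnot Th_H$ step precise within the Isabelle formalization: it is not a logical consequence of the children's CSL formulas, so it has to be supplied either as an explicit domain axiom or derived from the semantic definitions of $Th_L$ and $Th_H$ in terms of species counts, rather than extracted from the goal tree. Everything else is routine bookkeeping --- matching each parent formula syntactically against the conjunction of its children's formulas in the Appendix table. I would therefore record the three cases as three short lemmas and obtain Theorem~4 as their conjunction, together with the case for \textbf{Heartbeat Detected correctly tracks} established earlier.
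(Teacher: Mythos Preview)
Your plan matches the paper's: its entire proof of this theorem is the single sentence ``All leaf goals are broken down by conjunction and trivially imply their parents.'' So both of you reduce the theorem to $\land$-introduction over the children's formulas.

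You are, however, more careful than the paper on one point. You correctly notice that \textbf{Correct Delay} carries the conjunct $Th_L\implies\lnot Th_H$ beyond what its two leaf children supply, and you propose to discharge it as a domain property of the Threshold Filter. The paper's one-line proof does not single this out; it simply declares the refinement trivial. Consistently, in the statement of Theorem~3.3 the \textbf{Correct Delay} subgoal is recorded with only conjuncts~\eqref{eq:2.6} and~\eqref{eq:2.8}, and the $Th_L\implies\lnot Th_H$ clause never appears in the Isabelle hypotheses---it is effectively treated as a definitional fact about the state predicates $Th_L$ and $Th_H$ rather than something to be derived from the children. Your plan to make it an explicit domain axiom is therefore a reasonable tightening of what the paper leaves implicit; the rest of your argument is exactly the paper's.
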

\begin{proof}
	All leaf goals are broken down by conjunction and trivially imply their parents.
\end{proof}

\subsection{Lemmas}   

For each of the following lemmas, assume that $M$ is a CTMC and $q_0$ is its start state.  For all lemmas except lemma~\ref{lemma1}, we base each implication at an arbitrary state $q$.  For simplicity, we assume that every state in $M$ is reachable from the start state and note that any unreachable states can be removed without affecting the behavior of $M$.

\begin{lemma}\label{lemma1}
\[
\forall q'\ \phi(q') \Longleftrightarrow (\prob{\ge 1}\Box \phi)(q_0)
\]
\end{lemma}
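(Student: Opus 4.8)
The plan is to unfold the definition of the CSL operators $\boldsymbol{\mathcal{P}}$ and $\Box$ as they were formalized earlier: a $\boldsymbol{\mathcal{P}}_{\ge 1}\Box\phi$ formula holds at a state $q_0$ exactly when the probability measure of paths starting at $q_0$ along which $\phi$ holds at every state equals $1$. So the statement to prove is that \emph{every} state $q'$ of $M$ satisfies $\phi$ if and only if the set of infinite paths from $q_0$ that stay inside $\phi$-states has measure $1$. The forward direction is immediate: if $\phi$ holds at all states, then every path whatsoever lies in the $\Box\phi$ event, so that event is the entire path space and has measure $1$. The only content is the reverse direction.

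For the reverse direction I would argue by contraposition, using the standing assumption recorded just before the lemmas that every state of $M$ is reachable from the start state $q_0$. Suppose some state $q'$ fails $\phi$. Because $q'$ is reachable, there is a finite path $q_0 \to q_1 \to \cdots \to q_n = q'$ consisting of transitions that each have positive rate, hence positive probability in the embedded jump chain, and the holding times in each intermediate state are almost surely positive and finite. The cylinder set of all infinite paths extending this finite prefix therefore has strictly positive measure. Every path in that cylinder visits $q'$, a state where $\phi$ is false, so none of those paths lies in the $\Box\phi$ event. Hence the complement of the $\Box\phi$ event has positive measure, so the $\Box\phi$ event has measure strictly less than $1$, i.e.\ $(\prob{\ge 1}\Box\phi)(q_0)$ is false. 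Contraposition then gives the claim.

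The main obstacle — really the only subtle point — is justifying that a reachable state is reached with positive probability, i.e.\ that the finite prefix cylinder is non-null. This relies on the CTMC being built from a CRN with finitely many reactions and positive rate constants, so that the jump chain assigns positive probability to each enabled transition and the associated cylinder set is measurable with positive measure; one also wants the convention (again stated in the excerpt) that unreachable states have been discarded, so ``reachable'' is automatic. I would cite the standard Markov-chain measure-theoretic machinery (Anderson--Kurtz, Athreya--Lahiri) for the existence and positivity of these cylinder measures rather than reconstruct it. Everything else is a routine unfolding of the $\boldsymbol{\mathcal{P}}$ and $\Box$ definitions and an elementary complement argument on probabilities.
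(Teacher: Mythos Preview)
Your proposal is correct and follows the same approach as the paper: both rely on the standing assumption that every state is reachable from $q_0$. The paper dispatches the lemma in a single sentence (``This follows trivially from our assumption that all states are reachable''), whereas you have spelled out the positive-measure cylinder argument that underlies the reverse direction; your elaboration is sound and is exactly the content the paper is taking for granted.
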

\begin{proof}
This follows trivially from our assumption that all states are reachable.
\end{proof}

\begin{lemma}
\[
\infer{\therefore\prob{\ge\alpha\beta}\Diamond_{\le s+t}\left(\phi\lor\psi\right)}{
	\prob{\ge\alpha}\Diamond_{\le s}\left(\phi\lor\prob{\ge\beta}\Diamond_{\le t}\psi\right)
}
\]
\end{lemma}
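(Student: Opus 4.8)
The plan is to translate this inference rule into a statement about the path measures of the CTMC $M$ and then establish it by a gluing argument based on the strong Markov property. Recall that $\bigl(\prob{\ge\alpha}\Diamond_{\le s}\Phi\bigr)(q)$ means that the set of paths $\omega$ started at $q$ which occupy some $\Phi$-state at some time $\le s$ has probability at least $\alpha$. Write $\Phi$ for the component formula $\phi\lor\prob{\ge\beta}\Diamond_{\le t}\psi$ appearing in the hypothesis, let $S$ be the set of states satisfying $\Phi$ (a measurable set, since the state space is at most countable), and for a path $\omega$ from $q$ let $T=T(\omega)$ be the first time $\omega$ is in an $S$-state, with $T=\infty$ if this never occurs. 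Since sample paths are right-continuous step functions, $T$ is a stopping time that is attained (at $0$ or at a jump time) whenever it is finite, and the hitting state $\omega(T)$ is well defined on $\{T<\infty\}$. With $A=\{T\le s\}$, the hypothesis at $q$ is precisely $\Pr_q(A)\ge\alpha$.

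First I would bound the conditional probability of success given $A$ and the hitting data $(\omega(T),T)$. On $A$ we have $T\le s$, so $\omega(T)\in S$ means either $\phi$ holds at $\omega(T)$ — and then $\omega$ already satisfies $\Diamond_{\le s+t}(\phi\lor\psi)$ because $T\le s\le s+t$ — or $\prob{\ge\beta}\Diamond_{\le t}\psi$ holds at $\omega(T)$. In the latter case the strong Markov property says that, conditioned on $A$ and on $(\omega(T),T)$, the post-$T$ continuation of $\omega$ is distributed as a path started at $\omega(T)$; hence with conditional probability at least $\beta$ that continuation reaches a $\psi$-state within a further $t$ time units, and since $T+t\le s+t$ this makes $\omega$ satisfy $\Diamond_{\le s+t}(\phi\lor\psi)$. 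Writing $B$ for the event that $\omega$ satisfies $\Diamond_{\le s+t}(\phi\lor\psi)$, in either case $\Pr_q(B\mid A,\omega(T),T)\ge\beta$.

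Then I would integrate over $A$:
\[
\Pr_q(B)\;\ge\;\Pr_q(B\cap A)\;=\;\mathbb{E}_q\!\bigl[\mathbf{1}_A\,\Pr_q(B\mid A,\omega(T),T)\bigr]\;\ge\;\beta\,\Pr_q(A)\;\ge\;\alpha\beta,
\]
which is exactly $\bigl(\prob{\ge\alpha\beta}\Diamond_{\le s+t}(\phi\lor\psi)\bigr)(q)$, the conclusion of the rule at $q$.

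The main obstacle I expect is the measure-theoretic care needed in the gluing step: checking that $T$ is genuinely a stopping time, that the events $A$ and $B$ and the conditioning data are measurable, and that the strong Markov property applies so that ``restart at $\omega(T)$'' is legitimate. For a (non-explosive) countable-state CTMC these facts are standard \cite{oAthLah06}, and once they are granted the composition of the two time windows is trivial, since $A$ forces $T\le s$ and hence $T+t\le s+t$. A secondary point to watch is that the two disjuncts of $\Phi$ must be treated separately: hitting $\phi$ yields success immediately with no loss, whereas hitting $\prob{\ge\beta}\Diamond_{\le t}\psi$ is what costs the extra delay $t$ and the extra probability factor $\beta$ — this asymmetry is exactly why the rule is sound with the bound $\alpha\beta$ and the horizon $s+t$.
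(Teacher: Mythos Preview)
Your proposal is correct and follows essentially the same two-case argument as the paper: with probability at least $\alpha$ you hit a $\Phi$-state by time $s$, and from there either $\phi$ already holds or the Markov property gives a further $\beta$ chance of reaching $\psi$ within $t$ more time. The paper presents this informally without naming the stopping time or invoking the strong Markov property explicitly, so your write-up is a more rigorous version of the same idea rather than a different route.
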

\begin{proof}
With probability $\alpha$ at least one of two cases must occur.

\begin{itemize}
\item[]{\bf Case 1:}  $\phi$ is true within time $s$.

In this case, $\phi \lor \psi$ is certainly true within $s + t$ time.

\vspace*{6pt}
\item[]{\bf Case 2:} $\prob{\geq \beta} \Diamond_{\leq t} \psi$ is true in time $s$.

In this case, with probability $\beta$, $M$ reaches a state $q_1$ satisfying $\psi$ in time $t$.  Since $M$ reaches a state satisfying $\prob{\geq \beta} \Diamond_{\leq t} \psi$ in time $s$ and $\psi$ satisfies $\phi \lor \psi$, then with probability $\beta$ $M$ reaches a state satisfying $\phi \lor \psi$ in time $s + t$.
\end{itemize}
In either case $\phi \lor \psi$ must be true in time $s + t$ with probability greater or equal to $\beta$.  Since at least one case must occur with probability $\alpha$, we have $\prob{\geq \alpha \beta} \Diamond_{s+t} (\phi \lor \psi)$.
\end{proof}

\begin{lemma}
\[
\infer{\therefore\prob{\ge\alpha\beta}\Diamond_{\le s}(\psi\lor\theta)}{
	\prob{\ge\alpha}\left(\phi\land\prob{\ge\beta}\Diamond_{\le s}(\lnot\phi\lor\theta)\right)\W\psi
}
\]
\end{lemma}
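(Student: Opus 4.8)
The goal is to derive $\prob{\ge\alpha\beta}\Diamond_{\le s}(\psi\lor\theta)$ from the "until" hypothesis $\bigl(\phi\land\prob{\ge\beta}\Diamond_{\le s}(\lnot\phi\lor\theta)\bigr)\W\psi$. My plan is to unfold the semantics of the weak-until operator $\W$ and split into the two standard cases it affords. Fix the base state $q$ at which the hypothesis holds; by definition of $\W$, either (i) $\psi$ holds at some point along the computation and the formula $\phi\land\prob{\ge\beta}\Diamond_{\le s}(\lnot\phi\lor\theta)$ holds at every state strictly before that point, or (ii) $\phi\land\prob{\ge\beta}\Diamond_{\le s}(\lnot\phi\lor\theta)$ holds globally along the path and $\psi$ never occurs. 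In case (i), if $\psi$ already holds at $q$ we are immediately done (with probability $1\ge\alpha\beta$), so assume $\psi$ is reached strictly later; then in particular $\phi\land\prob{\ge\beta}\Diamond_{\le s}(\lnot\phi\lor\theta)$ holds at $q$ itself.

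Thus in either surviving subcase we know $q$ satisfies $\prob{\ge\beta}\Diamond_{\le s}(\lnot\phi\lor\theta)$, so with probability at least $\beta$ the CTMC reaches, within time $s$, a state $q'$ satisfying $\lnot\phi\lor\theta$. The key move is to dispatch these two disjuncts. If $q'\models\theta$, then $q'\models\psi\lor\theta$ and we have witnessed $\Diamond_{\le s}(\psi\lor\theta)$ on that trajectory. If instead $q'\models\lnot\phi$, then because the hypothesis says $\phi$ held at \emph{every} state strictly before the first occurrence of $\psi$ (case (i)) or at every state along the path (case (ii)), the state $q'$ — which falsifies $\phi$ — cannot lie strictly before the first $\psi$, so $\psi$ must already have occurred at or before $q'$; hence along that trajectory $\psi\lor\theta$ holds within time $s$. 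Either way, every trajectory in the probability-$\beta$ event reaching such a $q'$ is a trajectory satisfying $\Diamond_{\le s}(\psi\lor\theta)$.

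It remains to fold in the outer probability $\alpha$. The $\prob{\ge\alpha}(\cdots\W\psi)$ says the weak-until formula holds with probability at least $\alpha$; on that $\alpha$-measure set of trajectories the analysis above applies, and conditionally on each such trajectory the inner event "reach a $\lnot\phi\lor\theta$ state within $s$" — equivalently "$\Diamond_{\le s}(\psi\lor\theta)$" — has probability at least $\beta$ (using the $\prob{\ge\beta}\Diamond_{\le s}$ conjunct, which by the above holds at the base state of each such trajectory). Composing, the overall probability of $\Diamond_{\le s}(\psi\lor\theta)$ from $q$ is at least $\alpha\beta$, which is exactly $\prob{\ge\alpha\beta}\Diamond_{\le s}(\psi\lor\theta)$. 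This mirrors the compositional counting argument used in the preceding lemma, where independence/nesting of the two probabilistic layers is what licenses multiplying the bounds.

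The main obstacle I anticipate is the bookkeeping in the $\lnot\phi$ branch: one must argue carefully that a state falsifying $\phi$ forces $\psi$ to have \emph{already} happened, using the precise semantics of weak-until (that $\phi$ is required only strictly before the first $\psi$, and that in the "$\psi$ never happens" branch $\phi$ is required everywhere). Getting this case split exactly right — and making sure the "$\psi$ already holds at $q$" degenerate subcase is handled so that the $\prob{\ge\beta}$ conjunct isn't needed there — is the delicate part; the probabilistic composition itself is routine given the earlier lemma's template.
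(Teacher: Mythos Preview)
Your proposal is correct and follows essentially the same approach as the paper's proof: handle the trivial case $\psi(q)$, otherwise extract that $q$ satisfies $\prob{\ge\beta}\Diamond_{\le s}(\lnot\phi\lor\theta)$, then case-split on whether the reached state satisfies $\theta$ (done directly) or $\lnot\phi$ (forcing $\psi$ to have already occurred by the weak-until invariant), and finally compose the $\alpha$ and $\beta$ bounds. Your write-up is slightly more explicit in separating the two weak-until branches (eventual $\psi$ versus $\phi$ forever), but the paper's argument is the same underneath.
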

\begin{proof}
Recall that the statement $\prob{\ge\gamma}\phi_1\W\phi_2$ is satisfied if, with probability at least $\gamma$, the CTMC $M$ follows a path that satisfies $\phi_1$ in every state until it reaches a state that satisfies $\phi_2$.
This is a ``weak until'' operator, so it can be satisfied even if $\phi_2$ is never satisfied, as long as $\phi_1$ holds indefinitely.

Note that the claim trivially holds if $\psi$ is true in state $q$ since it will satisfy $\psi$ within $s$ time with probability 1.
If $\lnot\psi$ holds in state $q$, then with probability $\alpha$, the formulas $\phi$ and $\prob{\ge\beta}\Diamond_{\le s}(\lnot\phi\lor\theta)$ hold until $\psi$ holds.
In this case, the formula $\prob{\ge\beta}\Diamond_{\le s}(\lnot\phi\lor\theta)$ implies that, with probability $\beta$, either $\lnot\phi$ or $\theta$ will eventually hold within $s$ time.
If $\theta$ eventually holds, then $\Diamond_{\le s}(\psi\lor\theta)$ held with probability at least $\alpha\beta$.
If $\lnot\phi$ eventually holds, then $\psi$ must have been satisfied since $\phi$ must hold until $\psi$ is satisfied.
Thus, $\Diamond_{\le s}(\psi\lor\theta)$ held with probability at least $\alpha\beta$ in this case, also.
\end{proof}

\begin{lemma}
    \[
\infer{\therefore\prob{\ge\alpha+\beta-1}\Box_{\le t}\phi}{
\prob{\ge\alpha}\Box_{\le t}\theta
&\prob{\ge\beta}\phi\W\lnot\theta
}
\]
\end{lemma}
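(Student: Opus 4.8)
The plan is to prove this with a single union-bound (Bonferroni) argument on the path space of $M$ issuing from the fixed state $q$, exactly in the spirit of the preceding lemmas. Let $E_\theta$ be the event of all paths from $q$ along which every state reached within time $t$ satisfies $\theta$; the first hypothesis says $\Pr_q(E_\theta)\ge\alpha$. Let $E_{\mathcal W}$ be the event of all paths from $q$ satisfying $\phi\W\lnot\theta$, i.e.\ $\phi$ holds at every state strictly before the first state satisfying $\lnot\theta$, and at every state if no such state ever occurs; the second hypothesis says $\Pr_q(E_{\mathcal W})\ge\beta$.

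The key observation is that on $E_\theta\cap E_{\mathcal W}$ the formula $\Box_{\le t}\phi$ holds. Fix a path $\pi$ in this intersection and any state $q'$ it reaches within time $t$. Since $\pi\in E_\theta$, every state $\pi$ reaches up to and including $q'$ satisfies $\theta$; hence no $\lnot\theta$-state has occurred by $q'$, so $q'$ precedes the first $\lnot\theta$-state of $\pi$ (or $\pi$ has none). Either way, membership in $E_{\mathcal W}$ forces $\phi$ to hold at $q'$. As $q'$ ranges over all states reached within time $t$, this gives $\pi\models\Box_{\le t}\phi$. Note this is precisely the situation the weak-until semantics is designed for: when $\theta$ persists past $t$, the releasing $\lnot\theta$-state never materializes, yet $\phi\W\lnot\theta$ still delivers $\phi$ throughout.

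Finally I bound the probability of the good event. By inclusion--exclusion, $\Pr_q(E_\theta\cap E_{\mathcal W})=\Pr_q(E_\theta)+\Pr_q(E_{\mathcal W})-\Pr_q(E_\theta\cup E_{\mathcal W})\ge\alpha+\beta-1$, since $\Pr_q(E_\theta\cup E_{\mathcal W})\le 1$. Combined with the previous paragraph, $\Pr_q(\Box_{\le t}\phi)\ge\Pr_q(E_\theta\cap E_{\mathcal W})\ge\alpha+\beta-1$, which is exactly $(\prob{\ge\alpha+\beta-1}\Box_{\le t}\phi)(q)$. The argument is essentially routine; the only points needing care are the measurability of $E_\theta$ and $E_{\mathcal W}$ in the cylinder $\sigma$-algebra on paths (standard for CSL path formulas over CTMCs) and pinning down the exact reading of $\W$ used in the earlier lemmas --- but that reading is what makes the intersection step go through cleanly, so I expect no genuine obstacle here.
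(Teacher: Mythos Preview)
Your proof is correct and follows essentially the same approach as the paper: both argue that the intersection of the two path events has probability at least $\alpha+\beta-1$ by the union bound, and that any path in this intersection satisfies $\Box_{\le t}\phi$ because $\theta$ holds throughout $[0,t]$ while $\phi\W\lnot\theta$ forces $\phi$ to hold until $\lnot\theta$ first appears. Your version is somewhat more explicit about the semantics of $\W$ and the measurability side conditions, but the structure is identical.
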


\begin{proof}
    Since the two hypotheses may be dependent, the probability that the CTMC $M$ follows a path that satisfies both $\Box_{\le t}\theta$ and $\phi\W\lnot\theta$ is lower-bounded by $\alpha+\beta-1$. It is therefore sufficient to prove that any path satisfying both $\Box_{\le t} \theta$ and $\phi\W\lnot\theta$ must also satisfy $\Box_{\le t} \phi$.

Consider an arbitrary path that satisfies both formulas.  Because of $\Box_{\le t} \theta$, the path cannot reach a state that satisfies $\lnot \theta$ until after time $t$.  Because of $\phi\W\lnot\theta$, the path must satisfy $\phi$ until it satisfies $\lnot \theta$.  Therefore the path must satisfy $\phi$ until after time $t$.

\end{proof}

\begin{lemma}
\[
\infer{\therefore\phi}{
	\prob{>0}\Box\phi
}
\]
\end{lemma}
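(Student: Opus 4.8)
The plan is to unwind the CSL semantics of $\prob{>0}$ and of the ``always'' operator $\Box$ directly, with no recourse to any structural feature of the CTMC. Fix an arbitrary state $q$ of $M$ and assume that $(\prob{>0}\Box\phi)(q)$ holds. By definition this asserts that the probability measure of the set $\Pi$ of paths of $M$ that begin at $q$ and satisfy $\Box\phi$ is strictly positive.

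First I would observe that a set of positive measure is in particular nonempty, so I may choose some path $\pi\in\Pi$. By construction the first state of $\pi$ is $q$, and $\pi\models\Box\phi$ means, by the semantics of $\Box$, that the state formula $\phi$ holds in every state appearing along $\pi$. Instantiating this at the initial state of $\pi$ yields $\phi(q)$, which is exactly the desired conclusion; since $q$ was arbitrary, the implication holds at every state.

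The hard part is essentially nonexistent here: the only nontrivial-looking move, passing from ``positive probability'' to ``a witnessing path exists,'' is immediate because the empty set has measure zero, so no reachability or extra measurability hypotheses about $M$ are needed beyond the existence of the usual path measure. Intuitively the lemma just records that the present state is the first state of every path issuing from it, so anything that holds \emph{globally} along \emph{some} possible future must in particular hold \emph{now}; this is what later lets us recover a bare state formula (such as $Reset$) from a statement asserting only that such a formula can persist from the current state.
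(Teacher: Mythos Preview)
Your proof is correct and is essentially the same argument as the paper's: both hinge on the observation that any path from $q$ has $q$ as its first state, so $\Box\phi$ along such a path forces $\phi(q)$. The paper phrases this contrapositively (if $\phi(q)$ fails then every path fails $\Box\phi$, giving probability $0$), while you argue directly by extracting a witnessing path from the positive-measure set; these are logically equivalent presentations of the same idea.
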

\begin{proof}
If $\phi$ is not true in state $q$, then $\Box\phi$ fails with probability 1.
Since there is a strictly positive probability that $\Box\phi$ is satisfied, $\phi$ must be true initially.
\end{proof}

\begin{lemma}
\[
\infer{\therefore\prob{\ge\alpha}\Diamond_{\le t}\psi}{
	\prob{\ge\alpha}\Diamond_{\le t}\phi & \forall q'\ (\phi \implies \psi)(q')
}
\]
\end{lemma}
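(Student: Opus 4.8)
The plan is to reduce the statement to monotonicity of the probability measure on paths. The lemma is based at an arbitrary state $q$ of the CTMC $M$, and it asserts a monotonicity principle for the bounded eventually operator: if almost-$\alpha$-many paths from $q$ reach a $\phi$-state within time $t$, and $\phi$ entails $\psi$ everywhere, then at least $\alpha$-many paths from $q$ reach a $\psi$-state within time $t$. First I would recall the path semantics of $\Diamond_{\le t}$: a timed path $\sigma$ of $M$ starting in $q$ satisfies $\Diamond_{\le t}\phi$ iff there exists $t'\le t$ such that the state $\sigma(t')$ occupied at time $t'$ satisfies $\phi$.

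The core step is a containment of path sets. Fix any path $\sigma$ from $q$ with $\sigma\models\Diamond_{\le t}\phi$, and let $t'\le t$ be a witnessing time, so $\sigma(t')\models\phi$. The second hypothesis says $(\phi\implies\psi)(q')$ holds for \emph{every} state $q'$ of $M$; instantiating at $q'=\sigma(t')$ gives $\sigma(t')\models\psi$. Hence the same $t'$ witnesses $\Diamond_{\le t}\psi$ for $\sigma$. Since $\sigma$ was arbitrary, $\{\sigma:\sigma\models\Diamond_{\le t}\phi\}\subseteq\{\sigma:\sigma\models\Diamond_{\le t}\psi\}$. Applying monotonicity of the probability measure on paths from $q$, the measure of the $\Diamond_{\le t}\psi$-set is at least the measure of the $\Diamond_{\le t}\phi$-set, which by the first hypothesis $\prob{\ge\alpha}\Diamond_{\le t}\phi$ is at least $\alpha$. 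Therefore $\prob{\ge\alpha}\Diamond_{\le t}\psi$ holds at $q$, as required.

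I expect no real obstacle; this is the simplest of the lemmas and amounts to ``$\Diamond$ is monotone in its formula argument.'' The only point that needs care is measurability---that both path sets are measurable so that the comparison of probabilities makes sense---but this is standard for CSL over CTMCs and is already presupposed by the meaning of the $\prob{}$ operator, so I would simply note it rather than re-derive it. (One could alternatively phrase the whole argument purely at the level of the Isabelle encoding, where $\Diamond_{\le t}\phi$ is a function of $\phi$ that is patently monotone, but the path-set argument above is the transparent mathematical content.)
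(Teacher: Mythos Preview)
Your proposal is correct and follows essentially the same approach as the paper. The paper's proof is a two-sentence sketch (``with probability $\alpha$, $M$ enters a state $q_1$ satisfying $\phi$ within time $t$; since $\phi\implies\psi$ everywhere, $q_1$ also satisfies $\psi$''), and your version unpacks this into the explicit path-set containment and monotonicity of measure that underlie it.
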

\begin{proof}
With probability $\alpha$, $M$ enters a state $q_1$ satisfying $\phi$ within time $t$.
Since $\phi \implies \psi$ at all states in $M$, $q_1$ must also satisfy $\psi$, and therefore $\prob{\geq \alpha} \Diamond_{\leq t} \psi$.
\end{proof}

The following two trivial relaxation lemmas are used and stated without proof.

\begin{lemma}
\[
\infer{\prob{\geq \beta}\Diamond_{\leq t} \phi}{\alpha \geq \beta & s \leq t & \prob{\geq \alpha} \Diamond_{\leq s} \phi}
\]

\end{lemma}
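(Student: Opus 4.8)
The plan is to reduce this to pure monotonicity: relax the time bound first, then the probability threshold, each step being nothing more than an inclusion between the underlying path events. No new probabilistic reasoning is needed beyond what the CSL semantics already supply.

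First I would fix the CTMC $M$ and, following the convention used for the other lemmas, evaluate both formulas at an arbitrary state $q$. Unfolding the semantics, the hypothesis $\prob{\geq \alpha}\Diamond_{\leq s}\phi$ says that the probability measure over paths starting at $q$ of the event $E_s$, ``some state satisfying $\phi$ is reached at a time $\leq s$,'' is at least $\alpha$. The single observation that does all the work is that $E_s \subseteq E_t$ whenever $s \leq t$, since any path witnessing $\phi$ within time $s$ also witnesses it within time $t$. Hence $\Pr_q[E_t] \geq \Pr_q[E_s] \geq \alpha$, which is exactly $(\prob{\geq \alpha}\Diamond_{\leq t}\phi)(q)$. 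Finally, because $\alpha \geq \beta$, the same inequality yields $\Pr_q[E_t] \geq \beta$, i.e.\ $(\prob{\geq \beta}\Diamond_{\leq t}\phi)(q)$. Since $q$ was arbitrary, the lemma follows.

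There is no real obstacle here; the entire content is monotonicity of $\Diamond_{\leq(\cdot)}$ in its time argument together with monotonicity of $\prob{\geq(\cdot)}$ in its probability argument, and the two relaxations are orthogonal, so they compose without interaction. This is precisely why the authors state it without proof. In the Isabelle formalization it would presumably be discharged by unfolding the CSL operator definitions and invoking a single measure-monotonicity fact, with Sledgehammer supplying the remaining bookkeeping.
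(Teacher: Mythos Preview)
Your proposal is correct and matches the paper's treatment: the paper explicitly states this as one of two ``trivial relaxation lemmas'' given without proof, and your monotonicity argument is exactly the reasoning being elided. There is nothing to add.
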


\begin{lemma}
\[
\infer{\prob{\geq \beta} \Box_{\geq t} \phi}{\alpha \geq \beta & s \geq t & \prob{\geq \alpha} \Box_{\geq s} \phi}
\]
\end{lemma}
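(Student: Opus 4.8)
The plan is to prove this the same way one would prove its ``eventually'' companion (the relaxation lemma immediately above): by a pathwise set-containment argument, pushed through the probability measure on paths, together with the trivial monotonicity $\alpha \ge \beta$ on the threshold. The only point that needs care is the direction of the interval and threshold inequalities. Reading the time subscripts as $\Box_{\le s}$ and $\Box_{\le t}$ --- which is the only way $\Box$ is used elsewhere in the paper, and the only reading under which the stated side condition $s \ge t$ yields a genuine weakening --- the bounded-always window $[0,t]$ is contained in $[0,s]$, so the hypothesis $\prob{\ge\alpha}\Box_{\le s}\phi$ is the stronger statement and the conclusion $\prob{\ge\beta}\Box_{\le t}\phi$ the weaker one, exactly as a ``relaxation'' lemma should read.

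First I would fix an arbitrary state $q$ of the CTMC $M$ and argue about the paths issuing from $q$. The path formula $\Box_{\le s}\phi$ holds of a path $\pi$ precisely when $\phi$ is satisfied at every state $\pi$ occupies during $[0,s]$; since $[0,t] \subseteq [0,s]$ when $t \le s$, every such $\pi$ also satisfies $\Box_{\le t}\phi$. Hence the set of paths from $q$ satisfying $\Box_{\le s}\phi$ is a subset of the set satisfying $\Box_{\le t}\phi$; both are measurable in the usual cylinder-set $\sigma$-algebra generated by the embedded jump chain and the sojourn times, so monotonicity of the path measure applies. Therefore the probability that a path from $q$ satisfies $\Box_{\le t}\phi$ is at least the probability that it satisfies $\Box_{\le s}\phi$, which by hypothesis is at least $\alpha$, hence at least $\beta$. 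Since $q$ was arbitrary, $\prob{\ge\beta}\Box_{\le t}\phi$ holds at every state at which $\prob{\ge\alpha}\Box_{\le s}\phi$ holds, which is the claim.

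I do not expect a real obstacle here: the statement is an order-theoretic triviality about nested time windows and weaker probability thresholds, which is exactly why the authors state it without proof. The one thing to keep straight is that an $\Diamond$ obligation is relaxed by \emph{enlarging} its deadline whereas a $\Box$ obligation is relaxed by \emph{shrinking} its window, and that the threshold constants must line up ($\alpha \ge \beta$) for the weaker conclusion to follow. If one insists on reading the subscript literally as $\Box_{\ge t}$, meaning ``$\phi$ holds at all times $\ge t$'', the identical argument still goes through, but only after the side condition is corrected to $s \le t$ so that $[t,\infty) \subseteq [s,\infty)$; under that literal reading the inequality as printed would point the wrong way, with the proof otherwise unchanged.
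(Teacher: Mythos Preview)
Your proposal is correct and matches the paper's treatment: the paper explicitly states this as one of two ``trivial relaxation lemmas \ldots\ stated without proof,'' and the Isabelle formalization simply admits it via \texttt{sorry}. Your pathwise set-containment argument (shrinking the $\Box$ window yields a superset of satisfying paths, then monotonicity of the measure, then $\alpha\ge\beta$) is exactly the intended justification, and you are right that the subscripts should read $\Box_{\le s}$ and $\Box_{\le t}$ to match both the Isabelle \texttt{Pdur} operator and every other use of bounded-always in the goal table; under that reading the side condition $s\ge t$ is oriented correctly.
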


\newpage
\section{Verification of Goal Model II: Isabelle}
\isabellestyle{it}
\parindent 0pt\parskip 0.5ex

\begin{isabellebody}%
\setisabellecontext{runtime{\isacharunderscore}fault{\isacharunderscore}detection}%
%
\isamarkuptrue%
\isamarkupsubsection{Locale and Constraints%
}
\isamarkuptrue%
\isadelimtheory
\endisadelimtheory
\isatagtheory
\isacommand{theory}\isamarkupfalse%
\ runtime{\isacharunderscore}fault{\isacharunderscore}detection\isanewline
\ \ \isakeyword{imports}\ Complex{\isacharunderscore}Main\isanewline
\isakeyword{begin}%
\endisatagtheory
{\isafoldtheory}%
\isadelimtheory
\endisadelimtheory
\isanewline
\isanewline
\isacommand{locale}\isamarkupfalse%
\ runtime{\isacharunderscore}fault{\isacharunderscore}detection\ {\isacharequal}\isanewline
\ \ \isakeyword{fixes}\isanewline
\ \ \ \ q{\isadigit{0}}\ {\isacharcolon}{\isacharcolon}\ {\isacharprime}a\ \isakeyword{and}\isanewline
\ \ \ \ Hpres\ {\isacharcolon}{\isacharcolon}\ {\isachardoublequoteopen}{\isacharprime}a\ {\isasymRightarrow}\ bool{\isachardoublequoteclose}\ \isakeyword{and}\isanewline
\ \ \ \ Hdet\ {\isacharcolon}{\isacharcolon}\ {\isachardoublequoteopen}{\isacharprime}a\ {\isasymRightarrow}\ bool{\isachardoublequoteclose}\ \isakeyword{and}\ \ \ \ \isanewline
\ \ \ \ Alarm\ {\isacharcolon}{\isacharcolon}\ {\isachardoublequoteopen}{\isacharprime}a\ {\isasymRightarrow}\ bool{\isachardoublequoteclose}\ \isakeyword{and}\isanewline
\isanewline
\ \ \ \ {\isasymalpha}\ {\isacharcolon}{\isacharcolon}\ real\ \isakeyword{and}\isanewline
\ \ \ \ {\isasymbeta}\ {\isacharcolon}{\isacharcolon}\ real\ \isakeyword{and}\isanewline
\ \ \ \ {\isasymepsilon}\ {\isacharcolon}{\isacharcolon}\ real\ \isakeyword{and}\isanewline
\ \ \ \ {\isasymepsilon}{\isadigit{1}}\ {\isacharcolon}{\isacharcolon}\ real\ \isakeyword{and}\isanewline
\ \ \ \ {\isasymepsilon}{\isadigit{1}}p\ {\isacharcolon}{\isacharcolon}\ real\ \isakeyword{and}\isanewline
\ \ \ \ {\isasymepsilon}{\isadigit{2}}\ {\isacharcolon}{\isacharcolon}\ real\ \isakeyword{and}\isanewline
\ \ \ \ {\isasymepsilon}{\isadigit{2}}p\ {\isacharcolon}{\isacharcolon}\ real\ \isakeyword{and}\isanewline
\ \ \ \ wh\ {\isacharcolon}{\isacharcolon}\ real\ \isakeyword{and}\isanewline
\ \ \ \ g\ {\isacharcolon}{\isacharcolon}\ real\ \isakeyword{and}\isanewline
\ \ \ \ {\isasymdelta}{\isadigit{1}}\ {\isacharcolon}{\isacharcolon}\ real\ \isakeyword{and}\isanewline
\ \ \ \ {\isasymdelta}{\isadigit{1}}p\ {\isacharcolon}{\isacharcolon}\ real\ \isakeyword{and}\isanewline
\ \ \ \ {\isasymdelta}{\isadigit{2}}\ {\isacharcolon}{\isacharcolon}\ real\ \isakeyword{and}\isanewline
\ \ \ \ {\isasymgamma}{\isadigit{1}}\ {\isacharcolon}{\isacharcolon}\ real\ \isakeyword{and}\isanewline
\ \ \ \ {\isasymgamma}{\isadigit{2}}\ {\isacharcolon}{\isacharcolon}\ real\ \isakeyword{and}\isanewline
\ \ \ \ {\isasymeta}{\isadigit{1}}\ {\isacharcolon}{\isacharcolon}\ real\ \isakeyword{and}\isanewline
\ \ \ \ {\isasymeta}{\isadigit{2}}\ {\isacharcolon}{\isacharcolon}\ real\ \isakeyword{and}\isanewline
\ \ \ \ {\isasymeta}{\isadigit{3}}\ {\isacharcolon}{\isacharcolon}\ real\ \isakeyword{and}\isanewline
\ \ \ \ {\isasymeta}{\isadigit{4}}\ {\isacharcolon}{\isacharcolon}\ real\ \isakeyword{and}\isanewline
\ \ \ \ l{\isadigit{1}}\ {\isacharcolon}{\isacharcolon}\ real\ \isakeyword{and}\isanewline
\ \ \ \ l{\isadigit{2}}\ {\isacharcolon}{\isacharcolon}\ real\ \isakeyword{and}\isanewline
\ \ \ \ l{\isadigit{3}}\ {\isacharcolon}{\isacharcolon}\ real\ \isakeyword{and}\isanewline
\ \ \ \ won\ {\isacharcolon}{\isacharcolon}\ real\ \isakeyword{and}\isanewline
\ \ \ \ woff\ {\isacharcolon}{\isacharcolon}\ real\ \isakeyword{and}\isanewline
\isanewline
\ \ \ \ Pdiam\ {\isacharcolon}{\isacharcolon}\ {\isachardoublequoteopen}real\ {\isasymRightarrow}\ real\ {\isasymRightarrow}\ {\isacharparenleft}{\isacharprime}a\ {\isasymRightarrow}\ bool{\isacharparenright}\ {\isasymRightarrow}\ {\isacharprime}a\ {\isasymRightarrow}\ bool{\isachardoublequoteclose}\ \isakeyword{and}\isanewline
\ \ \ \ Pdiam{\isadigit{2}}\ {\isacharcolon}{\isacharcolon}\ {\isachardoublequoteopen}real\ {\isasymRightarrow}\ {\isacharparenleft}{\isacharprime}a\ {\isasymRightarrow}\ bool{\isacharparenright}\ {\isasymRightarrow}\ {\isacharprime}a\ {\isasymRightarrow}\ bool{\isachardoublequoteclose}\ \isakeyword{and}\isanewline
\ \ \ \ Pblock\ {\isacharcolon}{\isacharcolon}\ {\isachardoublequoteopen}real\ {\isasymRightarrow}\ {\isacharparenleft}{\isacharprime}a\ {\isasymRightarrow}\ bool{\isacharparenright}\ {\isasymRightarrow}\ {\isacharprime}a\ {\isasymRightarrow}\ bool{\isachardoublequoteclose}\ \isakeyword{and}\isanewline
\ \ \ \ Pdur\ {\isacharcolon}{\isacharcolon}\ {\isachardoublequoteopen}real\ {\isasymRightarrow}\ real\ {\isasymRightarrow}\ {\isacharparenleft}{\isacharprime}a\ {\isasymRightarrow}\ bool{\isacharparenright}\ {\isasymRightarrow}\ {\isacharprime}a\ {\isasymRightarrow}\ bool{\isachardoublequoteclose}\ \isakeyword{and}\isanewline
\ \ \ \ PW\ {\isacharcolon}{\isacharcolon}\ {\isachardoublequoteopen}real\ {\isasymRightarrow}\ {\isacharparenleft}{\isacharprime}a\ {\isasymRightarrow}\ bool{\isacharparenright}\ {\isasymRightarrow}\ {\isacharparenleft}{\isacharprime}a\ {\isasymRightarrow}\ bool{\isacharparenright}\ {\isasymRightarrow}\ {\isacharprime}a\ {\isasymRightarrow}\ bool{\isachardoublequoteclose}\isanewline
\isanewline
\ \ \isakeyword{assumes}\isanewline
\ \ \ \ constr{\isadigit{1}}{\isacharcolon}\ {\isachardoublequoteopen}{\isacharparenleft}{\isadigit{1}}{\isacharminus}{\isasymepsilon}{\isadigit{1}}{\isacharparenright}\ {\isasymle}\ {\isacharparenleft}{\isadigit{1}}{\isacharminus}{\isasymbeta}{\isacharparenright}{\isacharasterisk}{\isacharparenleft}{\isadigit{1}}{\isacharminus}{\isasymepsilon}{\isadigit{1}}p{\isacharparenright}{\isachardoublequoteclose}\ \isakeyword{and}\isanewline
\ \ \ \ constr{\isadigit{2}}{\isacharcolon}\ {\isachardoublequoteopen}{\isacharparenleft}{\isadigit{1}}{\isacharminus}{\isasymepsilon}{\isadigit{2}}{\isacharparenright}\ {\isasymle}\ {\isacharparenleft}{\isadigit{1}}{\isacharminus}{\isasymepsilon}{\isadigit{2}}p{\isacharparenright}{\isachardoublequoteclose}\ \isakeyword{and}\isanewline
\ \ \ \ constr{\isadigit{3}}{\isacharcolon}\ {\isachardoublequoteopen}wh\ {\isasymle}\ g{\isachardoublequoteclose}\ \isakeyword{and}\isanewline
\ \ \ \ constr{\isadigit{4}}{\isacharcolon}\ {\isachardoublequoteopen}{\isacharparenleft}{\isadigit{1}}{\isacharminus}{\isasymdelta}{\isadigit{1}}{\isacharparenright}\ {\isasymle}\ {\isacharparenleft}{\isadigit{1}}{\isacharminus}{\isasymalpha}{\isacharparenright}{\isacharasterisk}{\isacharparenleft}{\isadigit{1}}{\isacharminus}{\isasymbeta}{\isacharparenright}{\isacharasterisk}{\isacharparenleft}{\isadigit{1}}{\isacharminus}{\isasymdelta}{\isadigit{1}}p{\isacharparenright}{\isachardoublequoteclose}\ \isakeyword{and}\isanewline
\ \ \ \ constr{\isadigit{5}}{\isacharcolon}\ {\isachardoublequoteopen}{\isacharparenleft}{\isadigit{1}}{\isacharminus}{\isasymepsilon}{\isacharparenright}\ {\isasymle}\ {\isacharparenleft}{\isadigit{1}}{\isacharminus}{\isasymgamma}{\isadigit{1}}{\isacharminus}{\isasymgamma}{\isadigit{2}}{\isacharparenright}{\isachardoublequoteclose}\ \isakeyword{and}\isanewline
\ \ \ \ constr{\isadigit{6}}{\isacharcolon}\ {\isachardoublequoteopen}{\isacharparenleft}{\isadigit{1}}{\isacharminus}{\isasymepsilon}{\isadigit{1}}p{\isacharparenright}{\isacharasterisk}{\isacharparenleft}{\isadigit{1}}{\isacharminus}{\isasymepsilon}{\isadigit{2}}p{\isacharparenright}\ {\isasymle}\ {\isacharparenleft}{\isadigit{1}}{\isacharminus}l{\isadigit{1}}{\isacharparenright}{\isacharasterisk}{\isacharparenleft}{\isadigit{1}}{\isacharminus}{\isasymgamma}{\isadigit{1}}{\isacharparenright}{\isacharasterisk}{\isacharparenleft}{\isadigit{1}}{\isacharminus}l{\isadigit{2}}{\isacharparenright}{\isacharasterisk}{\isacharparenleft}{\isadigit{1}}{\isacharminus}l{\isadigit{3}}{\isacharparenright}{\isachardoublequoteclose}\ \isakeyword{and}\isanewline
\ \ \ \ constr{\isadigit{7}}{\isacharcolon}\ {\isachardoublequoteopen}{\isacharparenleft}g{\isacharminus}wh{\isacharparenright}\ {\isasymge}\ won\ {\isacharplus}\ woff{\isachardoublequoteclose}\ \isakeyword{and}\isanewline
\ \ \ \ constr{\isadigit{8}}{\isacharcolon}\ {\isachardoublequoteopen}{\isacharparenleft}{\isadigit{1}}{\isacharminus}{\isasymgamma}{\isadigit{1}}{\isacharparenright}\ {\isachargreater}\ {\isadigit{0}}{\isachardoublequoteclose}\ \isakeyword{and}\isanewline
\ \ \ \ constr{\isadigit{9}}{\isacharcolon}\ {\isachardoublequoteopen}{\isacharparenleft}{\isadigit{1}}{\isacharminus}{\isasymepsilon}{\isadigit{1}}p{\isacharparenright}\ {\isasymle}\ {\isacharparenleft}{\isadigit{1}}{\isacharminus}l{\isadigit{1}}{\isacharparenright}{\isacharasterisk}{\isacharparenleft}{\isadigit{1}}{\isacharminus}l{\isadigit{2}}{\isacharparenright}{\isachardoublequoteclose}\ \isakeyword{and}\isanewline
\ \ \ \ constr{\isadigit{1}}{\isadigit{0}}{\isacharcolon}\ {\isachardoublequoteopen}{\isacharparenleft}{\isadigit{1}}{\isacharminus}{\isasymepsilon}{\isadigit{2}}p{\isacharparenright}\ {\isasymle}\ {\isacharparenleft}{\isadigit{1}}{\isacharminus}l{\isadigit{3}}{\isacharparenright}{\isachardoublequoteclose}\ \isakeyword{and}\isanewline
\ \ \ \ constr{\isadigit{1}}{\isadigit{1}}{\isacharcolon}\ {\isachardoublequoteopen}{\isacharparenleft}{\isadigit{1}}{\isacharminus}{\isasymdelta}{\isadigit{1}}p{\isacharparenright}\ {\isasymle}\ {\isacharparenleft}{\isadigit{1}}{\isacharminus}{\isasymeta}{\isadigit{1}}{\isacharparenright}{\isacharasterisk}{\isacharparenleft}{\isadigit{1}}{\isacharminus}{\isasymeta}{\isadigit{2}}{\isacharparenright}{\isacharasterisk}{\isacharparenleft}{\isadigit{1}}{\isacharminus}{\isasymeta}{\isadigit{3}}{\isacharparenright}{\isacharasterisk}{\isacharparenleft}{\isadigit{1}}{\isacharminus}{\isasymeta}{\isadigit{4}}{\isacharparenright}{\isachardoublequoteclose}\isanewline
\isakeyword{begin}%
\isamarkupsubsection{Logical Connectives for CSL Formulae%
}
\isamarkuptrue%
\isacommand{definition}\isamarkupfalse%
\ cand\ {\isacharcolon}{\isacharcolon}\ {\isachardoublequoteopen}{\isacharparenleft}{\isacharprime}a\ {\isasymRightarrow}\ bool{\isacharparenright}\ {\isasymRightarrow}\ {\isacharparenleft}{\isacharprime}a\ {\isasymRightarrow}\ bool{\isacharparenright}\ {\isasymRightarrow}\ {\isacharparenleft}{\isacharprime}a\ {\isasymRightarrow}\ bool{\isacharparenright}{\isachardoublequoteclose}\ {\isacharparenleft}\isakeyword{infixr}\ {\isachardoublequoteopen}c{\isasymand}{\isachardoublequoteclose}\ {\isadigit{3}}{\isadigit{5}}{\isacharparenright}\isanewline
\ \ \isakeyword{where}\ {\isachardoublequoteopen}{\isacharparenleft}fa\ c{\isasymand}\ fb{\isacharparenright}\ {\isacharequal}\ {\isacharparenleft}{\isasymlambda}x{\isachardot}\ {\isacharparenleft}fa\ x{\isacharparenright}\ {\isasymand}\ {\isacharparenleft}fb\ x{\isacharparenright}{\isacharparenright}{\isachardoublequoteclose}\isanewline
\isanewline
\isacommand{definition}\isamarkupfalse%
\ cor\ {\isacharcolon}{\isacharcolon}\ {\isachardoublequoteopen}{\isacharparenleft}{\isacharprime}a\ {\isasymRightarrow}\ bool{\isacharparenright}\ {\isasymRightarrow}\ {\isacharparenleft}{\isacharprime}a\ {\isasymRightarrow}\ bool{\isacharparenright}\ {\isasymRightarrow}\ {\isacharparenleft}{\isacharprime}a\ {\isasymRightarrow}\ bool{\isacharparenright}{\isachardoublequoteclose}\ {\isacharparenleft}\isakeyword{infixr}\ {\isachardoublequoteopen}c{\isasymor}{\isachardoublequoteclose}\ {\isadigit{3}}{\isadigit{0}}{\isacharparenright}\isanewline
\ \ \isakeyword{where}\ {\isachardoublequoteopen}{\isacharparenleft}fa\ c{\isasymor}\ fb{\isacharparenright}\ {\isacharequal}\ {\isacharparenleft}{\isasymlambda}x{\isachardot}\ {\isacharparenleft}fa\ x{\isacharparenright}\ {\isasymor}\ {\isacharparenleft}fb\ x{\isacharparenright}{\isacharparenright}{\isachardoublequoteclose}\isanewline
\isanewline
\isacommand{definition}\isamarkupfalse%
\ cimp\ {\isacharcolon}{\isacharcolon}\ {\isachardoublequoteopen}{\isacharparenleft}{\isacharprime}a\ {\isasymRightarrow}\ bool{\isacharparenright}\ {\isasymRightarrow}\ {\isacharparenleft}{\isacharprime}a\ {\isasymRightarrow}\ bool{\isacharparenright}\ {\isasymRightarrow}\ {\isacharparenleft}{\isacharprime}a\ {\isasymRightarrow}\ bool{\isacharparenright}{\isachardoublequoteclose}\ {\isacharparenleft}\isakeyword{infixr}\ {\isachardoublequoteopen}c{\isasymlongrightarrow}{\isachardoublequoteclose}\ {\isadigit{2}}{\isadigit{5}}{\isacharparenright}\isanewline
\ \ \isakeyword{where}\ {\isachardoublequoteopen}{\isacharparenleft}fa\ c{\isasymlongrightarrow}\ fb{\isacharparenright}\ {\isacharequal}\ {\isacharparenleft}{\isasymlambda}x{\isachardot}\ {\isacharparenleft}fa\ x{\isacharparenright}\ {\isasymlongrightarrow}\ {\isacharparenleft}fb\ x{\isacharparenright}{\isacharparenright}{\isachardoublequoteclose}\isanewline
\isanewline
\isacommand{definition}\isamarkupfalse%
\ cnot\ {\isacharcolon}{\isacharcolon}\ {\isachardoublequoteopen}{\isacharparenleft}{\isacharprime}a\ {\isasymRightarrow}\ bool{\isacharparenright}\ {\isasymRightarrow}\ {\isacharparenleft}{\isacharprime}a\ {\isasymRightarrow}\ bool{\isacharparenright}{\isachardoublequoteclose}\ {\isacharparenleft}\ {\isachardoublequoteopen}c{\isasymnot}{\isachardoublequoteclose}{\isacharparenright}\isanewline
\ \ \isakeyword{where}\ {\isachardoublequoteopen}c{\isasymnot}f\ {\isacharequal}\ {\isacharparenleft}{\isasymlambda}x{\isachardot}\ {\isasymnot}{\isacharparenleft}f\ x{\isacharparenright}{\isacharparenright}{\isachardoublequoteclose}%
\isamarkupsubsection{Foundational CSL Lemmas%
}
\isamarkuptrue%
\isacommand{lemma}\isamarkupfalse%
\ lemma{\isadigit{3}}{\isacharunderscore}{\isadigit{5}}{\isacharbrackleft}iff{\isacharbrackright}{\isacharcolon}\ {\isachardoublequoteopen}{\isacharparenleft}Pblock\ {\isadigit{1}}\ {\isasymphi}\ q{\isadigit{0}}{\isacharparenright}\ {\isacharequal}\ {\isacharparenleft}{\isasymforall}q{\isachardot}\ {\isasymphi}\ q{\isacharparenright}{\isachardoublequoteclose}\isanewline
\isadelimproof
\ \ %
\endisadelimproof
\isatagproof
\isacommand{sorry}\isamarkupfalse%
\endisatagproof
{\isafoldproof}%
\isadelimproof
\isanewline
\endisadelimproof
\isanewline
\isacommand{lemma}\isamarkupfalse%
\ lemma{\isadigit{3}}{\isacharunderscore}{\isadigit{6}}{\isacharcolon}\ {\isachardoublequoteopen}{\isacharparenleft}Pdiam\ a\ s\ {\isacharparenleft}{\isasymphi}\ c{\isasymor}\ {\isacharparenleft}Pdiam\ b\ t\ {\isasympsi}{\isacharparenright}{\isacharparenright}\ q{\isacharparenright}\ {\isasymLongrightarrow}\ {\isacharparenleft}Pdiam\ {\isacharparenleft}a{\isacharasterisk}b{\isacharparenright}\ {\isacharparenleft}s{\isacharplus}t{\isacharparenright}\ {\isacharparenleft}{\isasymphi}\ c{\isasymor}\ {\isasympsi}{\isacharparenright}\ q{\isacharparenright}{\isachardoublequoteclose}\isanewline
\isadelimproof
\ \ %
\endisadelimproof
\isatagproof
\isacommand{sorry}\isamarkupfalse%
\endisatagproof
{\isafoldproof}%
\isadelimproof
\isanewline
\endisadelimproof
\isanewline
\isacommand{lemma}\isamarkupfalse%
\ lemma{\isadigit{3}}{\isacharunderscore}{\isadigit{7}}{\isacharcolon}\ {\isachardoublequoteopen}{\isacharparenleft}PW\ a\ {\isacharparenleft}{\isasymphi}\ c{\isasymand}\ {\isacharparenleft}Pdiam\ b\ s\ {\isacharparenleft}{\isacharparenleft}c{\isasymnot}{\isasymphi}{\isacharparenright}\ c{\isasymor}\ {\isasymtheta}{\isacharparenright}{\isacharparenright}{\isacharparenright}\ {\isasympsi}{\isacharparenright}\ q\ {\isasymLongrightarrow}\ {\isacharparenleft}Pdiam\ {\isacharparenleft}a{\isacharasterisk}b{\isacharparenright}\ s\ {\isacharparenleft}{\isasympsi}\ c{\isasymor}\ {\isasymtheta}{\isacharparenright}{\isacharparenright}\ q{\isachardoublequoteclose}\isanewline
\isadelimproof
\ \ %
\endisadelimproof
\isatagproof
\isacommand{sorry}\isamarkupfalse%
\endisatagproof
{\isafoldproof}%
\isadelimproof
\isanewline
\endisadelimproof
\isanewline
\isacommand{lemma}\isamarkupfalse%
\ lemma{\isadigit{3}}{\isacharunderscore}{\isadigit{8}}{\isacharcolon}\ {\isachardoublequoteopen}{\isacharparenleft}Pdur\ a\ t\ {\isasymtheta}\ q{\isacharparenright}\ {\isasymLongrightarrow}\ {\isacharparenleft}PW\ b\ {\isasymphi}\ {\isacharparenleft}c{\isasymnot}{\isasymtheta}{\isacharparenright}\ q{\isacharparenright}\ {\isasymLongrightarrow}\ {\isacharparenleft}Pdur\ {\isacharparenleft}a{\isacharplus}b{\isacharminus}{\isadigit{1}}{\isacharparenright}\ t\ {\isasymphi}\ q{\isacharparenright}{\isachardoublequoteclose}\isanewline
\isadelimproof
\ \ %
\endisadelimproof
\isatagproof
\isacommand{sorry}\isamarkupfalse%
\endisatagproof
{\isafoldproof}%
\isadelimproof
\isanewline
\endisadelimproof
\isanewline
\isacommand{lemma}\isamarkupfalse%
\ lemma{\isadigit{3}}{\isacharunderscore}{\isadigit{9}}{\isacharcolon}\ {\isachardoublequoteopen}a\ {\isachargreater}\ {\isadigit{0}}\ {\isasymLongrightarrow}\ Pdur\ a\ t\ {\isasymphi}\ q\ {\isasymLongrightarrow}\ {\isasymphi}\ q{\isachardoublequoteclose}\isanewline
\isadelimproof
\ \ %
\endisadelimproof
\isatagproof
\isacommand{sorry}\isamarkupfalse%
\endisatagproof
{\isafoldproof}%
\isadelimproof
\isanewline
\endisadelimproof
\isanewline
\isacommand{lemma}\isamarkupfalse%
\ lemma{\isadigit{3}}{\isacharunderscore}{\isadigit{1}}{\isadigit{0}}{\isacharcolon}\ {\isachardoublequoteopen}Pdiam\ a\ t\ {\isasymphi}\ q\ {\isasymLongrightarrow}\ {\isacharparenleft}{\isasymforall}qp{\isachardot}\ {\isacharparenleft}{\isasymphi}\ c{\isasymlongrightarrow}\ {\isasympsi}{\isacharparenright}\ qp{\isacharparenright}\ {\isasymLongrightarrow}\ Pdiam\ a\ t\ {\isasympsi}\ q{\isachardoublequoteclose}\isanewline
\isadelimproof
\ \ %
\endisadelimproof
\isatagproof
\isacommand{sorry}\isamarkupfalse%
\endisatagproof
{\isafoldproof}%
\isadelimproof
\endisadelimproof
\isamarkupsubsection{Relaxation Lemmas%
}
\isamarkuptrue%
\isacommand{lemma}\isamarkupfalse%
\ lemma{\isadigit{3}}{\isacharunderscore}{\isadigit{1}}{\isadigit{1}}{\isacharcolon}\ {\isachardoublequoteopen}{\isacharparenleft}a\ {\isasymge}\ b{\isacharparenright}\ {\isasymLongrightarrow}\ {\isacharparenleft}s\ {\isasymle}\ t{\isacharparenright}\ {\isasymLongrightarrow}\ {\isacharparenleft}Pdiam\ a\ s\ {\isasymphi}\ q{\isacharparenright}\ {\isasymLongrightarrow}\ {\isacharparenleft}Pdiam\ b\ t\ {\isasymphi}\ q{\isacharparenright}{\isachardoublequoteclose}\isanewline
\isadelimproof
\ \ %
\endisadelimproof
\isatagproof
\isacommand{sorry}\isamarkupfalse%
\endisatagproof
{\isafoldproof}%
\isadelimproof
\isanewline
\endisadelimproof
\isanewline
\isacommand{lemma}\isamarkupfalse%
\ lemma{\isadigit{3}}{\isacharunderscore}{\isadigit{1}}{\isadigit{2}}{\isacharcolon}\ {\isachardoublequoteopen}{\isacharparenleft}a\ {\isasymge}\ b{\isacharparenright}\ {\isasymLongrightarrow}\ {\isacharparenleft}s\ {\isasymge}\ t{\isacharparenright}\ {\isasymLongrightarrow}\ {\isacharparenleft}Pdur\ a\ s\ {\isasymphi}\ q{\isacharparenright}\ {\isasymLongrightarrow}\ {\isacharparenleft}Pdur\ b\ t\ {\isasymphi}\ q{\isacharparenright}{\isachardoublequoteclose}\isanewline
\isadelimproof
\ \ %
\endisadelimproof
\isatagproof
\isacommand{sorry}\isamarkupfalse%
\endisatagproof
{\isafoldproof}%
\isadelimproof
\endisadelimproof
\isamarkupsubsection{Additional Lemmas%
}
\isamarkuptrue%
\isacommand{lemma}\isamarkupfalse%
\ andapplication{\isacharcolon}\ \isanewline
\ \ \isakeyword{fixes}\ f\ {\isacharcolon}{\isacharcolon}\ {\isachardoublequoteopen}{\isacharparenleft}{\isacharprime}a\ {\isasymRightarrow}\ bool{\isacharparenright}\ {\isasymRightarrow}\ {\isacharprime}a\ {\isasymRightarrow}\ bool{\isachardoublequoteclose}\isanewline
\ \ \isakeyword{assumes}\ asm{\isadigit{1}}{\isacharcolon}\ {\isachardoublequoteopen}f\ {\isasymphi}\ q{\isachardoublequoteclose}\isanewline
\ \ \isakeyword{assumes}\ asm{\isadigit{2}}{\isacharcolon}\ {\isachardoublequoteopen}Pblock\ {\isadigit{1}}\ {\isasympsi}\ q{\isadigit{0}}{\isachardoublequoteclose}\isanewline
\ \ \isakeyword{shows}\ {\isachardoublequoteopen}f\ {\isacharparenleft}{\isasymphi}\ c{\isasymand}\ {\isasympsi}{\isacharparenright}\ q{\isachardoublequoteclose}\ {\isacharparenleft}\isakeyword{is}\ {\isachardoublequoteopen}f\ {\isacharquery}andterm\ q{\isachardoublequoteclose}{\isacharparenright}\isanewline
\isadelimproof
\endisadelimproof
\isatagproof
\isacommand{proof}\isamarkupfalse%
\ {\isacharminus}\isanewline
\ \ \isacommand{have}\isamarkupfalse%
\ {\isachardoublequoteopen}{\isasympsi}\ q{\isachardoublequoteclose}\ \isacommand{using}\isamarkupfalse%
\ lemma{\isadigit{3}}{\isacharunderscore}{\isadigit{5}}\ asm{\isadigit{2}}\ \isacommand{by}\isamarkupfalse%
\ auto\isanewline
\ \ \isacommand{then}\isamarkupfalse%
\ \isacommand{have}\isamarkupfalse%
\ {\isachardoublequoteopen}{\isacharparenleft}{\isasymphi}\ c{\isasymand}\ {\isasympsi}{\isacharparenright}\ {\isacharequal}\ {\isacharparenleft}{\isasymlambda}x{\isachardot}\ {\isasymphi}\ x\ {\isasymand}\ True{\isacharparenright}{\isachardoublequoteclose}\ \isacommand{using}\isamarkupfalse%
\ asm{\isadigit{2}}\ cand{\isacharunderscore}def\ lemma{\isadigit{3}}{\isacharunderscore}{\isadigit{5}}\ \isacommand{by}\isamarkupfalse%
\ simp\isanewline
\ \ \isacommand{then}\isamarkupfalse%
\ \isacommand{show}\isamarkupfalse%
\ {\isacharquery}thesis\ \isacommand{using}\isamarkupfalse%
\ asm{\isadigit{1}}\ \isacommand{by}\isamarkupfalse%
\ auto\isanewline
\isacommand{qed}\isamarkupfalse%
\endisatagproof
{\isafoldproof}%
\isadelimproof
\ \ \ \ \ \ \ \ \ \ \ \ \ \ \ \ \ \ \ \ \ \ \ \ \ \ \ \ \ \isanewline
\endisadelimproof
\isanewline
\isacommand{lemma}\isamarkupfalse%
\ diamcombine{\isacharcolon}\ {\isachardoublequoteopen}Pdiam\ a\ s\ {\isacharparenleft}Pdiam\ b\ t\ {\isasymphi}{\isacharparenright}\ q\ {\isasymLongrightarrow}\ Pdiam\ {\isacharparenleft}a\ {\isacharasterisk}\ b{\isacharparenright}\ {\isacharparenleft}s\ {\isacharplus}\ t{\isacharparenright}\ {\isasymphi}\ q{\isachardoublequoteclose}\isanewline
\isadelimproof
\ \ %
\endisadelimproof
\isatagproof
\isacommand{using}\isamarkupfalse%
\ lemma{\isadigit{3}}{\isacharunderscore}{\isadigit{6}}{\isacharbrackleft}of\ a\ s\ {\isachardoublequoteopen}{\isasymlambda}x{\isachardot}\ False{\isachardoublequoteclose}\ b\ t\ {\isasymphi}\ q{\isacharbrackright}\ cor{\isacharunderscore}def\ \isacommand{by}\isamarkupfalse%
\ simp%
\endisatagproof
{\isafoldproof}%
\isadelimproof
\isanewline
\endisadelimproof
\isanewline
\isacommand{lemma}\isamarkupfalse%
\ andimplies{\isacharcolon}\ {\isachardoublequoteopen}{\isacharparenleft}fa\ c{\isasymand}\ {\isacharparenleft}fa\ c{\isasymlongrightarrow}\ fb{\isacharparenright}{\isacharparenright}\ {\isacharequal}\ {\isacharparenleft}fa\ c{\isasymand}\ fb{\isacharparenright}{\isachardoublequoteclose}\ {\isacharparenleft}\isakeyword{is}\ {\isachardoublequoteopen}{\isacharquery}lhs\ {\isacharequal}\ {\isacharquery}rhs{\isachardoublequoteclose}{\isacharparenright}\isanewline
\isadelimproof
\endisadelimproof
\isatagproof
\isacommand{proof}\isamarkupfalse%
\ {\isacharminus}\isanewline
\ \ \isacommand{have}\isamarkupfalse%
\ {\isachardoublequoteopen}{\isacharquery}lhs\ {\isacharequal}\ {\isacharparenleft}{\isasymlambda}x{\isachardot}\ {\isacharparenleft}fa\ x{\isacharparenright}\ {\isasymand}\ {\isacharparenleft}fa\ x\ {\isasymlongrightarrow}\ fb\ x{\isacharparenright}{\isacharparenright}{\isachardoublequoteclose}\ \isacommand{by}\isamarkupfalse%
\ {\isacharparenleft}simp\ add{\isacharcolon}\ cand{\isacharunderscore}def\ cimp{\isacharunderscore}def{\isacharparenright}\isanewline
\ \ \isacommand{then}\isamarkupfalse%
\ \isacommand{have}\isamarkupfalse%
\ {\isachardoublequoteopen}{\isacharquery}lhs\ {\isacharequal}\ {\isacharparenleft}{\isasymlambda}x{\isachardot}\ {\isacharparenleft}fa\ x{\isacharparenright}\ {\isasymand}\ {\isacharparenleft}fb\ x{\isacharparenright}{\isacharparenright}{\isachardoublequoteclose}\ \isacommand{by}\isamarkupfalse%
\ auto\isanewline
\ \ \isacommand{then}\isamarkupfalse%
\ \isacommand{show}\isamarkupfalse%
\ {\isachardoublequoteopen}{\isacharquery}lhs\ {\isacharequal}\ {\isacharparenleft}fa\ c{\isasymand}\ fb{\isacharparenright}{\isachardoublequoteclose}\ \isacommand{by}\isamarkupfalse%
\ {\isacharparenleft}simp\ add{\isacharcolon}\ cand{\isacharunderscore}def{\isacharparenright}\isanewline
\isacommand{qed}\isamarkupfalse%
\endisatagproof
{\isafoldproof}%
\isadelimproof
\isanewline
\endisadelimproof
\isanewline
\isacommand{lemma}\isamarkupfalse%
\ notnot{\isacharbrackleft}simp{\isacharbrackright}{\isacharcolon}\ {\isachardoublequoteopen}{\isacharparenleft}c{\isasymnot}{\isacharparenleft}c{\isasymnot}{\isasymphi}{\isacharparenright}{\isacharparenright}\ {\isacharequal}\ {\isasymphi}{\isachardoublequoteclose}\isanewline
\isadelimproof
\ \ %
\endisadelimproof
\isatagproof
\isacommand{using}\isamarkupfalse%
\ cnot{\isacharunderscore}def\ \isacommand{by}\isamarkupfalse%
\ auto%
\endisatagproof
{\isafoldproof}%
\isadelimproof
\isanewline
\endisadelimproof
\ \ \isanewline
\isacommand{lemma}\isamarkupfalse%
\ orcomm{\isacharbrackleft}iff{\isacharbrackright}{\isacharcolon}\ {\isachardoublequoteopen}{\isacharparenleft}fa\ c{\isasymor}\ fb{\isacharparenright}\ {\isacharequal}\ {\isacharparenleft}fb\ c{\isasymor}\ fa{\isacharparenright}{\isachardoublequoteclose}\isanewline
\isadelimproof
\ \ %
\endisadelimproof
\isatagproof
\isacommand{using}\isamarkupfalse%
\ cor{\isacharunderscore}def\ \isacommand{by}\isamarkupfalse%
\ auto%
\endisatagproof
{\isafoldproof}%
\isadelimproof
\isanewline
\endisadelimproof
\isanewline
\isacommand{lemma}\isamarkupfalse%
\ andcomm{\isacharbrackleft}iff{\isacharbrackright}{\isacharcolon}\ {\isachardoublequoteopen}{\isacharparenleft}fa\ c{\isasymand}\ fb{\isacharparenright}\ {\isacharequal}\ {\isacharparenleft}fb\ c{\isasymand}\ fa{\isacharparenright}{\isachardoublequoteclose}\isanewline
\isadelimproof
\ \ %
\endisadelimproof
\isatagproof
\isacommand{using}\isamarkupfalse%
\ cand{\isacharunderscore}def\ \isacommand{by}\isamarkupfalse%
\ auto%
\endisatagproof
{\isafoldproof}%
\isadelimproof
\isanewline
\endisadelimproof
\isanewline
\isacommand{lemma}\isamarkupfalse%
\ imptrans{\isacharbrackleft}simp{\isacharbrackright}{\isacharcolon}\ {\isachardoublequoteopen}{\isacharparenleft}fa\ c{\isasymlongrightarrow}\ fb{\isacharparenright}\ q\ {\isasymLongrightarrow}\ {\isacharparenleft}fb\ c{\isasymlongrightarrow}\ fc{\isacharparenright}\ q\ {\isasymLongrightarrow}\ {\isacharparenleft}fa\ c{\isasymlongrightarrow}\ fc{\isacharparenright}\ q{\isachardoublequoteclose}\isanewline
\isadelimproof
\ \ %
\endisadelimproof
\isatagproof
\isacommand{using}\isamarkupfalse%
\ cimp{\isacharunderscore}def\ \isacommand{by}\isamarkupfalse%
\ auto%
\endisatagproof
{\isafoldproof}%
\isadelimproof
\endisadelimproof
\isamarkupsubsection{Goal Diagram Implications%
}
\isamarkuptrue%
\isamarkupsubsubsection{Theorem 3.1%
}
\isamarkuptrue%
\isacommand{lemma}\isamarkupfalse%
\ t{\isadigit{3}}dot{\isadigit{1}}{\isacharcolon}\isanewline
\ \ \isakeyword{assumes}\ eq{\isadigit{4}}{\isacharcolon}\ {\isachardoublequoteopen}Pblock\ {\isadigit{1}}\ {\isacharparenleft}Hpres\ c{\isasymlongrightarrow}\ {\isacharparenleft}Pdiam\ {\isacharparenleft}{\isadigit{1}}{\isacharminus}{\isasymbeta}{\isacharparenright}\ wh\ Hdet{\isacharparenright}{\isacharparenright}\ q{\isadigit{0}}{\isachardoublequoteclose}\isanewline
\ \ \isakeyword{assumes}\ eq{\isadigit{5}}{\isacharcolon}\ {\isachardoublequoteopen}Pblock\ {\isadigit{1}}\ {\isacharparenleft}c{\isasymnot}Hpres\ c{\isasymlongrightarrow}\ {\isacharparenleft}Pdiam\ {\isacharparenleft}{\isadigit{1}}{\isacharminus}{\isasymbeta}{\isacharparenright}\ wh\ {\isacharparenleft}PW\ {\isacharparenleft}{\isadigit{1}}{\isacharminus}{\isasymalpha}{\isacharparenright}\ {\isacharparenleft}c{\isasymnot}Hdet{\isacharparenright}\ Hpres{\isacharparenright}{\isacharparenright}{\isacharparenright}\ q{\isadigit{0}}{\isachardoublequoteclose}\isanewline
\ \ \isakeyword{assumes}\ eq{\isadigit{6}}{\isacharcolon}\ {\isachardoublequoteopen}Pdur\ {\isacharparenleft}{\isadigit{1}}{\isacharminus}{\isasymepsilon}{\isacharparenright}\ u\ {\isacharparenleft}c{\isasymnot}Alarm{\isacharparenright}\ q{\isadigit{0}}{\isachardoublequoteclose}\isanewline
\ \ \isakeyword{assumes}\ eq{\isadigit{7}}{\isacharcolon}\ {\isachardoublequoteopen}Pblock\ {\isadigit{1}}\ {\isacharparenleft}Hdet\ c{\isasymlongrightarrow}\ {\isacharparenleft}Pdiam\ {\isacharparenleft}{\isadigit{1}}{\isacharminus}{\isasymepsilon}{\isadigit{1}}p{\isacharparenright}\ {\isacharparenleft}g\ {\isacharminus}\ wh{\isacharparenright}\ {\isacharparenleft}Pdur\ {\isacharparenleft}{\isadigit{1}}{\isacharminus}{\isasymepsilon}{\isadigit{2}}p{\isacharparenright}\ u\ {\isacharparenleft}c{\isasymnot}Alarm{\isacharparenright}{\isacharparenright}{\isacharparenright}{\isacharparenright}\ q{\isadigit{0}}{\isachardoublequoteclose}\ {\isacharparenleft}\isakeyword{is}\ {\isachardoublequoteopen}Pblock\ {\isadigit{1}}\ {\isacharquery}eq{\isadigit{7}}\ q{\isadigit{0}}{\isachardoublequoteclose}{\isacharparenright}\isanewline
\ \ \isakeyword{assumes}\ eq{\isadigit{8}}{\isacharcolon}\ {\isachardoublequoteopen}Pblock\ {\isadigit{1}}\ {\isacharparenleft}c{\isasymnot}Hdet\ c{\isasymlongrightarrow}\ {\isacharparenleft}Pdiam\ {\isacharparenleft}{\isadigit{1}}{\isacharminus}{\isasymdelta}{\isadigit{1}}p{\isacharparenright}\ {\isacharparenleft}v\ {\isacharminus}\ wa\ {\isacharminus}\ wh{\isacharparenright}\ {\isacharparenleft}Alarm\ c{\isasymor}\ Hdet{\isacharparenright}{\isacharparenright}{\isacharparenright}\ q{\isadigit{0}}{\isachardoublequoteclose}\isanewline
\ \ \isakeyword{shows}\ {\isachardoublequoteopen}{\isacharparenleft}{\isacharparenleft}Pdur\ {\isacharparenleft}{\isadigit{1}}{\isacharminus}{\isasymepsilon}{\isacharparenright}\ u\ {\isacharparenleft}c{\isasymnot}Alarm{\isacharparenright}{\isacharparenright}\ c{\isasymand}\isanewline
\ \ \ \ {\isacharparenleft}Pblock\ {\isadigit{1}}\ {\isacharparenleft}Hpres\ c{\isasymlongrightarrow}\ {\isacharparenleft}Pdiam\ {\isacharparenleft}{\isadigit{1}}{\isacharminus}{\isasymepsilon}{\isadigit{1}}{\isacharparenright}\ g\ {\isacharparenleft}Pdur\ {\isacharparenleft}{\isadigit{1}}{\isacharminus}{\isasymepsilon}{\isadigit{2}}{\isacharparenright}\ u\ {\isacharparenleft}c{\isasymnot}Alarm{\isacharparenright}{\isacharparenright}{\isacharparenright}{\isacharparenright}{\isacharparenright}\ c{\isasymand}\isanewline
\ \ \ \ {\isacharparenleft}Pblock\ {\isadigit{1}}\ {\isacharparenleft}c{\isasymnot}Hpres\ c{\isasymlongrightarrow}\ {\isacharparenleft}Pdiam\ {\isacharparenleft}{\isadigit{1}}{\isacharminus}{\isasymdelta}{\isadigit{1}}{\isacharparenright}\ {\isacharparenleft}v\ {\isacharminus}\ wa{\isacharparenright}\ {\isacharparenleft}Alarm\ c{\isasymor}\ Hpres{\isacharparenright}{\isacharparenright}{\isacharparenright}{\isacharparenright}{\isacharparenright}\ q{\isadigit{0}}{\isachardoublequoteclose}\ {\isacharparenleft}\isakeyword{is}\ {\isachardoublequoteopen}{\isacharparenleft}{\isacharquery}goal{\isadigit{1}}\ c{\isasymand}\ {\isacharquery}goal{\isadigit{2}}\ c{\isasymand}\ {\isacharquery}goal{\isadigit{3}}{\isacharparenright}\ q{\isadigit{0}}{\isachardoublequoteclose}{\isacharparenright}\isanewline
\isadelimproof
\endisadelimproof
\isatagproof
\isacommand{proof}\isamarkupfalse%
\ {\isacharminus}\isanewline
\ \ \isacommand{have}\isamarkupfalse%
\ goal{\isadigit{1}}{\isacharcolon}\ {\isachardoublequoteopen}{\isacharquery}goal{\isadigit{1}}\ q{\isadigit{0}}{\isachardoublequoteclose}\ \isacommand{using}\isamarkupfalse%
\ eq{\isadigit{6}}\ \isacommand{by}\isamarkupfalse%
\ blast\isanewline
\isanewline
\ \ \isacommand{have}\isamarkupfalse%
\ {\isachardoublequoteopen}Pblock\ {\isadigit{1}}\ {\isacharparenleft}Hpres\ c{\isasymlongrightarrow}\ {\isacharparenleft}Pdiam\ {\isacharparenleft}{\isadigit{1}}{\isacharminus}{\isasymbeta}{\isacharparenright}\ wh\ {\isacharparenleft}Pdiam\ {\isacharparenleft}{\isadigit{1}}{\isacharminus}{\isasymepsilon}{\isadigit{1}}p{\isacharparenright}\ {\isacharparenleft}g{\isacharminus}wh{\isacharparenright}\ {\isacharparenleft}Pdur\ {\isacharparenleft}{\isadigit{1}}{\isacharminus}{\isasymepsilon}{\isadigit{2}}p{\isacharparenright}\ u\ {\isacharparenleft}c{\isasymnot}Alarm{\isacharparenright}{\isacharparenright}{\isacharparenright}{\isacharparenright}{\isacharparenright}\ q{\isadigit{0}}{\isachardoublequoteclose}\isanewline
\ \ \ \ \isacommand{using}\isamarkupfalse%
\ cimp{\isacharunderscore}def\ lemma{\isadigit{3}}{\isacharunderscore}{\isadigit{1}}{\isadigit{0}}\ eq{\isadigit{4}}\ eq{\isadigit{7}}\ \isacommand{by}\isamarkupfalse%
\ auto\isanewline
\ \ \isacommand{then}\isamarkupfalse%
\ \isacommand{have}\isamarkupfalse%
\ {\isachardoublequoteopen}Pblock\ {\isadigit{1}}\ {\isacharparenleft}Hpres\ c{\isasymlongrightarrow}\ {\isacharparenleft}Pdiam\ {\isacharparenleft}{\isacharparenleft}{\isadigit{1}}{\isacharminus}{\isasymbeta}{\isacharparenright}{\isacharasterisk}{\isacharparenleft}{\isadigit{1}}{\isacharminus}{\isasymepsilon}{\isadigit{1}}p{\isacharparenright}{\isacharparenright}\ g\ {\isacharparenleft}Pdur\ {\isacharparenleft}{\isadigit{1}}{\isacharminus}{\isasymepsilon}{\isadigit{2}}p{\isacharparenright}\ u\ {\isacharparenleft}c{\isasymnot}Alarm{\isacharparenright}{\isacharparenright}{\isacharparenright}{\isacharparenright}\ q{\isadigit{0}}{\isachardoublequoteclose}\isanewline
\ \ \ \ \isacommand{using}\isamarkupfalse%
\ diamcombine\ lemma{\isadigit{3}}{\isacharunderscore}{\isadigit{5}}\ cimp{\isacharunderscore}def\ imptrans\ \isacommand{by}\isamarkupfalse%
\ fastforce\isanewline
\ \ \isacommand{then}\isamarkupfalse%
\ \isacommand{have}\isamarkupfalse%
\ goal{\isadigit{2}}{\isacharcolon}\ {\isachardoublequoteopen}{\isacharquery}goal{\isadigit{2}}\ q{\isadigit{0}}{\isachardoublequoteclose}\ \isacommand{using}\isamarkupfalse%
\ constr{\isadigit{1}}\ constr{\isadigit{2}}\ lemma{\isadigit{3}}{\isacharunderscore}{\isadigit{1}}{\isadigit{1}}\ lemma{\isadigit{3}}{\isacharunderscore}{\isadigit{1}}{\isadigit{2}}\ lemma{\isadigit{3}}{\isacharunderscore}{\isadigit{1}}{\isadigit{0}}\ \isacommand{by}\isamarkupfalse%
\ {\isacharparenleft}smt\ lemma{\isadigit{3}}{\isacharunderscore}{\isadigit{5}}\ cimp{\isacharunderscore}def\ imptrans{\isacharparenright}\isanewline
\isanewline
\isanewline
\ \ \isacommand{have}\isamarkupfalse%
\ {\isachardoublequoteopen}Pblock\ {\isadigit{1}}\ {\isacharparenleft}{\isacharparenleft}c{\isasymnot}Hpres{\isacharparenright}\ c{\isasymlongrightarrow}\ {\isacharparenleft}Pdiam\ {\isacharparenleft}{\isadigit{1}}{\isacharminus}{\isasymbeta}{\isacharparenright}\ wh\ {\isacharparenleft}PW\ {\isacharparenleft}{\isadigit{1}}{\isacharminus}{\isasymalpha}{\isacharparenright}\ {\isacharparenleft}{\isacharparenleft}c{\isasymnot}Hdet{\isacharparenright}\ c{\isasymand}\ {\isacharparenleft}c{\isasymnot}Hdet\ c{\isasymlongrightarrow}\ {\isacharparenleft}Pdiam\ {\isacharparenleft}{\isadigit{1}}{\isacharminus}{\isasymdelta}{\isadigit{1}}p{\isacharparenright}\ {\isacharparenleft}v{\isacharminus}wa{\isacharminus}wh{\isacharparenright}\ {\isacharparenleft}Alarm\ c{\isasymor}\ Hdet{\isacharparenright}{\isacharparenright}{\isacharparenright}{\isacharparenright}\ Hpres{\isacharparenright}{\isacharparenright}{\isacharparenright}\ q{\isadigit{0}}{\isachardoublequoteclose}\isanewline
\ \ \ \ {\isacharparenleft}\isakeyword{is}\ {\isachardoublequoteopen}{\isacharquery}f\ {\isacharparenleft}c{\isasymnot}Hdet\ c{\isasymand}\ {\isacharparenleft}c{\isasymnot}Hdet\ c{\isasymlongrightarrow}\ {\isacharparenleft}Pdiam\ {\isacharparenleft}{\isadigit{1}}{\isacharminus}{\isasymdelta}{\isadigit{1}}p{\isacharparenright}\ {\isacharparenleft}v{\isacharminus}wa{\isacharminus}wh{\isacharparenright}\ {\isacharparenleft}Alarm\ c{\isasymor}\ Hdet{\isacharparenright}{\isacharparenright}{\isacharparenright}{\isacharparenright}\ q{\isadigit{0}}{\isachardoublequoteclose}{\isacharparenright}\isanewline
\ \ \ \ \isacommand{using}\isamarkupfalse%
\ eq{\isadigit{5}}\ eq{\isadigit{8}}\ andapplication{\isacharbrackleft}of\ {\isacharquery}f\ {\isachardoublequoteopen}c{\isasymnot}Hdet{\isachardoublequoteclose}\ q{\isadigit{0}}{\isacharbrackright}\ \isacommand{by}\isamarkupfalse%
\ blast\isanewline
\ \ \isacommand{then}\isamarkupfalse%
\ \isacommand{have}\isamarkupfalse%
\ step{\isadigit{1}}{\isacharcolon}\ {\isachardoublequoteopen}{\isacharquery}f\ {\isacharparenleft}c{\isasymnot}Hdet\ c{\isasymand}\ {\isacharparenleft}Pdiam\ {\isacharparenleft}{\isadigit{1}}{\isacharminus}{\isasymdelta}{\isadigit{1}}p{\isacharparenright}\ {\isacharparenleft}v{\isacharminus}wa{\isacharminus}wh{\isacharparenright}\ {\isacharparenleft}Alarm\ c{\isasymor}\ Hdet{\isacharparenright}{\isacharparenright}{\isacharparenright}\ q{\isadigit{0}}{\isachardoublequoteclose}\ {\isacharparenleft}\isakeyword{is}\ {\isachardoublequoteopen}Pblock\ {\isadigit{1}}\ {\isacharparenleft}c{\isasymnot}Hpres\ c{\isasymlongrightarrow}\ {\isacharquery}i{\isadigit{1}}{\isacharparenright}\ q{\isadigit{0}}{\isachardoublequoteclose}{\isacharparenright}\isanewline
\ \ \ \ \isacommand{using}\isamarkupfalse%
\ andimplies{\isacharbrackleft}of\ {\isachardoublequoteopen}c{\isasymnot}Hdet{\isachardoublequoteclose}\ {\isachardoublequoteopen}Pdiam\ {\isacharparenleft}{\isadigit{1}}{\isacharminus}{\isasymdelta}{\isadigit{1}}p{\isacharparenright}\ {\isacharparenleft}v{\isacharminus}wa{\isacharminus}wh{\isacharparenright}\ {\isacharparenleft}Alarm\ c{\isasymor}\ Hdet{\isacharparenright}{\isachardoublequoteclose}{\isacharbrackright}\ \isacommand{by}\isamarkupfalse%
\ metis\isanewline
\isanewline
\ \ \isacommand{have}\isamarkupfalse%
\ {\isachardoublequoteopen}{\isasymforall}q{\isachardot}\ {\isacharparenleft}{\isacharparenleft}PW\ {\isacharparenleft}{\isadigit{1}}{\isacharminus}{\isasymalpha}{\isacharparenright}\ {\isacharparenleft}c{\isasymnot}Hdet\ c{\isasymand}\ {\isacharparenleft}Pdiam\ {\isacharparenleft}{\isadigit{1}}{\isacharminus}{\isasymdelta}{\isadigit{1}}p{\isacharparenright}\ {\isacharparenleft}v{\isacharminus}wa{\isacharminus}wh{\isacharparenright}\ {\isacharparenleft}Alarm\ c{\isasymor}\ Hdet{\isacharparenright}{\isacharparenright}{\isacharparenright}\ Hpres{\isacharparenright}\ c{\isasymlongrightarrow}\isanewline
\ \ \ \ {\isacharparenleft}Pdiam\ {\isacharparenleft}{\isacharparenleft}{\isadigit{1}}{\isacharminus}{\isasymalpha}{\isacharparenright}{\isacharasterisk}{\isacharparenleft}{\isadigit{1}}{\isacharminus}{\isasymdelta}{\isadigit{1}}p{\isacharparenright}{\isacharparenright}\ {\isacharparenleft}v{\isacharminus}wa{\isacharminus}wh{\isacharparenright}\ {\isacharparenleft}Alarm\ c{\isasymor}\ Hpres{\isacharparenright}{\isacharparenright}{\isacharparenright}\ q{\isachardoublequoteclose}\isanewline
\ \ \ \ \isacommand{using}\isamarkupfalse%
\ lemma{\isadigit{3}}{\isacharunderscore}{\isadigit{7}}{\isacharbrackleft}of\ {\isachardoublequoteopen}{\isadigit{1}}{\isacharminus}{\isasymalpha}{\isachardoublequoteclose}\ {\isachardoublequoteopen}c{\isasymnot}Hdet{\isachardoublequoteclose}\ {\isachardoublequoteopen}{\isadigit{1}}{\isacharminus}{\isasymdelta}{\isadigit{1}}p{\isachardoublequoteclose}\ {\isachardoublequoteopen}v{\isacharminus}wa{\isacharminus}wh{\isachardoublequoteclose}\ Alarm\ Hpres{\isacharbrackright}\ cimp{\isacharunderscore}def\ diff{\isacharunderscore}add{\isacharunderscore}eq{\isacharunderscore}diff{\isacharunderscore}diff{\isacharunderscore}swap\ diff{\isacharunderscore}diff{\isacharunderscore}add\ notnot\ orcomm\ \isacommand{by}\isamarkupfalse%
\ auto\isanewline
\ \ \isacommand{then}\isamarkupfalse%
\ \isacommand{have}\isamarkupfalse%
\ {\isachardoublequoteopen}{\isasymforall}\ q{\isachardot}\ {\isacharparenleft}{\isacharparenleft}Pdiam\ {\isacharparenleft}{\isadigit{1}}{\isacharminus}{\isasymbeta}{\isacharparenright}\ wh\ {\isacharparenleft}PW\ {\isacharparenleft}{\isadigit{1}}{\isacharminus}{\isasymalpha}{\isacharparenright}\ {\isacharparenleft}c{\isasymnot}Hdet\ c{\isasymand}\ {\isacharparenleft}Pdiam\ {\isacharparenleft}{\isadigit{1}}{\isacharminus}{\isasymdelta}{\isadigit{1}}p{\isacharparenright}\ {\isacharparenleft}v{\isacharminus}wa{\isacharminus}wh{\isacharparenright}\ {\isacharparenleft}Alarm\ c{\isasymor}\ Hdet{\isacharparenright}{\isacharparenright}{\isacharparenright}\ Hpres{\isacharparenright}{\isacharparenright}\ c{\isasymlongrightarrow}\isanewline
\ \ \ \ {\isacharparenleft}Pdiam\ {\isacharparenleft}{\isadigit{1}}{\isacharminus}{\isasymbeta}{\isacharparenright}\ wh\ {\isacharparenleft}Pdiam\ {\isacharparenleft}{\isacharparenleft}{\isadigit{1}}{\isacharminus}{\isasymalpha}{\isacharparenright}{\isacharasterisk}{\isacharparenleft}{\isadigit{1}}{\isacharminus}{\isasymdelta}{\isadigit{1}}p{\isacharparenright}{\isacharparenright}\ {\isacharparenleft}v{\isacharminus}wa{\isacharminus}wh{\isacharparenright}\ {\isacharparenleft}Alarm\ c{\isasymor}\ Hpres{\isacharparenright}{\isacharparenright}{\isacharparenright}{\isacharparenright}\ q{\isachardoublequoteclose}\isanewline
\ \ \ \ {\isacharparenleft}\isakeyword{is}\ {\isachardoublequoteopen}{\isasymforall}q{\isachardot}\ {\isacharparenleft}{\isacharparenleft}Pdiam\ {\isacharparenleft}{\isadigit{1}}{\isacharminus}{\isasymbeta}{\isacharparenright}\ wh\ {\isacharquery}fa{\isacharparenright}\ c{\isasymlongrightarrow}\ {\isacharparenleft}Pdiam\ {\isacharparenleft}{\isadigit{1}}{\isacharminus}{\isasymbeta}{\isacharparenright}\ wh\ {\isacharquery}fb{\isacharparenright}{\isacharparenright}\ q{\isachardoublequoteclose}\ \isakeyword{is}\ {\isachardoublequoteopen}{\isasymforall}q{\isachardot}\ {\isacharparenleft}{\isacharquery}left\ c{\isasymlongrightarrow}\ {\isacharquery}middle{\isacharparenright}q{\isachardoublequoteclose}{\isacharparenright}\isanewline
\ \ \ \ \isacommand{using}\isamarkupfalse%
\ lemma{\isadigit{3}}{\isacharunderscore}{\isadigit{1}}{\isadigit{0}}{\isacharbrackleft}of\ {\isachardoublequoteopen}{\isadigit{1}}{\isacharminus}{\isasymbeta}{\isachardoublequoteclose}\ wh\ {\isacharquery}fa\ {\isacharunderscore}\ {\isacharquery}fb{\isacharbrackright}\ cimp{\isacharunderscore}def\ \isacommand{by}\isamarkupfalse%
\ auto\isanewline
\ \ \isacommand{moreover}\isamarkupfalse%
\ \isacommand{have}\isamarkupfalse%
\ {\isachardoublequoteopen}{\isasymforall}q{\isachardot}\ {\isacharparenleft}{\isacharquery}middle\ c{\isasymlongrightarrow}\ {\isacharparenleft}Pdiam\ {\isacharparenleft}{\isadigit{1}}{\isacharminus}{\isasymdelta}{\isadigit{1}}{\isacharparenright}\ {\isacharparenleft}v{\isacharminus}wa{\isacharparenright}\ {\isacharparenleft}Alarm\ c{\isasymor}\ Hpres{\isacharparenright}{\isacharparenright}{\isacharparenright}\ q{\isachardoublequoteclose}\ {\isacharparenleft}\isakeyword{is}\ {\isachardoublequoteopen}{\isasymforall}q{\isachardot}\ {\isacharparenleft}{\isacharquery}middle\ c{\isasymlongrightarrow}\ {\isacharquery}right{\isacharparenright}\ q{\isachardoublequoteclose}{\isacharparenright}\isanewline
\ \ \ \ \isacommand{using}\isamarkupfalse%
\ diamcombine\ cimp{\isacharunderscore}def\ mult{\isachardot}assoc\ lemma{\isadigit{3}}{\isacharunderscore}{\isadigit{1}}{\isadigit{1}}\ \isacommand{by}\isamarkupfalse%
\ {\isacharparenleft}smt\ Groups{\isachardot}mult{\isacharunderscore}ac{\isacharparenleft}{\isadigit{2}}{\isacharparenright}\ add{\isachardot}commute\ constr{\isadigit{4}}\ diff{\isacharunderscore}add{\isacharunderscore}cancel{\isacharparenright}\isanewline
\ \ \isacommand{ultimately}\isamarkupfalse%
\ \isacommand{have}\isamarkupfalse%
\ {\isachardoublequoteopen}{\isasymforall}q{\isachardot}\ {\isacharparenleft}{\isacharquery}left\ c{\isasymlongrightarrow}\ {\isacharquery}right{\isacharparenright}\ q{\isachardoublequoteclose}\isanewline
\ \ \ \ \isacommand{by}\isamarkupfalse%
\ {\isacharparenleft}meson\ diff{\isacharunderscore}diff{\isacharunderscore}add\ imptrans\ mult{\isachardot}commute\ mult{\isachardot}left{\isacharunderscore}commute{\isacharparenright}\isanewline
\ \ \isacommand{then}\isamarkupfalse%
\ \isacommand{have}\isamarkupfalse%
\ {\isachardoublequoteopen}{\isasymforall}q{\isachardot}\ {\isacharparenleft}c{\isasymnot}Hpres\ c{\isasymlongrightarrow}\ {\isacharquery}right{\isacharparenright}\ q{\isachardoublequoteclose}\ \isacommand{using}\isamarkupfalse%
\ step{\isadigit{1}}\ lemma{\isadigit{3}}{\isacharunderscore}{\isadigit{5}}\ \isacommand{by}\isamarkupfalse%
\ {\isacharparenleft}metis\ imptrans{\isacharparenright}\isanewline
\ \ \isacommand{then}\isamarkupfalse%
\ \isacommand{have}\isamarkupfalse%
\ goal{\isadigit{3}}{\isacharcolon}\ {\isachardoublequoteopen}{\isacharquery}goal{\isadigit{3}}\ q{\isadigit{0}}{\isachardoublequoteclose}\ \isacommand{using}\isamarkupfalse%
\ constr{\isadigit{4}}\ lemma{\isadigit{3}}{\isacharunderscore}{\isadigit{5}}\ \isacommand{by}\isamarkupfalse%
\ blast\isanewline
\isanewline
\ \ \isacommand{show}\isamarkupfalse%
\ {\isacharquery}thesis\ \isacommand{using}\isamarkupfalse%
\ goal{\isadigit{1}}\ goal{\isadigit{2}}\ goal{\isadigit{3}}\ cand{\isacharunderscore}def\ \isacommand{by}\isamarkupfalse%
\ auto\isanewline
\isacommand{qed}\isamarkupfalse%
\endisatagproof
{\isafoldproof}%
\isadelimproof
\endisadelimproof
\isamarkupsubsubsection{Theorem 3.2%
}
\isamarkuptrue%
\isacommand{lemma}\isamarkupfalse%
\ t{\isadigit{3}}dot{\isadigit{2}}{\isacharcolon}\isanewline
\ \ \isakeyword{assumes}\ eqa{\isacharcolon}\ {\isachardoublequoteopen}Pblock\ {\isadigit{1}}\ {\isacharparenleft}c{\isasymnot}Hpres\ c{\isasymlongrightarrow}\ {\isacharparenleft}Pdiam\ {\isacharparenleft}{\isadigit{1}}{\isacharminus}{\isasymbeta}{\isacharparenright}\ wh\ {\isacharparenleft}PW\ {\isacharparenleft}{\isadigit{1}}{\isacharminus}{\isasymalpha}{\isacharparenright}\ {\isacharparenleft}c{\isasymnot}Hdet{\isacharparenright}\ Hpres{\isacharparenright}{\isacharparenright}{\isacharparenright}\ q{\isadigit{0}}{\isachardoublequoteclose}\isanewline
\ \ \isakeyword{assumes}\ eqb{\isacharcolon}\ {\isachardoublequoteopen}Pblock\ {\isadigit{1}}\ {\isacharparenleft}Hpres\ c{\isasymlongrightarrow}\ {\isacharparenleft}Pdiam\ {\isacharparenleft}{\isadigit{1}}{\isacharminus}{\isasymbeta}{\isacharparenright}\ wh\ Hdet{\isacharparenright}{\isacharparenright}\ q{\isadigit{0}}{\isachardoublequoteclose}\isanewline
\ \ \isakeyword{shows}\ {\isachardoublequoteopen}{\isacharparenleft}{\isacharparenleft}Pblock\ {\isadigit{1}}\ {\isacharparenleft}c{\isasymnot}Hpres\ c{\isasymlongrightarrow}\ {\isacharparenleft}Pdiam\ {\isacharparenleft}{\isadigit{1}}{\isacharminus}{\isasymbeta}{\isacharparenright}\ wh\ {\isacharparenleft}PW\ {\isacharparenleft}{\isadigit{1}}{\isacharminus}{\isasymalpha}{\isacharparenright}\ {\isacharparenleft}c{\isasymnot}Hdet{\isacharparenright}\ Hpres{\isacharparenright}{\isacharparenright}{\isacharparenright}{\isacharparenright}\ c{\isasymand}\ {\isacharparenleft}Pblock\ {\isadigit{1}}\ {\isacharparenleft}Hpres\ c{\isasymlongrightarrow}\ {\isacharparenleft}Pdiam\ {\isacharparenleft}{\isadigit{1}}{\isacharminus}{\isasymbeta}{\isacharparenright}\ wh\ Hdet{\isacharparenright}{\isacharparenright}{\isacharparenright}{\isacharparenright}\ q{\isadigit{0}}{\isachardoublequoteclose}\isanewline
\isadelimproof
\endisadelimproof
\isatagproof
\isacommand{proof}\isamarkupfalse%
\ {\isacharminus}\isanewline
\ \ \isacommand{show}\isamarkupfalse%
\ {\isacharquery}thesis\ \isacommand{using}\isamarkupfalse%
\ eqa\ eqb\ cand{\isacharunderscore}def\ \isacommand{by}\isamarkupfalse%
\ auto\isanewline
\isacommand{qed}\isamarkupfalse%
\endisatagproof
{\isafoldproof}%
\isadelimproof
\endisadelimproof
\isamarkupsubsubsection{Theorem 3.3%
}
\isamarkuptrue%
\isacommand{lemma}\isamarkupfalse%
\ t{\isadigit{3}}dot{\isadigit{3}}{\isacharcolon}\isanewline
\ \ \isakeyword{assumes}\ eq{\isadigit{1}}{\isadigit{2}}{\isacharcolon}\ {\isachardoublequoteopen}Reset\ q{\isadigit{0}}{\isachardoublequoteclose}\isanewline
\ \ \isakeyword{assumes}\ eq{\isadigit{1}}{\isadigit{3}}{\isacharcolon}\ {\isachardoublequoteopen}Pblock\ {\isadigit{1}}\ {\isacharparenleft}Hdet\ c{\isasymlongrightarrow}\ {\isacharparenleft}Pdiam\ {\isacharparenleft}{\isadigit{1}}{\isacharminus}l{\isadigit{1}}{\isacharparenright}\ won\ Reset{\isacharparenright}{\isacharparenright}\ q{\isadigit{0}}{\isachardoublequoteclose}\ {\isacharparenleft}\isakeyword{is}\ {\isachardoublequoteopen}Pblock\ {\isadigit{1}}\ {\isacharquery}eq{\isadigit{1}}{\isadigit{3}}\ q{\isadigit{0}}{\isachardoublequoteclose}{\isacharparenright}\isanewline
\ \ \isakeyword{assumes}\ eq{\isadigit{1}}{\isadigit{4}}{\isacharcolon}\ {\isachardoublequoteopen}Pblock\ {\isadigit{1}}\ {\isacharparenleft}Reset\ c{\isasymlongrightarrow}\ {\isacharparenleft}Pdur\ {\isacharparenleft}{\isadigit{1}}{\isacharminus}{\isasymgamma}{\isadigit{1}}{\isacharparenright}\ u\ ThL{\isacharparenright}{\isacharparenright}\ q{\isadigit{0}}{\isachardoublequoteclose}\ {\isacharparenleft}\isakeyword{is}\ {\isachardoublequoteopen}Pblock\ {\isadigit{1}}\ {\isacharquery}eq{\isadigit{1}}{\isadigit{4}}\ q{\isadigit{0}}{\isachardoublequoteclose}{\isacharparenright}\isanewline
\ \ \isakeyword{assumes}\ eq{\isadigit{1}}{\isadigit{5}}{\isacharcolon}\ {\isachardoublequoteopen}Pblock\ {\isadigit{1}}\ {\isacharparenleft}c{\isasymnot}Hdet\ c{\isasymlongrightarrow}\ {\isacharparenleft}Pdiam\ {\isacharparenleft}{\isadigit{1}}{\isacharminus}{\isasymeta}{\isadigit{1}}{\isacharparenright}\ {\isacharparenleft}v\ {\isacharminus}\ wa\ {\isacharminus}\ {\isadigit{2}}{\isacharasterisk}wh\ {\isacharminus}\ wth{\isacharparenright}\ {\isacharparenleft}PW\ {\isacharparenleft}{\isadigit{1}}{\isacharminus}{\isasymeta}{\isadigit{2}}{\isacharparenright}\ ThH\ {\isacharparenleft}Pdiam\ {\isacharparenleft}{\isadigit{1}}{\isacharminus}{\isasymeta}{\isadigit{3}}{\isacharparenright}\ wh\ Hdet{\isacharparenright}{\isacharparenright}{\isacharparenright}{\isacharparenright}\ q{\isadigit{0}}{\isachardoublequoteclose}\ {\isacharparenleft}\isakeyword{is}\ {\isachardoublequoteopen}Pblock\ {\isadigit{1}}\ {\isacharquery}eq{\isadigit{1}}{\isadigit{5}}\ q{\isadigit{0}}{\isachardoublequoteclose}{\isacharparenright}\isanewline
\ \ \isakeyword{assumes}\ eq{\isadigit{1}}{\isadigit{6}}{\isacharcolon}\ {\isachardoublequoteopen}Pblock\ {\isadigit{1}}\ {\isacharparenleft}ThL\ c{\isasymlongrightarrow}\ {\isacharparenleft}Pdiam\ {\isacharparenleft}{\isadigit{1}}{\isacharminus}l{\isadigit{2}}{\isacharparenright}\ woff\ {\isacharparenleft}Pdur\ {\isacharparenleft}{\isadigit{1}}{\isacharminus}l{\isadigit{3}}{\isacharparenright}\ u\ {\isacharparenleft}c{\isasymnot}Alarm{\isacharparenright}{\isacharparenright}{\isacharparenright}{\isacharparenright}\ q{\isadigit{0}}{\isachardoublequoteclose}\ {\isacharparenleft}\isakeyword{is}\ {\isachardoublequoteopen}Pblock\ {\isadigit{1}}\ {\isacharquery}eq{\isadigit{1}}{\isadigit{6}}\ q{\isadigit{0}}{\isachardoublequoteclose}{\isacharparenright}\isanewline
\ \ \isakeyword{assumes}\ eq{\isadigit{1}}{\isadigit{7}}{\isacharcolon}\ {\isachardoublequoteopen}Pblock\ {\isadigit{1}}\ {\isacharparenleft}ThH\ c{\isasymlongrightarrow}\ {\isacharparenleft}Pdiam\ {\isacharparenleft}{\isadigit{1}}{\isacharminus}{\isasymeta}{\isadigit{4}}{\isacharparenright}\ wth\ {\isacharparenleft}Alarm\ c{\isasymor}\ c{\isasymnot}ThH{\isacharparenright}{\isacharparenright}{\isacharparenright}\ q{\isadigit{0}}{\isachardoublequoteclose}\ {\isacharparenleft}\isakeyword{is}\ {\isachardoublequoteopen}Pblock\ {\isadigit{1}}\ {\isacharquery}eq{\isadigit{1}}{\isadigit{7}}\ q{\isadigit{0}}{\isachardoublequoteclose}{\isacharparenright}\isanewline
\ \ \isakeyword{assumes}\ eq{\isadigit{1}}{\isadigit{8}}{\isacharcolon}\ {\isachardoublequoteopen}PW\ {\isacharparenleft}{\isadigit{1}}{\isacharminus}{\isasymgamma}{\isadigit{2}}{\isacharparenright}\ {\isacharparenleft}c{\isasymnot}Alarm{\isacharparenright}\ {\isacharparenleft}c{\isasymnot}ThL{\isacharparenright}\ q{\isadigit{0}}{\isachardoublequoteclose}\isanewline
\ \ \isakeyword{shows}\ {\isachardoublequoteopen}{\isacharparenleft}{\isacharparenleft}Pdur\ {\isacharparenleft}{\isadigit{1}}{\isacharminus}{\isasymepsilon}{\isacharparenright}\ u\ {\isacharparenleft}c{\isasymnot}Alarm{\isacharparenright}{\isacharparenright}\ c{\isasymand}\isanewline
\ \ \ \ {\isacharparenleft}Pblock\ {\isadigit{1}}\ {\isacharparenleft}Hdet\ c{\isasymlongrightarrow}\ {\isacharparenleft}Pdiam\ {\isacharparenleft}{\isadigit{1}}{\isacharminus}{\isasymepsilon}{\isadigit{1}}p{\isacharparenright}\ {\isacharparenleft}g\ {\isacharminus}\ wh{\isacharparenright}\ {\isacharparenleft}Pdur\ {\isacharparenleft}{\isadigit{1}}{\isacharminus}{\isasymepsilon}{\isadigit{2}}p{\isacharparenright}\ u\ {\isacharparenleft}c{\isasymnot}Alarm{\isacharparenright}{\isacharparenright}{\isacharparenright}{\isacharparenright}{\isacharparenright}\ c{\isasymand}\isanewline
\ \ \ \ {\isacharparenleft}Pblock\ {\isadigit{1}}\ {\isacharparenleft}c{\isasymnot}Hdet\ c{\isasymlongrightarrow}\ Pdiam\ {\isacharparenleft}{\isadigit{1}}{\isacharminus}{\isasymdelta}{\isadigit{1}}p{\isacharparenright}\ {\isacharparenleft}v\ {\isacharminus}\ wa\ {\isacharminus}\ wh{\isacharparenright}\ {\isacharparenleft}Alarm\ c{\isasymor}\ Hdet{\isacharparenright}{\isacharparenright}{\isacharparenright}{\isacharparenright}\ q{\isadigit{0}}{\isachardoublequoteclose}\ {\isacharparenleft}\isakeyword{is}\ {\isachardoublequoteopen}{\isacharparenleft}{\isacharquery}goal{\isadigit{1}}\ c{\isasymand}\ {\isacharquery}goal{\isadigit{2}}\ c{\isasymand}\ {\isacharquery}goal{\isadigit{3}}{\isacharparenright}\ q{\isadigit{0}}{\isachardoublequoteclose}{\isacharparenright}\isanewline
\isadelimproof
\endisadelimproof
\isatagproof
\isacommand{proof}\isamarkupfalse%
\ {\isacharminus}\isanewline
\ \ \isacommand{have}\isamarkupfalse%
\ {\isachardoublequoteopen}Pdur\ {\isacharparenleft}{\isadigit{1}}{\isacharminus}{\isasymgamma}{\isadigit{1}}{\isacharparenright}\ u\ ThL\ q{\isadigit{0}}{\isachardoublequoteclose}\ \isacommand{using}\isamarkupfalse%
\ cimp{\isacharunderscore}def\ eq{\isadigit{1}}{\isadigit{2}}\ eq{\isadigit{1}}{\isadigit{4}}\ \isacommand{by}\isamarkupfalse%
\ simp\isanewline
\ \ \isacommand{then}\isamarkupfalse%
\ \isacommand{have}\isamarkupfalse%
\ {\isachardoublequoteopen}Pdur\ {\isacharparenleft}{\isadigit{1}}{\isacharminus}{\isasymgamma}{\isadigit{1}}{\isacharminus}{\isasymgamma}{\isadigit{2}}{\isacharparenright}\ u\ {\isacharparenleft}c{\isasymnot}Alarm{\isacharparenright}\ q{\isadigit{0}}{\isachardoublequoteclose}\isanewline
\ \ \ \ \isacommand{using}\isamarkupfalse%
\ eq{\isadigit{1}}{\isadigit{8}}\ lemma{\isadigit{3}}{\isacharunderscore}{\isadigit{8}}\ \isacommand{by}\isamarkupfalse%
\ {\isacharparenleft}smt\ add{\isachardot}commute\ add{\isacharunderscore}diff{\isacharunderscore}cancel{\isacharunderscore}left\ add{\isacharunderscore}diff{\isacharunderscore}cancel{\isacharunderscore}right\ diff{\isacharunderscore}add{\isacharunderscore}cancel{\isacharparenright}\isanewline
\ \ \isacommand{then}\isamarkupfalse%
\ \isacommand{have}\isamarkupfalse%
\ goal{\isadigit{1}}{\isacharcolon}\ {\isachardoublequoteopen}{\isacharquery}goal{\isadigit{1}}\ q{\isadigit{0}}{\isachardoublequoteclose}\ \isacommand{using}\isamarkupfalse%
\ lemma{\isadigit{3}}{\isacharunderscore}{\isadigit{1}}{\isadigit{2}}\ constr{\isadigit{5}}\ \isacommand{by}\isamarkupfalse%
\ blast\isanewline
\isanewline
\ \ \isacommand{have}\isamarkupfalse%
\ {\isachardoublequoteopen}Pblock\ {\isadigit{1}}\ {\isacharparenleft}Hdet\ c{\isasymlongrightarrow}\ {\isacharparenleft}Pdiam\ \ {\isacharparenleft}{\isadigit{1}}{\isacharminus}l{\isadigit{1}}{\isacharparenright}\ won\ ThL{\isacharparenright}{\isacharparenright}\ q{\isadigit{0}}{\isachardoublequoteclose}\isanewline
\ \ \ \ \isacommand{using}\isamarkupfalse%
\ lemma{\isadigit{3}}{\isacharunderscore}{\isadigit{9}}\ lemma{\isadigit{3}}{\isacharunderscore}{\isadigit{1}}{\isadigit{0}}\ constr{\isadigit{8}}\ lemma{\isadigit{3}}{\isacharunderscore}{\isadigit{5}}\ cimp{\isacharunderscore}def\ eq{\isadigit{1}}{\isadigit{3}}\ eq{\isadigit{1}}{\isadigit{4}}\ imptrans\ \isacommand{by}\isamarkupfalse%
\ smt\isanewline
\ \ \isacommand{then}\isamarkupfalse%
\ \isacommand{have}\isamarkupfalse%
\ {\isachardoublequoteopen}Pblock\ {\isadigit{1}}\ {\isacharparenleft}Hdet\ c{\isasymlongrightarrow}\ {\isacharparenleft}Pdiam\ {\isacharparenleft}{\isadigit{1}}{\isacharminus}l{\isadigit{1}}{\isacharparenright}\ won\ {\isacharparenleft}Pdiam\ {\isacharparenleft}{\isadigit{1}}{\isacharminus}l{\isadigit{2}}{\isacharparenright}\ woff\ {\isacharparenleft}Pdur\ {\isacharparenleft}{\isadigit{1}}{\isacharminus}l{\isadigit{3}}{\isacharparenright}\ u\ {\isacharparenleft}c{\isasymnot}Alarm{\isacharparenright}{\isacharparenright}{\isacharparenright}{\isacharparenright}{\isacharparenright}\ q{\isadigit{0}}{\isachardoublequoteclose}\isanewline
\ \ \ \ \isacommand{using}\isamarkupfalse%
\ andapplication\ eq{\isadigit{1}}{\isadigit{6}}\ lemma{\isadigit{3}}{\isacharunderscore}{\isadigit{1}}{\isadigit{0}}\ lemma{\isadigit{3}}{\isacharunderscore}{\isadigit{5}}\ cimp{\isacharunderscore}def\ \isacommand{by}\isamarkupfalse%
\ smt\isanewline
\ \ \isacommand{moreover}\isamarkupfalse%
\ \isacommand{have}\isamarkupfalse%
\ {\isachardoublequoteopen}{\isasymforall}q{\isachardot}\ {\isacharparenleft}{\isacharparenleft}Pdiam\ {\isacharparenleft}{\isadigit{1}}{\isacharminus}l{\isadigit{1}}{\isacharparenright}\ won\ {\isacharparenleft}Pdiam\ {\isacharparenleft}{\isadigit{1}}{\isacharminus}l{\isadigit{2}}{\isacharparenright}\ woff\ {\isacharparenleft}Pdur\ {\isacharparenleft}{\isadigit{1}}{\isacharminus}l{\isadigit{3}}{\isacharparenright}\ u\ {\isacharparenleft}c{\isasymnot}Alarm{\isacharparenright}{\isacharparenright}{\isacharparenright}{\isacharparenright}\ c{\isasymlongrightarrow}\isanewline
\ \ \ \ {\isacharparenleft}Pdiam\ {\isacharparenleft}{\isadigit{1}}{\isacharminus}{\isasymepsilon}{\isadigit{1}}p{\isacharparenright}\ {\isacharparenleft}g{\isacharminus}wh{\isacharparenright}\ {\isacharparenleft}Pdur\ {\isacharparenleft}{\isadigit{1}}{\isacharminus}{\isasymepsilon}{\isadigit{2}}p{\isacharparenright}\ u\ {\isacharparenleft}c{\isasymnot}Alarm{\isacharparenright}{\isacharparenright}{\isacharparenright}{\isacharparenright}\ q{\isachardoublequoteclose}\isanewline
\ \ \ \ \isacommand{using}\isamarkupfalse%
\ diamcombine\ lemma{\isadigit{3}}{\isacharunderscore}{\isadigit{1}}{\isadigit{1}}\ lemma{\isadigit{3}}{\isacharunderscore}{\isadigit{1}}{\isadigit{2}}\ constr{\isadigit{7}}\ constr{\isadigit{9}}\ constr{\isadigit{1}}{\isadigit{0}}\ cimp{\isacharunderscore}def\ lemma{\isadigit{3}}{\isacharunderscore}{\isadigit{1}}{\isadigit{0}}\ \isacommand{by}\isamarkupfalse%
\ smt\isanewline
\ \ \isacommand{ultimately}\isamarkupfalse%
\ \isacommand{have}\isamarkupfalse%
\ goal{\isadigit{2}}{\isacharcolon}\ {\isachardoublequoteopen}{\isacharquery}goal{\isadigit{2}}\ q{\isadigit{0}}{\isachardoublequoteclose}\ \isacommand{using}\isamarkupfalse%
\ lemma{\isadigit{3}}{\isacharunderscore}{\isadigit{5}}\ cimp{\isacharunderscore}def\ \isacommand{by}\isamarkupfalse%
\ simp\isanewline
\isanewline
\ \ \isacommand{have}\isamarkupfalse%
\ {\isachardoublequoteopen}Pblock\ {\isadigit{1}}\ {\isacharparenleft}c{\isasymnot}Hdet\ c{\isasymlongrightarrow}\ {\isacharparenleft}Pdiam\ {\isacharparenleft}{\isadigit{1}}{\isacharminus}{\isasymeta}{\isadigit{1}}{\isacharparenright}\ {\isacharparenleft}v{\isacharminus}wa{\isacharminus}{\isadigit{2}}{\isacharasterisk}wh{\isacharminus}wth{\isacharparenright}\ {\isacharparenleft}PW\ {\isacharparenleft}{\isadigit{1}}{\isacharminus}{\isasymeta}{\isadigit{2}}{\isacharparenright}\ {\isacharparenleft}ThH\ c{\isasymand}\isanewline
\ \ \ \ {\isacharparenleft}ThH\ c{\isasymlongrightarrow}\ {\isacharparenleft}Pdiam\ {\isacharparenleft}{\isadigit{1}}{\isacharminus}{\isasymeta}{\isadigit{4}}{\isacharparenright}\ wth\ {\isacharparenleft}Alarm\ c{\isasymor}\ c{\isasymnot}ThH{\isacharparenright}{\isacharparenright}{\isacharparenright}{\isacharparenright}\ {\isacharparenleft}Pdiam\ {\isacharparenleft}{\isadigit{1}}{\isacharminus}{\isasymeta}{\isadigit{3}}{\isacharparenright}\ wh\ Hdet{\isacharparenright}{\isacharparenright}{\isacharparenright}{\isacharparenright}\ q{\isadigit{0}}{\isachardoublequoteclose}\isanewline
\ \ \ \ {\isacharparenleft}\isakeyword{is}\ {\isachardoublequoteopen}{\isacharquery}f\ {\isacharparenleft}ThH\ c{\isasymand}\ {\isacharparenleft}ThH\ c{\isasymlongrightarrow}\ {\isacharparenleft}Pdiam\ {\isacharparenleft}{\isadigit{1}}{\isacharminus}{\isasymeta}{\isadigit{4}}{\isacharparenright}\ wth\ {\isacharparenleft}Alarm\ c{\isasymor}\ c{\isasymnot}ThH{\isacharparenright}{\isacharparenright}{\isacharparenright}{\isacharparenright}\ q{\isadigit{0}}{\isachardoublequoteclose}{\isacharparenright}\isanewline
\ \ \ \ \isacommand{using}\isamarkupfalse%
\ eq{\isadigit{1}}{\isadigit{5}}\ eq{\isadigit{1}}{\isadigit{7}}\ andapplication{\isacharbrackleft}of\ {\isacharquery}f\ ThH\ q{\isadigit{0}}{\isacharbrackright}\ \isacommand{by}\isamarkupfalse%
\ blast\isanewline
\ \ \isacommand{then}\isamarkupfalse%
\ \isacommand{have}\isamarkupfalse%
\ step{\isadigit{1}}{\isacharcolon}\ {\isachardoublequoteopen}{\isacharquery}f\ {\isacharparenleft}ThH\ c{\isasymand}\ {\isacharparenleft}Pdiam\ {\isacharparenleft}{\isadigit{1}}{\isacharminus}{\isasymeta}{\isadigit{4}}{\isacharparenright}\ wth\ {\isacharparenleft}Alarm\ c{\isasymor}\ c{\isasymnot}ThH{\isacharparenright}{\isacharparenright}{\isacharparenright}\ q{\isadigit{0}}{\isachardoublequoteclose}\isanewline
\ \ \ \ \ {\isacharparenleft}\isakeyword{is}\ {\isachardoublequoteopen}Pblock\ {\isadigit{1}}\ {\isacharparenleft}c{\isasymnot}Hdet\ c{\isasymlongrightarrow}\ {\isacharparenleft}Pdiam\ {\isacharparenleft}{\isadigit{1}}{\isacharminus}{\isasymeta}{\isadigit{1}}{\isacharparenright}\ {\isacharparenleft}v{\isacharminus}wa{\isacharminus}{\isadigit{2}}{\isacharasterisk}wh{\isacharminus}wth{\isacharparenright}\ {\isacharquery}ph{\isacharparenright}{\isacharparenright}\ q{\isadigit{0}}{\isachardoublequoteclose}{\isacharparenright}\isanewline
\ \ \ \ \isacommand{using}\isamarkupfalse%
\ andimplies\ \isacommand{by}\isamarkupfalse%
\ metis\isanewline
\ \ \isacommand{then}\isamarkupfalse%
\ \isacommand{have}\isamarkupfalse%
\ {\isachardoublequoteopen}Pblock\ {\isadigit{1}}\ {\isacharparenleft}c{\isasymnot}Hdet\ c{\isasymlongrightarrow}\ {\isacharparenleft}Pdiam\ {\isacharparenleft}{\isadigit{1}}{\isacharminus}{\isasymeta}{\isadigit{1}}{\isacharparenright}\ {\isacharparenleft}v{\isacharminus}wa{\isacharminus}{\isadigit{2}}{\isacharasterisk}wh{\isacharminus}wth{\isacharparenright}\isanewline
\ \ \ \ {\isacharparenleft}Pdiam\ {\isacharparenleft}{\isacharparenleft}{\isadigit{1}}{\isacharminus}{\isasymeta}{\isadigit{2}}{\isacharparenright}{\isacharasterisk}{\isacharparenleft}{\isadigit{1}}{\isacharminus}{\isasymeta}{\isadigit{4}}{\isacharparenright}{\isacharparenright}\ wth\ {\isacharparenleft}Alarm\ c{\isasymor}\ {\isacharparenleft}Pdiam\ {\isacharparenleft}{\isadigit{1}}{\isacharminus}{\isasymeta}{\isadigit{3}}{\isacharparenright}\ wh\ Hdet{\isacharparenright}{\isacharparenright}{\isacharparenright}{\isacharparenright}{\isacharparenright}\ q{\isadigit{0}}{\isachardoublequoteclose}\isanewline
\ \ \ \ \isacommand{using}\isamarkupfalse%
\ lemma{\isadigit{3}}{\isacharunderscore}{\isadigit{7}}\ lemma{\isadigit{3}}{\isacharunderscore}{\isadigit{1}}{\isadigit{0}}\ cimp{\isacharunderscore}def\ \isacommand{by}\isamarkupfalse%
\ {\isacharparenleft}smt\ lemma{\isadigit{3}}{\isacharunderscore}{\isadigit{5}}\ orcomm{\isacharparenright}\isanewline
\ \ \isacommand{then}\isamarkupfalse%
\ \isacommand{have}\isamarkupfalse%
\ {\isachardoublequoteopen}Pblock\ {\isadigit{1}}\ {\isacharparenleft}c{\isasymnot}Hdet\ c{\isasymlongrightarrow}\ {\isacharparenleft}Pdiam\ {\isacharparenleft}{\isacharparenleft}{\isadigit{1}}{\isacharminus}{\isasymeta}{\isadigit{1}}{\isacharparenright}{\isacharasterisk}{\isacharparenleft}{\isadigit{1}}{\isacharminus}{\isasymeta}{\isadigit{2}}{\isacharparenright}{\isacharasterisk}{\isacharparenleft}{\isadigit{1}}{\isacharminus}{\isasymeta}{\isadigit{4}}{\isacharparenright}{\isacharparenright}\ {\isacharparenleft}v{\isacharminus}wa{\isacharminus}{\isadigit{2}}{\isacharasterisk}wh{\isacharparenright}\ {\isacharparenleft}Alarm\ c{\isasymor}\ Pdiam\ {\isacharparenleft}{\isadigit{1}}{\isacharminus}{\isasymeta}{\isadigit{3}}{\isacharparenright}\ wh\ Hdet{\isacharparenright}{\isacharparenright}{\isacharparenright}\ q{\isadigit{0}}{\isachardoublequoteclose}\isanewline
\ \ \ \ \isacommand{using}\isamarkupfalse%
\ diamcombine\ cimp{\isacharunderscore}def\ \isacommand{by}\isamarkupfalse%
\ {\isacharparenleft}smt\ lemma{\isadigit{3}}{\isacharunderscore}{\isadigit{5}}\ imptrans\ mult{\isachardot}assoc\ orcomm{\isacharparenright}\isanewline
\ \ \isacommand{then}\isamarkupfalse%
\ \isacommand{have}\isamarkupfalse%
\ {\isachardoublequoteopen}Pblock\ {\isadigit{1}}\ {\isacharparenleft}c{\isasymnot}Hdet\ c{\isasymlongrightarrow}\ Pdiam\ {\isacharparenleft}{\isacharparenleft}{\isadigit{1}}{\isacharminus}{\isasymeta}{\isadigit{1}}{\isacharparenright}{\isacharasterisk}{\isacharparenleft}{\isadigit{1}}{\isacharminus}{\isasymeta}{\isadigit{2}}{\isacharparenright}{\isacharasterisk}{\isacharparenleft}{\isadigit{1}}{\isacharminus}{\isasymeta}{\isadigit{3}}{\isacharparenright}{\isacharasterisk}{\isacharparenleft}{\isadigit{1}}{\isacharminus}{\isasymeta}{\isadigit{4}}{\isacharparenright}{\isacharparenright}\ {\isacharparenleft}v{\isacharminus}wa{\isacharminus}wh{\isacharparenright}\ {\isacharparenleft}Alarm\ c{\isasymor}\ Hdet{\isacharparenright}{\isacharparenright}\ q{\isadigit{0}}{\isachardoublequoteclose}\isanewline
\ \ \ \ \isacommand{using}\isamarkupfalse%
\ lemma{\isadigit{3}}{\isacharunderscore}{\isadigit{6}}{\isacharbrackleft}of\ {\isachardoublequoteopen}{\isacharparenleft}{\isadigit{1}}{\isacharminus}{\isasymeta}{\isadigit{1}}{\isacharparenright}{\isacharasterisk}{\isacharparenleft}{\isadigit{1}}{\isacharminus}{\isasymeta}{\isadigit{2}}{\isacharparenright}{\isacharasterisk}{\isacharparenleft}{\isadigit{1}}{\isacharminus}{\isasymeta}{\isadigit{4}}{\isacharparenright}{\isachardoublequoteclose}\ {\isachardoublequoteopen}v{\isacharminus}wa{\isacharminus}{\isadigit{2}}{\isacharasterisk}wh{\isachardoublequoteclose}\ Alarm\ {\isachardoublequoteopen}{\isadigit{1}}{\isacharminus}{\isasymeta}{\isadigit{3}}{\isachardoublequoteclose}\ wh\ Hdet{\isacharbrackright}\ cimp{\isacharunderscore}def\ \isanewline
\ \ \ \ \isacommand{by}\isamarkupfalse%
\ {\isacharparenleft}smt\ lemma{\isadigit{3}}{\isacharunderscore}{\isadigit{5}}\ imptrans\ mult{\isachardot}assoc\ mult{\isachardot}commute{\isacharparenright}\isanewline
\ \ \isacommand{then}\isamarkupfalse%
\ \isacommand{have}\isamarkupfalse%
\ goal{\isadigit{3}}{\isacharcolon}\ {\isachardoublequoteopen}{\isacharquery}goal{\isadigit{3}}\ q{\isadigit{0}}{\isachardoublequoteclose}\ \isacommand{using}\isamarkupfalse%
\ constr{\isadigit{1}}{\isadigit{1}}\ lemma{\isadigit{3}}{\isacharunderscore}{\isadigit{1}}{\isadigit{1}}\ cimp{\isacharunderscore}def\ lemma{\isadigit{3}}{\isacharunderscore}{\isadigit{5}}\ \isacommand{by}\isamarkupfalse%
\ fastforce\isanewline
\isanewline
\ \ \isacommand{show}\isamarkupfalse%
\ {\isacharquery}thesis\ \isacommand{using}\isamarkupfalse%
\ goal{\isadigit{1}}\ goal{\isadigit{2}}\ goal{\isadigit{3}}\ cand{\isacharunderscore}def\ \isacommand{by}\isamarkupfalse%
\ auto\isanewline
\isacommand{qed}\isamarkupfalse%
\endisatagproof
{\isafoldproof}%
\isadelimproof
\isanewline
\endisadelimproof
\isanewline
\isacommand{end}\isamarkupfalse%
\isanewline
\isadelimtheory
\endisadelimtheory
\isatagtheory
\isacommand{end}\isamarkupfalse%
\endisatagtheory
{\isafoldtheory}%
\isadelimtheory
\endisadelimtheory
\end{isabellebody}%


\section{Mapping to molecules}
{
\scriptsize
\begin{verbatim}
(* File autogenerated by nuskell. 
- Translation Scheme: nuskell/schemes/literature/chen2013_2D_JF.ts
- Input CRN: 
L1 + U -> L2 + U
L2 + U -> L3 + U
L3 + U -> Y + U
L1 + H -> L1 + H
L2 + H -> L1 + H
L3 + H -> L1 + H
Y + H -> L1 + H
T1 + Y -> T2 + Y
T2 + Y -> T3 + Y
T3 + Y -> T4 + Y
T4 + Y -> D + Y
T1 + R -> T1 + R
T2 + R -> T1 + R
T3 + R -> T1 + R
T4 + R -> T1 + R
D + R -> T1 + R
A + D -> B + D
B + D -> C + D
C + D -> A + D
*)

def Fuel = 20
def Signal = 5

( Signal * < t0^  d1 > (* A *)
| Signal * < t2^  d3 > (* B *)
| Signal * < t4^  d5 > (* C *)
| Signal * < t6^  d7 > (* D *)
| Signal * < t8^  d9 > (* H *)
| Signal * < t10^  d11 > (* L1 *)
| Signal * < t12^  d13 > (* L2 *)
| Signal * < t14^  d15 > (* L3 *)
| Signal * < t16^  d17 > (* R *)
| Signal * < t18^  d19 > (* T1 *)
| Signal * < t20^  d21 > (* T2 *)
| Signal * < t22^  d23 > (* T3 *)
| Signal * < t24^  d25 > (* T4 *)
| Signal * < t26^  d27 > (* U *)
| Signal * < t28^  d29 > (* Y *)
| constant Fuel * < d57  t8^ > (* f0 *)
| constant Fuel * < d9  t10^ > (* f1 *)
| constant Fuel * < t58^  d59 > (* f10 *)
| constant Fuel * < t46^  d47 > (* f100 *)
| constant Fuel * [ d49 ]:[ t8^  d9 ]:[ t10^  d11 ]:[ t46^  d47 ]:{ t48^* } (* f101 *)
| constant Fuel * < d53  t8^ > (* f102 *)
| constant Fuel * < d11  t50^ > (* f103 *)
| constant Fuel * { t14^* }[ d15  t8^ ]:[ d9  t50^ ]:[ d51  t52^ ] (* f104 *)
| constant Fuel * < t50^  d51 > (* f105 *)
| constant Fuel * [ d53 ]:[ t8^  d9 ]:[ t10^  d11 ]:[ t50^  d51 ]:{ t52^* } (* f106 *)
| constant Fuel * [ d61 ]:[ t28^  d29 ]:[ t20^  d21 ]:[ t58^  d59 ]:{ t60^* } (* f11 *)
| constant Fuel * < d65  t28^ > (* f12 *)
| constant Fuel * < d29  t22^ > (* f13 *)
| constant Fuel * < d23  t62^ > (* f14 *)
| constant Fuel * { t20^* }[ d21  t28^ ]:[ d29  t62^ ]:[ d63  t64^ ] (* f15 *)
| constant Fuel * < t62^  d63 > (* f16 *)
| constant Fuel * [ d65 ]:[ t28^  d29 ]:[ t22^  d23 ]:[ t62^  d63 ]:{ t64^* } (* f17 *)
| constant Fuel * < d69  t28^ > (* f18 *)
| constant Fuel * < d29  t24^ > (* f19 *)
| constant Fuel * < d11  t54^ > (* f2 *)
| constant Fuel * < d25  t66^ > (* f20 *)
| constant Fuel * { t22^* }[ d23  t28^ ]:[ d29  t66^ ]:[ d67  t68^ ] (* f21 *)
| constant Fuel * < t66^  d67 > (* f22 *)
| constant Fuel * [ d69 ]:[ t28^  d29 ]:[ t24^  d25 ]:[ t66^  d67 ]:{ t68^* } (* f23 *)
| constant Fuel * < d73  t28^ > (* f24 *)
| constant Fuel * < d29  t6^ > (* f25 *)
| constant Fuel * < d7  t70^ > (* f26 *)
| constant Fuel * { t24^* }[ d25  t28^ ]:[ d29  t70^ ]:[ d71  t72^ ] (* f27 *)
| constant Fuel * < t70^  d71 > (* f28 *)
| constant Fuel * [ d73 ]:[ t28^  d29 ]:[ t6^  d7 ]:[ t70^  d71 ]:{ t72^* } (* f29 *)
| constant Fuel * { t28^* }[ d29  t8^ ]:[ d9  t54^ ]:[ d55  t56^ ] (* f3 *)
| constant Fuel * < d77  t16^ > (* f30 *)
| constant Fuel * < d19  t74^ > (* f31 *)
| constant Fuel * { t18^* }[ d19  t16^ ]:[ d17  t74^ ]:[ d75  t76^ ] (* f32 *)
| constant Fuel * < t74^  d75 > (* f33 *)
| constant Fuel * [ d77 ]:[ t16^  d17 ]:[ t18^  d19 ]:[ t74^  d75 ]:{ t76^* } (* f34 *)
| constant Fuel * < d81  t16^ > (* f35 *)
| constant Fuel * < d19  t78^ > (* f36 *)
| constant Fuel * { t20^* }[ d21  t16^ ]:[ d17  t78^ ]:[ d79  t80^ ] (* f37 *)
| constant Fuel * < t78^  d79 > (* f38 *)
| constant Fuel * [ d81 ]:[ t16^  d17 ]:[ t18^  d19 ]:[ t78^  d79 ]:{ t80^* } (* f39 *)
| constant Fuel * < t54^  d55 > (* f4 *)
| constant Fuel * < d85  t16^ > (* f40 *)
| constant Fuel * < d19  t82^ > (* f41 *)
| constant Fuel * { t22^* }[ d23  t16^ ]:[ d17  t82^ ]:[ d83  t84^ ] (* f42 *)
| constant Fuel * < t82^  d83 > (* f43 *)
| constant Fuel * [ d85 ]:[ t16^  d17 ]:[ t18^  d19 ]:[ t82^  d83 ]:{ t84^* } (* f44 *)
| constant Fuel * < d89  t16^ > (* f45 *)
| constant Fuel * < d19  t86^ > (* f46 *)
| constant Fuel * { t24^* }[ d25  t16^ ]:[ d17  t86^ ]:[ d87  t88^ ] (* f47 *)
| constant Fuel * < t86^  d87 > (* f48 *)
| constant Fuel * [ d89 ]:[ t16^  d17 ]:[ t18^  d19 ]:[ t86^  d87 ]:{ t88^* } (* f49 *)
| constant Fuel * [ d57 ]:[ t8^  d9 ]:[ t10^  d11 ]:[ t54^  d55 ]:{ t56^* } (* f5 *)
| constant Fuel * < d93  t16^ > (* f50 *)
| constant Fuel * < d17  t18^ > (* f51 *)
| constant Fuel * < d19  t90^ > (* f52 *)
| constant Fuel * < d33  t26^ > (* f53 *)
| constant Fuel * { t6^* }[ d7  t16^ ]:[ d17  t90^ ]:[ d91  t92^ ] (* f54 *)
| constant Fuel * < t90^  d91 > (* f55 *)
| constant Fuel * [ d93 ]:[ t16^  d17 ]:[ t18^  d19 ]:[ t90^  d91 ]:{ t92^* } (* f56 *)
| constant Fuel * < d27  t12^ > (* f57 *)
| constant Fuel * < d97  t6^ > (* f58 *)
| constant Fuel * < d7  t2^ > (* f59 *)
| constant Fuel * < d61  t28^ > (* f6 *)
| constant Fuel * < d3  t94^ > (* f60 *)
| constant Fuel * { t0^* }[ d1  t6^ ]:[ d7  t94^ ]:[ d95  t96^ ] (* f61 *)
| constant Fuel * < t94^  d95 > (* f62 *)
| constant Fuel * [ d97 ]:[ t6^  d7 ]:[ t2^  d3 ]:[ t94^  d95 ]:{ t96^* } (* f63 *)
| constant Fuel * < d13  t30^ > (* f64 *)
| constant Fuel * < d101  t6^ > (* f65 *)
| constant Fuel * < d7  t4^ > (* f66 *)
| constant Fuel * < d5  t98^ > (* f67 *)
| constant Fuel * { t2^* }[ d3  t6^ ]:[ d7  t98^ ]:[ d99  t100^ ] (* f68 *)
| constant Fuel * < t98^  d99 > (* f69 *)
| constant Fuel * < d29  t20^ > (* f7 *)
| constant Fuel * [ d101 ]:[ t6^  d7 ]:[ t4^  d5 ]:[ t98^  d99 ]:{ t100^* } (* f70 *)
| constant Fuel * { t10^* }[ d11  t26^ ]:[ d27  t30^ ]:[ d31  t32^ ] (* f71 *)
| constant Fuel * < d105  t6^ > (* f72 *)
| constant Fuel * < d7  t0^ > (* f73 *)
| constant Fuel * < d1  t102^ > (* f74 *)
| constant Fuel * { t4^* }[ d5  t6^ ]:[ d7  t102^ ]:[ d103  t104^ ] (* f75 *)
| constant Fuel * < t30^  d31 > (* f76 *)
| constant Fuel * < t102^  d103 > (* f77 *)
| constant Fuel * [ d105 ]:[ t6^  d7 ]:[ t0^  d1 ]:[ t102^  d103 ]:{ t104^* } (* f78 *)
| constant Fuel * [ d33 ]:[ t26^  d27 ]:[ t12^  d13 ]:[ t30^  d31 ]:{ t32^* } (* f79 *)
| constant Fuel * < d21  t58^ > (* f8 *)
| constant Fuel * < d37  t26^ > (* f80 *)
| constant Fuel * < d27  t14^ > (* f81 *)
| constant Fuel * < d15  t34^ > (* f82 *)
| constant Fuel * { t12^* }[ d13  t26^ ]:[ d27  t34^ ]:[ d35  t36^ ] (* f83 *)
| constant Fuel * < t34^  d35 > (* f84 *)
| constant Fuel * [ d37 ]:[ t26^  d27 ]:[ t14^  d15 ]:[ t34^  d35 ]:{ t36^* } (* f85 *)
| constant Fuel * < d41  t26^ > (* f86 *)
| constant Fuel * < d27  t28^ > (* f87 *)
| constant Fuel * < d29  t38^ > (* f88 *)
| constant Fuel * { t14^* }[ d15  t26^ ]:[ d27  t38^ ]:[ d39  t40^ ] (* f89 *)
| constant Fuel * { t18^* }[ d19  t28^ ]:[ d29  t58^ ]:[ d59  t60^ ] (* f9 *)
| constant Fuel * < t38^  d39 > (* f90 *)
| constant Fuel * [ d41 ]:[ t26^  d27 ]:[ t28^  d29 ]:[ t38^  d39 ]:{ t40^* } (* f91 *)
| constant Fuel * < d45  t8^ > (* f92 *)
| constant Fuel * < d11  t42^ > (* f93 *)
| constant Fuel * { t10^* }[ d11  t8^ ]:[ d9  t42^ ]:[ d43  t44^ ] (* f94 *)
| constant Fuel * < t42^  d43 > (* f95 *)
| constant Fuel * [ d45 ]:[ t8^  d9 ]:[ t10^  d11 ]:[ t42^  d43 ]:{ t44^* } (* f96 *)
| constant Fuel * < d49  t8^ > (* f97 *)
| constant Fuel * < d11  t46^ > (* f98 *)
| constant Fuel * { t12^* }[ d13  t8^ ]:[ d9  t46^ ]:[ d47  t48^ ] (* f99 *)
)
\end{verbatim}

\begin{verbatim}
(* File autogenerated by nuskell. 
- Translation Scheme: nuskell/schemes/literature/chen2013_2D_JF.ts
- Input CRN: 
L1 + U -> L2 + U
L2 + U -> L3 + U
L3 + U -> Y + U
L1 + H -> L1 + H
L2 + H -> L1 + H
L3 + H -> L1 + H
Y + H -> L1 + H
T1 + Y -> T2 + Y
T2 + Y -> T3 + Y
T3 + Y -> T4 + Y
T4 + Y -> D + Y
T1 + R -> T1 + R
T2 + R -> T1 + R
T3 + R -> T1 + R
T4 + R -> T1 + R
D + R -> T1 + R
A + B -> B + B + H
B + C -> C + C
C + A -> A + A
A + D -> B + D
B + D -> C + D
C + D -> A + D
H -> W
*)

def Fuel = 20
def Signal = 5

( Signal * < t0^  d1 > (* A *)
| Signal * < t2^  d3 > (* B *)
| Signal * < t4^  d5 > (* C *)
| Signal * < t6^  d7 > (* D *)
| Signal * < t8^  d9 > (* H *)
| Signal * < t10^  d11 > (* L1 *)
| Signal * < t12^  d13 > (* L2 *)
| Signal * < t14^  d15 > (* L3 *)
| Signal * < t16^  d17 > (* R *)
| Signal * < t18^  d19 > (* T1 *)
| Signal * < t20^  d21 > (* T2 *)
| Signal * < t22^  d23 > (* T3 *)
| Signal * < t24^  d25 > (* T4 *)
| Signal * < t26^  d27 > (* U *)
| Signal * < t28^  d29 > (* W *)
| Signal * < t30^  d31 > (* Y *)
| constant Fuel * < d59  t8^ > (* f0 *)
| constant Fuel * < d9  t10^ > (* f1 *)
| constant Fuel * < t60^  d61 > (* f10 *)
| constant Fuel * < d29  t120^ > (* f100 *)
| constant Fuel * { t8^* }[ d9  t120^ ]:[ d121  t122^ ] (* f101 *)
| constant Fuel * < t120^  d121 > (* f102 *)
| constant Fuel * [ d123 ]:[ t28^  d29 ]:[ t120^  d121 ]:{ t122^* } (* f103 *)
| constant Fuel * < d39  t26^ > (* f104 *)
| constant Fuel * < d27  t14^ > (* f105 *)
| constant Fuel * < d15  t36^ > (* f106 *)
| constant Fuel * { t12^* }[ d13  t26^ ]:[ d27  t36^ ]:[ d37  t38^ ] (* f107 *)
| constant Fuel * < t36^  d37 > (* f108 *)
| constant Fuel * [ d39 ]:[ t26^  d27 ]:[ t14^  d15 ]:[ t36^  d37 ]:{ t38^* } (* f109 *)
| constant Fuel * [ d63 ]:[ t30^  d31 ]:[ t20^  d21 ]:[ t60^  d61 ]:{ t62^* } (* f11 *)
| constant Fuel * < d43  t26^ > (* f110 *)
| constant Fuel * < d27  t30^ > (* f111 *)
| constant Fuel * < d31  t40^ > (* f112 *)
| constant Fuel * { t14^* }[ d15  t26^ ]:[ d27  t40^ ]:[ d41  t42^ ] (* f113 *)
| constant Fuel * < t40^  d41 > (* f114 *)
| constant Fuel * [ d43 ]:[ t26^  d27 ]:[ t30^  d31 ]:[ t40^  d41 ]:{ t42^* } (* f115 *)
| constant Fuel * < d47  t8^ > (* f116 *)
| constant Fuel * < d11  t44^ > (* f117 *)
| constant Fuel * { t10^* }[ d11  t8^ ]:[ d9  t44^ ]:[ d45  t46^ ] (* f118 *)
| constant Fuel * < t44^  d45 > (* f119 *)
| constant Fuel * < d67  t30^ > (* f12 *)
| constant Fuel * [ d47 ]:[ t8^  d9 ]:[ t10^  d11 ]:[ t44^  d45 ]:{ t46^* } (* f120 *)
| constant Fuel * < d51  t8^ > (* f121 *)
| constant Fuel * < d11  t48^ > (* f122 *)
| constant Fuel * { t12^* }[ d13  t8^ ]:[ d9  t48^ ]:[ d49  t50^ ] (* f123 *)
| constant Fuel * < t48^  d49 > (* f124 *)
| constant Fuel * [ d51 ]:[ t8^  d9 ]:[ t10^  d11 ]:[ t48^  d49 ]:{ t50^* } (* f125 *)
| constant Fuel * < d55  t8^ > (* f126 *)
| constant Fuel * < d11  t52^ > (* f127 *)
| constant Fuel * { t14^* }[ d15  t8^ ]:[ d9  t52^ ]:[ d53  t54^ ] (* f128 *)
| constant Fuel * < t52^  d53 > (* f129 *)
| constant Fuel * < d31  t22^ > (* f13 *)
| constant Fuel * [ d55 ]:[ t8^  d9 ]:[ t10^  d11 ]:[ t52^  d53 ]:{ t54^* } (* f130 *)
| constant Fuel * < d23  t64^ > (* f14 *)
| constant Fuel * { t20^* }[ d21  t30^ ]:[ d31  t64^ ]:[ d65  t66^ ] (* f15 *)
| constant Fuel * < t64^  d65 > (* f16 *)
| constant Fuel * [ d67 ]:[ t30^  d31 ]:[ t22^  d23 ]:[ t64^  d65 ]:{ t66^* } (* f17 *)
| constant Fuel * < d71  t30^ > (* f18 *)
| constant Fuel * < d31  t24^ > (* f19 *)
| constant Fuel * < d11  t56^ > (* f2 *)
| constant Fuel * < d25  t68^ > (* f20 *)
| constant Fuel * { t22^* }[ d23  t30^ ]:[ d31  t68^ ]:[ d69  t70^ ] (* f21 *)
| constant Fuel * < t68^  d69 > (* f22 *)
| constant Fuel * [ d71 ]:[ t30^  d31 ]:[ t24^  d25 ]:[ t68^  d69 ]:{ t70^* } (* f23 *)
| constant Fuel * < d75  t30^ > (* f24 *)
| constant Fuel * < d31  t6^ > (* f25 *)
| constant Fuel * < d7  t72^ > (* f26 *)
| constant Fuel * { t24^* }[ d25  t30^ ]:[ d31  t72^ ]:[ d73  t74^ ] (* f27 *)
| constant Fuel * < t72^  d73 > (* f28 *)
| constant Fuel * [ d75 ]:[ t30^  d31 ]:[ t6^  d7 ]:[ t72^  d73 ]:{ t74^* } (* f29 *)
| constant Fuel * { t30^* }[ d31  t8^ ]:[ d9  t56^ ]:[ d57  t58^ ] (* f3 *)
| constant Fuel * < d79  t16^ > (* f30 *)
| constant Fuel * < d19  t76^ > (* f31 *)
| constant Fuel * { t18^* }[ d19  t16^ ]:[ d17  t76^ ]:[ d77  t78^ ] (* f32 *)
| constant Fuel * < t76^  d77 > (* f33 *)
| constant Fuel * [ d79 ]:[ t16^  d17 ]:[ t18^  d19 ]:[ t76^  d77 ]:{ t78^* } (* f34 *)
| constant Fuel * < d83  t16^ > (* f35 *)
| constant Fuel * < d19  t80^ > (* f36 *)
| constant Fuel * { t20^* }[ d21  t16^ ]:[ d17  t80^ ]:[ d81  t82^ ] (* f37 *)
| constant Fuel * < t80^  d81 > (* f38 *)
| constant Fuel * [ d83 ]:[ t16^  d17 ]:[ t18^  d19 ]:[ t80^  d81 ]:{ t82^* } (* f39 *)
| constant Fuel * < t56^  d57 > (* f4 *)
| constant Fuel * < d87  t16^ > (* f40 *)
| constant Fuel * < d19  t84^ > (* f41 *)
| constant Fuel * { t22^* }[ d23  t16^ ]:[ d17  t84^ ]:[ d85  t86^ ] (* f42 *)
| constant Fuel * < t84^  d85 > (* f43 *)
| constant Fuel * [ d87 ]:[ t16^  d17 ]:[ t18^  d19 ]:[ t84^  d85 ]:{ t86^* } (* f44 *)
| constant Fuel * < d91  t16^ > (* f45 *)
| constant Fuel * < d19  t88^ > (* f46 *)
| constant Fuel * { t24^* }[ d25  t16^ ]:[ d17  t88^ ]:[ d89  t90^ ] (* f47 *)
| constant Fuel * < t88^  d89 > (* f48 *)
| constant Fuel * [ d91 ]:[ t16^  d17 ]:[ t18^  d19 ]:[ t88^  d89 ]:{ t90^* } (* f49 *)
| constant Fuel * [ d59 ]:[ t8^  d9 ]:[ t10^  d11 ]:[ t56^  d57 ]:{ t58^* } (* f5 *)
| constant Fuel * < d95  t16^ > (* f50 *)
| constant Fuel * < d17  t18^ > (* f51 *)
| constant Fuel * < d19  t92^ > (* f52 *)
| constant Fuel * { t6^* }[ d7  t16^ ]:[ d17  t92^ ]:[ d93  t94^ ] (* f53 *)
| constant Fuel * < t92^  d93 > (* f54 *)
| constant Fuel * [ d95 ]:[ t16^  d17 ]:[ t18^  d19 ]:[ t92^  d93 ]:{ t94^* } (* f55 *)
| constant Fuel * < d35  t26^ > (* f56 *)
| constant Fuel * < d99  t8^ > (* f57 *)
| constant Fuel * < d9  t2^ > (* f58 *)
| constant Fuel * < d3  t2^ > (* f59 *)
| constant Fuel * < d63  t30^ > (* f6 *)
| constant Fuel * < d3  t96^ > (* f60 *)
| constant Fuel * { t0^* }[ d1  t2^ ]:[ d3  t96^ ]:[ d97  t98^ ] (* f61 *)
| constant Fuel * < t96^  d97 > (* f62 *)
| constant Fuel * [ d99 ]:[ t8^  d9 ]:[ t2^  d3 ]:[ t2^  d3 ]:[ t96^  d97 ]:{ t98^* } (* f63 *)
| constant Fuel * < d27  t12^ > (* f64 *)
| constant Fuel * < d103  t4^ > (* f65 *)
| constant Fuel * < d5  t4^ > (* f66 *)
| constant Fuel * < d5  t100^ > (* f67 *)
| constant Fuel * < d13  t32^ > (* f68 *)
| constant Fuel * { t2^* }[ d3  t4^ ]:[ d5  t100^ ]:[ d101  t102^ ] (* f69 *)
| constant Fuel * < d31  t20^ > (* f7 *)
| constant Fuel * < t100^  d101 > (* f70 *)
| constant Fuel * [ d103 ]:[ t4^  d5 ]:[ t4^  d5 ]:[ t100^  d101 ]:{ t102^* } (* f71 *)
| constant Fuel * { t10^* }[ d11  t26^ ]:[ d27  t32^ ]:[ d33  t34^ ] (* f72 *)
| constant Fuel * < d107  t0^ > (* f73 *)
| constant Fuel * < d1  t0^ > (* f74 *)
| constant Fuel * < d1  t104^ > (* f75 *)
| constant Fuel * { t4^* }[ d5  t0^ ]:[ d1  t104^ ]:[ d105  t106^ ] (* f76 *)
| constant Fuel * < t104^  d105 > (* f77 *)
| constant Fuel * [ d107 ]:[ t0^  d1 ]:[ t0^  d1 ]:[ t104^  d105 ]:{ t106^* } (* f78 *)
| constant Fuel * < t32^  d33 > (* f79 *)
| constant Fuel * < d21  t60^ > (* f8 *)
| constant Fuel * < d111  t6^ > (* f80 *)
| constant Fuel * < d7  t2^ > (* f81 *)
| constant Fuel * < d3  t108^ > (* f82 *)
| constant Fuel * { t0^* }[ d1  t6^ ]:[ d7  t108^ ]:[ d109  t110^ ] (* f83 *)
| constant Fuel * < t108^  d109 > (* f84 *)
| constant Fuel * [ d111 ]:[ t6^  d7 ]:[ t2^  d3 ]:[ t108^  d109 ]:{ t110^* } (* f85 *)
| constant Fuel * [ d35 ]:[ t26^  d27 ]:[ t12^  d13 ]:[ t32^  d33 ]:{ t34^* } (* f86 *)
| constant Fuel * < d115  t6^ > (* f87 *)
| constant Fuel * < d7  t4^ > (* f88 *)
| constant Fuel * < d5  t112^ > (* f89 *)
| constant Fuel * { t18^* }[ d19  t30^ ]:[ d31  t60^ ]:[ d61  t62^ ] (* f9 *)
| constant Fuel * { t2^* }[ d3  t6^ ]:[ d7  t112^ ]:[ d113  t114^ ] (* f90 *)
| constant Fuel * < t112^  d113 > (* f91 *)
| constant Fuel * [ d115 ]:[ t6^  d7 ]:[ t4^  d5 ]:[ t112^  d113 ]:{ t114^* } (* f92 *)
| constant Fuel * < d119  t6^ > (* f93 *)
| constant Fuel * < d7  t0^ > (* f94 *)
| constant Fuel * < d1  t116^ > (* f95 *)
| constant Fuel * { t4^* }[ d5  t6^ ]:[ d7  t116^ ]:[ d117  t118^ ] (* f96 *)
| constant Fuel * < t116^  d117 > (* f97 *)
| constant Fuel * [ d119 ]:[ t6^  d7 ]:[ t0^  d1 ]:[ t116^  d117 ]:{ t118^* } (* f98 *)
| constant Fuel * < d123  t28^ > (* f99 *)
)
\end{verbatim}
}

\end{document}